\definecolor{mygreen}{rgb}{0,0.6,0}
\definecolor{mygray}{rgb}{0.1,0.1,0.1}
\definecolor{mymauve}{rgb}{0.58,0,0.82}
\newtheorem{prop}[theorem]{Proposition}
\newtheorem{construction}[theorem]{Construction}
\newcommand{\twopartdef}[4]
{
	\left\{
		\begin{array}{ll}
			#1 & \mbox{if } #2 \\
			#3 & \mbox{if } #4
		\end{array}
	\right.
}
\newcommand{\algo}[1]{\mbox{\tt #1}}
\newcommand{\getsr}{\gets_{\mbox{\tiny R}}}
\newcommand{\Adv}{\mbox{\bf Adv}}
\tikzset{XOR/.style={draw, color = red, fill = white, circle, append after command={
        [shorten >=\pgflinewidth, shorten <=\pgflinewidth, thick]
        ($(\tikzlastnode.north)+(0,-0.5ex)$) edge[thick] ($(\tikzlastnode.south)+(0,0.5ex)$)
        ($(\tikzlastnode.east)+(-0.5ex,0)$) edge[thick] ($(\tikzlastnode.west)+(0.5ex,0)$)
        }
    }}
\tikzset{compres/.style={
  draw, trapezium, trapezium angle=285, draw, inner xsep=13pt,inner ysep=6pt, outer sep=0pt,
  minimum height=1.81mm
}}
\tikzset{large abr/.style={draw,isosceles triangle,minimum width=27mm, minimum height=36mm,shape border rotate=-90}}
\tikzset{very large abr/.style={draw,isosceles triangle,minimum width=33mm, minimum height=43mm,shape border rotate=-90}}
\newcommand{\setfont}[1]{{\mathcal{#1}}}
\newcommand{\algfont}[1]{{\mathsf{#1}}}
\renewcommand{\hash}{\ensuremath{H}}
\newcommand{\cmt}{\textsf{ABR}}
\newcommand{\msgspace}{\setfont{M}}
\newcommand{\hashspace}{\setfont{Y}}
\newcommand{\algA}{\algfont{A}}
\renewcommand{\th}{^{\mbox{th}}}
\newcommand{\notionfont}[1]{\mathrm{#1}}
\newcommand{\Coll}{\ensuremath{\notionfont{Coll}}}
\newcommand{\Indiff}{\ensuremath{\notionfont{Indiff}}}
\renewcommand{\Adv}[2]{\mathbf{Adv}^{#1}_{#2}}
\def\emptystring{\varepsilon}
\renewcommand{\Dom}{\mathsf{Dom}}
\renewcommand{\Rng}{\mathsf{Rng}}
\newcommand{\N}{{{\mathbb N}}}
\def\getsr{\stackrel{{\scriptscriptstyle\$}}{\leftarrow}}
\newcommand{\Prob}[1]{{\Pr\left[\,{#1}\,\right]}}
\newcommand{\CondProb}[2]{{\Pr}\left[\: #1\::\:#2\:\right]}
\newcommand{\Expect}[1]{{{\mathbf{E}}\left[\,{#1}\,\right]}}
\title{Compactness of Hashing Modes and Efficiency beyond Merkle Tree}
\author{Elena Andreeva \inst{1}
  \and Rishiraj Bhattacharyya\inst{2}
  \and Arnab Roy\inst{3}}
\institute{Technical University of Vienna, Austria\and
  NISER, HBNI, India\and
  University of Klagenfurt, Austria\\
\texttt{elena.andreeva@tuwien.ac.at, rishirajbhattacharyya@protonmail.com, arnab.roy@aau.at}}
\begin{document}
\maketitle
\begin{abstract}

We revisit the classical problem of designing optimally efficient cryptographically secure hash functions. Hash functions are traditionally designed via applying modes of operation on primitives with smaller domains. The results of Shrimpton and Stam (ICALP 2008), Rogaway and Steinberger (CRYPTO 2008), and Mennink and Preneel (CRYPTO 2012) show how to achieve optimally efficient designs of $2n$-to-$n$-bit compression functions from non-compressing primitives with asymptotically optimal $2^{n/2-\epsilon}$-query collision resistance. Designing optimally efficient and secure hash functions for larger domains ($> 2n$ bits) is still an open problem.  

To enable efficiency analysis and comparison across hash functions built from primitives of different domain sizes, in this work we propose the new \textit{compactness} efficiency notion. It allows us to focus on asymptotically optimally collision resistant hash function and normalize their parameters based on Stam's bound from CRYPTO 2008 to obtain maximal efficiency.

We then present two tree-based modes of operation as a design principle for compact, large domain, fixed-input-length hash functions.
\begin{enumerate}
  \item Our first construction is an \underline{A}ugmented \underline{B}inary T\underline{r}ee (\cmt) mode. The design is a $(2^{\ell}+2^{\ell-1} -1)n$-to-$n$-bit hash function making a total of $(2^{\ell}-1)$ calls to $2n$-to-$n$-bit compression functions for any $\ell\geq 2$. Our construction is optimally compact with asymptotically (optimal) $2^{n/2-\epsilon}$-query collision resistance in the ideal model. For a tree of height $\ell$, in comparison with Merkle tree, the $\cmt$ mode processes additional $(2^{\ell-1}-1)$ data blocks making the same number of internal compression function calls.
  \item   With our second design we focus our attention on the indifferentiability security notion. While the $\cmt$ mode achieves collision resistance, it fails to achieve indifferentiability from a random oracle within $2^{n/3}$ queries. $\cmt^{+}$ compresses only $1$ less data block than $\cmt$ with the same number of compression calls and achieves in addition indifferentiability  up to $2^{n/2-\epsilon}$ queries.
  \end{enumerate}
Both of our designs are closely related to the ubiquitous Merkle Trees and have the potential for real-world applicability where the speed of hashing is of primary interest.

  \end{abstract}
%Section 1

% \section{Introduction}
% \label{sec:introduction}
\section{Introduction}
%\textcolor{red}{A/Auxiliary and Boosted paremeter encompasing Rate}
Hash functions are fundamental cryptographic building blocks. The art of designing a secure and efficient hash function is a classical problem in
cryptography. Traditionally, one designs a hash function in two steps. In the first, one constructs a \textit{compression function} that maps fixed
length inputs to fixed and usually smaller length outputs.  In the second step, a \textit{domain extending} algorithm is designed that allows longer messages to be mapped to a fixed-length output via a sequence of calls to the underlying compression functions.

Most commonly compression functions are designed based on block ciphers and
permutations~\cite{C:BlaRogShr02,EC:BlaCocShr05,AC:RisShr07,C:RogSte08,EPRINT:BDPA11,SAC:BDPV11}. For a long time block ciphers were the most popular
primitives to build a compression function and the classical constructions of MD5 and SHA1, SHA2 hash functions are prominent examples of that approach. In the light
of the SHA3 competition, the focus has shifted to permutation~\cite{EC:BDPV08} or fixed-key blockcipher-based \cite{DBLP:conf/isw/AndreevaMP10,SCN:AndMenPre10}
compression functions.  Classical examples of domain extending algorithms are the Merkle--Damg{\aa}rd~\cite{C:Merkle89a,C:Damgaard89b} (MD) domain extender and the Merkle tree~\cite{DBLP:conf/sp/Merkle80} which underpins numerous cryptographic applications. Most recently, the Sponge construction~\cite{EC:BDPA13} that is used in SHA-3 has come forward as a domain extender~\cite{SAC:BDPV11, CHES:BKLTVV11, DBLP:journals/ijisec/AndreevaMP12,DBLP:journals/iacr/RivestS16} method for designs which directly call a permutation.

\noindent\underline{\textsc{Efficiency of Hash Design: Lower Bounds.}} Like in all cryptographic primitives, the design of a hash function is a trade-off between efficiency and security. Black, Cochran, and Shrimpton~\cite{EC:BlaCocShr05} were the first to formally analyze the security-efficiency trade-off of compression functions, showing that a $2n$-to-$n$-bit compression function making a single call to a fixed-key $n$-bit block cipher can not achieve collision resistance. Rogaway and Steinberger~\cite{EC:RogSte08} generalized the result to show that any $mn$-to-$ln$ bit compression function making $r$ calls to $n$-bit permutations is susceptible to a collision attack in $(2^n)^{1-\frac{m-l/2}{r}}$ queries, provided the constructed compression function satisfies a ``collision-uniformity'' condition. Stam~\cite{C:Stam08} refined this result to general hash function constructions and conjectured: if any $m+s$-to-$s$-bit hash function is designed using $r$ many $n+c$-to-$n$-bit compression functions, a collision on the hash function can be found in $2^{\frac{nr+cr-m}{r+1}}$ queries. This bound is known as the Stam's bound and it was later proven in two works by Steinberger~\cite{DBLP:conf/eurocrypt/Steinberger10} and by Steinberger, Sun and Yang~\cite{C:SteSunYan12}.    

\noindent\underline{\textsc{Efficiency of Hash Design: Upper Bounds.}} The upper bound results matching Stam's bound focused on $2n$-to-$n$-bit constructions from $n$-bit non-compressing primitives. In~\cite{ICALP:ShrSta08}, Shrimpton and Stam showed a (Shrimpton-Stam) construction based on three $n$-to-$n$-bit functions achieving asymptotically birthday bound collision resistance in the random oracle model. Rogaway and Steinberger~\cite{C:RogSte08} showed hash constructions using three $n$-bit permutations matching the bound of~\cite{EC:RogSte08} and assuming the ``uniformity condition'' on the resulting hash construction.   In~\cite{C:MenPre12}, Mennink and Preneel generalized these results and identified four equivalence classes of $2n$-to-$n$-bit compression functions from $n$-bit permutations and XOR operations, achieving collision security of the birthday bound asymptotically in the random permutation model. 

In comparison, upper bound results for larger domain compressing functions have been
scarce. The only positive result we are aware of is by Mennink and Preneel~\cite{BartnBart}. In~\cite{BartnBart}, the authors considered generalizing the Shrimpton-Stam construction to get $m+n$-to-$n$-bit hash function from $n$-bit primitives for $m>n$, and showed  $n/3$-bit collision security in the random oracle model. For all practical purposes the following question remains open.\\
\noindent{\textit{If an $m+n$-to-$n$-bit hash function is designed using $r$ many $n+c$-to-$n$-bit compression functions, is there a construction with collision security matching Stam's bound when $m>n$?}}  \\
\noindent\underline{\textsc{Beyond Collision Resistance: Indifferentiability.}} {\em Collision resistance} is undoubtedly the most commonly mandated security property for a cryptographic hash function. Naturally, all the hash function design principles and respective efficiencies are primarily targeting to achieve collision resistance. More recently, for applications of hash functions as replacement of random oracles in higher-level cryptographic schemes or protocols, the notion of indifferentiability has also gained considerable traction. The strong notion of indifferentiability from a random oracle (RO)  by Maurer, Renner and Holenstein~\cite{TCC:MauRenHol04} has been adopted to prove the security of hash functions when the internal primitives (compression functions, permutations etc.) are assumed to be ideal (random oracle, random permutation, etc.). An important advantage of the indifferentiability from a random oracle notion is that it  implies multiple security notions (in fact, all the notions satisfied by a random oracle in a single stage game) simultaneously up to the proven indifferentiability bound. The question of designing  an optimally efficient hash function naturally gets extended also to the indifferentiability setting.\\ \textit{\textit{If an $m+n$-to-$n$-bit hash function is designed using $r$ many $n+c$-to-$n$-bit compression functions, is there a construction with indifferentiability security matching Stam's bound when $m>n$?}} Note that, a collision secure hash function matching Stam's bound may not imply the  indifferentiability notion up to the same bound.

\subsection{Our Results}
\label{sec:our-result}
\underline{\textsc{New measure of efficiency.}}
Comparing efficiency of hash functions built from primitives of different domain sizes is a tricky task. In addition to the message size and the number of calls to underlying primitives, one needs to take into account the domain and co-domain/range sizes of the underlying primitives. It is not obvious how to scale the notion of rate up to capture these additional parameters. \\
We approach the efficiency measure question  from Stam's bound perspective. We say an $m+s$-to-$s$-bit hash function construction designed using $r$ many $n+c$-to-$n$-bit compression functions is optimally efficient if Stam's bound is tight, that is one can prove that asymptotically at least $2^{\frac{nr+cr-m}{r+1}}$ queries are required to find a collision. Notice that the value in itself can be low (say $2^{{s/4}}$), but given the proof, we can argue that the parameters are \emph{optimal} for that security level.\\
Given that the collision-resistance requirement for a hash function is given by the birthday bound ($2^{s/2}$ queries), we can say that a hash function construction achieves optimal security-efficiency trade-off if $\frac{nr+cr-m}{r+1}=\frac{s}{2}$ and Stam's bound is asymptotically tight. Then one can focus on schemes which achieve the asymptotically optimal collision security, and normalize the efficiency of the construction. We hence propose the notion of \textit{compactness} as the ratio of the parameter $m$ and its optimal value ($\frac{2nr+2cr-sr-s}{2}$) as an efficiency measure of a hash function construction $C$. In Section~\ref{sec:defcompact} we formally define the notion and derive compactness of some popular modes.      

\begin{figure}[t!]
 	\centering
 	\begin{subfigure}[t]{0.4\textwidth}
          % \centering
          % \scalebox{0.7}{\input{figs/seq.tikz}}
          % \caption{A compact sequential mode (shown only with $8$ input messages)}\label{fig:csq}
 		\centering
 		\scalebox{0.7}{\begin{tikzpicture}

  \node[compres] (h1) at (1,1) {$f$};

  \node (m1) at (0.5,2) {$m_1$};
  \node (m2) at (1.5,2) {$m_2$};
  \draw[->] (m1) -- (h1.north west);
  \draw[->] (m2) -- (h1.north east);

  \node[compres] (h2) at (4,1) {$f$};

  \node (m3) at (3.5,2) {$m_3$};
  \node (m4) at (4.5,2) {$m_4$};
  \draw[->] (m3) -- (h2.north west);
  \draw[->] (m4) -- (h2.north east);

  \node[compres] (h3) at (2.5,-1) {$f$};
  \draw[->] (h1.south) -- (h3.north west);
  \draw[->] (h2.south) -- (h3.north east);

  \draw[->] (h3.south) -- (2.5,-2);

\end{tikzpicture}}
 		\caption{Merkle tree hashing $4$ input messages}
 		\label{fig:merkletree}
 	\end{subfigure}
 	\hspace{1.5cm}
 	\begin{subfigure}[t]{0.4\textwidth}
 		\centering
 		\scalebox{0.7}{\begin{tikzpicture}

  \node[compres] (h1) at (1,1) {$f_1$};

  \node (m1) at (0.5,2) {$m_1$};
  \node (m2) at (1.5,2) {$m_2$};
  \draw[->] (m1) -- (h1.north west);
  \draw[->] (m2) -- (h1.north east);

  \node[compres] (h2) at (4,1) {$f_2$};

  \node (m3) at (3.5,2) {$m_3$};
  \node (m4) at (4.5,2) {$m_4$};
  \draw[->] (m3) -- (h2.north west);
  \draw[->] (m4) -- (h2.north east);

  \node[compres] (h3) at (2.5,-1.5) {$f_0$};
  \node[XOR] at (1,-0.5) (x1) {};
  \node[XOR] at (4,-0.5) (x2) {};

  \draw[->] (h1.south) -- (x1);
  \draw[->] (h2.south) -- (x2);

  \draw[->] (x1) -- (h3.north west);
  \draw[->] (x2) -- (h3.north east);

  \node (m5) at (2.5,0) {$m_5$};
  \draw[->, very thick] (m5) -- (x1);
  \draw[->, very thick] (m5) -- (x2);

  \node[XOR] at (2.5,-2.5) (x3) {};

\node[fill=black,inner sep=1pt] (x) at ($ (h2.south)!.5!(x2) $) {};

  %\draw[->, red] (m5) .. controls (1,-1.5) .. (x3);

\draw[->, very thick] (x) .. controls (5,-1) .. (x3);

  %\draw[->, red] (m5) -- (h3.north);
  \draw[->] (h3.south) -- (x3);
  \draw[->] (x3) -- (2.5, -3);

\end{tikzpicture}}
 		\caption{$\cmt$ mode hashing $5$ input messages}
 		\label{fig:abr-compress}
 	\end{subfigure}
 	\caption{Merkle Tree and $\cmt$ mode for height $\ell=2$}
      \end{figure}
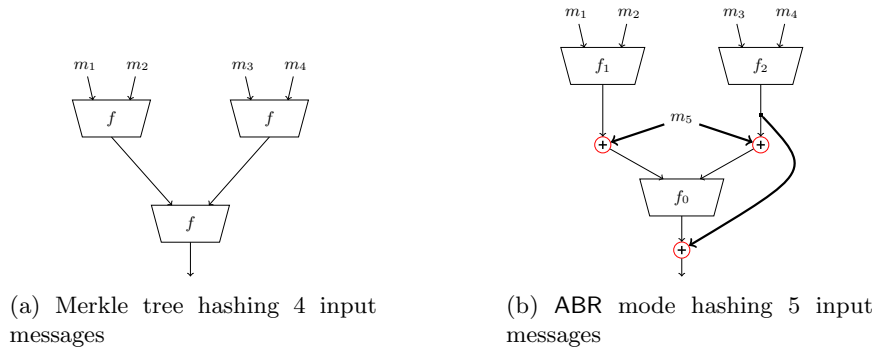

\underline{\textsc{Optimally Compact $\cmt$ Mode.}}
We present a new tree-based mode $\cmt$. $\cmt$ of height $\ell$ implements a $(2^{\ell}+2^{\ell-1}-1)n$-to-$n$-bit function making only $(2^{\ell}-1)$ calls to the underlying $2n$-to-$n$-bit compressing primitives. Assuming the underlying primitives to be independent random oracles, we show that the $\cmt$ mode is collision resistant up to the birthday bound asymptotically. The parameters of $\cmt$ mode achieve maximum compactness. In Section \ref{sec:coll-resist-gener} we formally present the $\cmt$ mode and prove its collision resistance. 

A natural comparison with Merkle tree is in order. We show that Merkle Tree can achieve only $2/3$ of the optimal compactness and thus our mode is significantly more efficient. For a tree of height $\ell$, in comparison to the Merkle tree, the $\cmt$ mode can process an additional $(2^{\ell-1}-1)$ message blocks with the same number of calls to the underlying compression functions.

% The $\cmt$ mode achieves optimal  and it is  With $\alpha=1$ compactness the $\cmt$ mode is clearly more compact than Merkle tree which alone achieves compactness $\alpha\approx 2/3$. That means that in comparison to Merkle tree, $\cmt$ mode can process an additional $(2^{\ell-1}-1)$ messages blocks with the same number of calls to the underlying compression functions. Our proof is completed in the random oracle (ideal) model for arbitrary but fixed tree heights and shows collision resistance up to the birthday bound asymptotically. 
%More on the bound.
%More on the verifOur result improves the efficiency of merkle tree and still maintain the indifferentiability property.ication.

\underline{\textsc{$\cmt$ does not satisfy Indifferentiability.}}
Our next target is to consider the notion of indifferentiability. Specifically, how does the $\cmt$ compression score in the indifferentiability setting? The primary objective of this question is twofold. If we can prove the $\cmt$ construction with height $\ell=2$ to be indifferentiable from a random oracle up to the birthday bound, then we could use the indifferentiability composition theorem and replace the leaf level compression function of $\cmt$ by $5n$-to-$n$-bit ideal compression function. Then by recursively applying the proof of collision resistency of $\cmt$ with height $\ell=2$, we could extend the collision resistance proof to arbitrary large levels. Secondly, the proof of indifferentiability implies simultaneously all the security notions satisfied by a random oracle in single stage games. Unfortunately, we show that the $\cmt$ mode with height $\ell=2$ does not preserve indifferentiability. We show an indifferentiability attack of order $2^{\frac{n}{3}}$ in~\cref{sec:effic-vs-indiff}. The attack can easily be generalized to $\cmt$ of arbitrary levels.

\noindent\underline{\textsc{Salvaging Indifferentiability.}}
Next, in Section~\ref{sec:full-indiff-price} we propose an almost optimally compact $\cmt^{+}$ mode design which salvages the indifferentiability security (up to birthday bound) of the original $\cmt$ mode.  In principle, our second construction $\cmt^{+}$ (see Fig.~\ref{figs/abrindif.tikz}) tree merges two left and right $\cmt$ mode (of possibly different heights) calls by an independent post-precessor. Using the H-coefficient technique, we prove the indifferentiability of the $\cmt^{+}$  construction up to the birthday bound.

Compared to $\cmt$ mode,  $\cmt^{+}$ compresses 1 less message block for the same number of calls. For large size messages, this gap is extremely small. In comparison to the Merkle Tree, the $\cmt^{+}$ mode, improves the efficiency significantly and still maintains the indifferentiability property.

\subsection{Impact of Our Result} Merkle trees were first published in 1980 by Ralph Merkle~\cite{DBLP:conf/sp/Merkle80} as a way to authenticate large public files. Nowadays, Merkle trees find ubiquitous applications in cryptography, from parallel hashing, integrity checks of large files, long-term storage, signature schemes~\cite{tlssign,INDOCRYPT:BGDDK06,ACNS:BDKOV07,CCS:BHKNRS19}, time-stamping~\cite{Haber91howto}, zero-knowledge proof based protocols~\cite{EC:GGPR13, USENIX:BCTV14}, to anonymous cryptocurrencies~\cite{zcash}, among many others. %There are also a plenty of constructions that use Merkle tree as random oracle. 
Despite their indisputable practical relevance, for 40 years we have seen little research go into the rigorous investigation of how to optimize their efficiency, and hence we still rely on design principles that may in fact have some room for efficiency optimizations. 

In view of the  wide spread use of Merkle trees, we consider one of the main advantage of our construction as being in:  {\em increased number of message inputs (compared to the classical Merkle tree) while maintaining the same tree height and computational cost (for both root computation and node authentication)}. Our trees then offer more efficient alternatives to Merkle trees in scenarios where the performance criteria is \textit{the number of messages hashed} for: 1. a fixed computational cost -- compression function calls to compute the root, or/and 2. fixed authentication cost -- compression function calls to authenticate a node. 

Regular hashing is naturally one of the first candidates for such an applications. Other potential use cases are hashing on parallel processors or multicore machines, such as authenticating software updates, image files or videos; integrity checks of large files systems, long term archiving~\cite{archv}, memory authentication, content distribution, torrent systems \cite{bittorrent}, etc. A recent application that can benefit from our $\cmt$ or $\cmt^{+}$ mode designs are (anonymous) cryptocurrency applications. We elaborate more on these in Section~\ref{sec:comp-and-appl}.

\section{Notation and Preliminaries}
\label{sec:notat-prel}

 Let $\N = \{0,1,\ldots\}$ be the set of natural numbers and $\{0,1\}^*$ be the set of all bit strings. If $k
\in \N$, then $\{0,1\}^k$ denotes the set of all $k$-bit strings.
The empty string is denoted by $\emptystring$. $[n]$ denotes the set $\{0,1,\cdots,n-1\}$. $f:[r]\times\Dom\to\Rng$ denotes a family of $r$ many functions from $\Dom$ to $\Rng$. {\bf We often use the shorthand $f$ to denote the family $\{f_0,\cdots,f_{r-1}\}$ when the function family is given as oracles}. 

\noindent If~$S$ is a set, then $x \getsr S$ denotes the uniformly random selection of an element from~$S$. We let $y \gets \algA(x)$ and $y \getsr \algA(x)$ be the assignment to~$y$ of the output of a deterministic and randomized algorithm $\algA$, respectively, when run on input $x$.

An \textit{adversary} $\algA$ is an algorithm possibly with access to oracles $\mathcal{O}_1, \ldots, \mathcal{O}_{\ell}$ denoted by $\algA^{\mathcal{O}_1, \ldots , \mathcal{O}_{\ell}}$. The adversaries considered in this paper are computationally unbounded. The complexities of these algorithms are measured solely on the number of queries they make. Adversarial queries and the corresponding responses are stored in a transcript $\tau$.\\  
\textbf{ Hash Functions and Domain Extensions.}
In this paper, we consider Fixed-Input-Length (FIL) hash functions. We denote these by the hash function  $\hash: \msgspace \to \hashspace$ where $\hashspace$ and $\msgspace$ are finite sets of bit strings. For a FIL $\hash$ the domain $\msgspace= \{0,1\}^{N}$ is a finite set of  $N$-bit strings.% , and for VIL $\hash$ the domain $\msgspace = \{0,1\}^{*}$ is the set of arbitrary length strings

 Note that, modelling the real-world functions such as SHA-2 and SHA-3, we consider the hash function to be unkeyed. Typically, a hash function is designed in two steps. First a compression function $f:\msgspace_{f}\to\hashspace$ with small domain is designed. Then one uses a domain extension algorithm $C$, which has a blackbox access to $f$ and implements the hash function $\hash$ for larger domain.
\begin{definition}
A domain extender $C$ with oracle access to a family of compression functions $f:[r]\times\msgspace_{f}\to\hashspace$ is an algorithm which implements the  function $\hash=C^f:\msgspace\to\hashspace$.
\end{definition}

\noindent\textbf{Collision Resistance.}
Our definitions of collision (\Coll) security is given for any general FIL hash function $\hash$ built upon the compression functions $f_i$ for $i \in [r]$ where $f_i$s are modeled as ideal random functions.
Let $\text{Func}(2n, n)$ denote the set of all functions mapping $2n$ bits to $n$ bits. Then, for a fixed adversary $\algA$ and for all $i\in[r]$ where $f_i\getsr \text{Func}(2n, n)$, 
we consider the following definition of collision resistance.
\begin{definition}
  \label{def:CR}
Let $\algA$ be an adversary against $\hash=C^f$. $\hash$ is said to be $(q,\varepsilon)$ collision resistant if for all algorithm $\algA$ making $q$ queries it holds that
\begin{equation*}
\Adv{\Coll}{\hash}(\algA) = \CondProb{M', M \getsr \algA^{f}(\emptystring)}{M \neq M' \textrm{ and } \hash(M)=\hash(M')}\leq \varepsilon.
\end{equation*}
% \begin{equation*}
% \Adv{\Coll}{\hash}(\algA) = \CondProb{M', M \getsr \algA^{f_1, \ldots, f_{\ell}}(\emptystring)}{M \neq M' \textrm{ and } \hash^{f_1, \ldots, f_{\ell}}(M)=\hash^{f_1, \ldots, f_{\ell}}(M')}\leq \varepsilon.
% \end{equation*}  
\end{definition}

\noindent\textbf{Indifferentiability.}

% \begin{definition}{\bf (Indifferentiability Advantage)}
% \label{def:advantange}
% Let $ F_i, G_i $ be probabilistic oracle algorithms. We define the \emph{advantage} of an adversary $\algA $ at distinguishing $ (F_1,F_2) $ from $ (G_1,G_2) $ as:
% \begin{align*}
%   \Adv{\Indiff}{(F_1,F_2),(G_1,G_2)}{\algA}\eqdef \left| \Pr[\algA^{F_1,F_2} = 1] - \Pr[\algA^{G_1,G_2} = 1] \right|.
% \end{align*}
% \end{definition}
In the game of indifferentiability, the distinguisher is aiming to distinguish between two worlds, the \emph{real} world and the {\em ideal} world. In the real world, the distinguisher has oracle access to $(C^{\mathcal{F}},\mathcal{F})$ where  $C^{\mathcal{F}}$ is a construction based on an ideal primitive $\mathcal{F}$. In the ideal world the distinguisher has oracle access to $(\mathcal{G},S^\mathcal{G})$ where $\mathcal{G}$ is an ideal functionality and $S$ is a simulator.

\begin{definition}[\bf Indifferentiability \cite{TCC:MauRenHol04}]
\label{def:indiff} 
A Turing machine $C$ with oracle access to an ideal primitive $\mathcal{F}$ is said to be $(t_A,t_S,q_S,q,\varepsilon)$ indifferentiable (Fig.~\ref{fig:indiff}) from an ideal primitive $\mathcal{G}$ if there exists a simulator $S$ with an oracle access to $\mathcal{G}$ having running time at most $t_S$, making at most $q_S$ many calls to  $\mathcal{G}$ \emph{per invocation}, such that for any adversary $\algA$, with running time $t_A$ making at most $q$ queries, it holds that
\begin{align*}
   \Adv{\Indiff}{(C^{\mathcal{F}},\mathcal{F}),(\mathcal{G},S^\mathcal{G})}{(\algA)} \eqdef \left| \Pr[\algA^{(C^{\mathcal{F}},\mathcal{F})} = 1] - \Pr[\algA^{(\mathcal{G},S^\mathcal{G})} = 1] \right| \leq \varepsilon
\end{align*}
$C^{\mathcal{F}}$ is computationally indifferentiable from
$\mathcal{G}$ if $t_A$ is bounded above by some polynomial in the security parameter $k$ and $\varepsilon$ is a negligible function of $k$.
\end{definition}
In this paper, we consider an information-theoretic adversary implying $t_A$ is unbounded. We derive the advantage in terms of the query complexity of the distinguisher. The composition theorem of indifferentiability \cite{TCC:MauRenHol04} states that if a construction $C^{\mathcal{F}}$ based on an ideal primitive $\mathcal{F}$ is indifferentiable from $\mathcal{G}$, then $C^{\mathcal{F}}$ can be used to instantiate $\mathcal{G}$ in any protocol with single-stage game. We note, however, the composition theorem does not extend to the multi-stage games, or when the adversary is resource-restricted. We refer the reader to \cite{EC:RisShaShr11} for details. 
We refer to the queries made to $C^{\mathcal{F}}/\mathcal{G}$ as construction queries and to the queries made to $\mathcal{F}/S$ as the primitive queries.

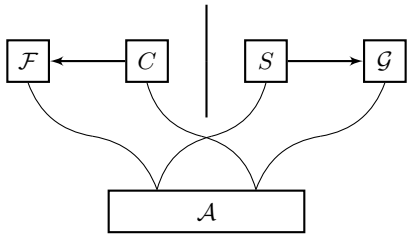
\begin{figure}[h]
\begin{center}
\begin{tikzpicture}[auto, node distance=1cm, >=latex']
\begin{scope}
\tikzstyle{Attacker} = [draw, rectangle, 
    minimum height=1.7em, minimum width=8em,thick]
\tikzstyle{Box} = [draw,  rectangle, 
    minimum height=1.7em, minimum width= 1.7em,thick]
\tikzstyle{to} = [->,thick]
\tikzstyle{line}= [-,thick]
\tikzstyle{dotto} = [->,dotted, thick]

\node [Box, name=F] {$\mathcal{F}$};
\node [Box, right=of F, name=C] {$C$}; 
\node [Box, right=of C, name=S] {$S$}; 
\node [Box, right=of S, name=G] {$\mathcal{G}$}; 
\draw [draw,to] (C) -- (F);
\draw [draw, to] (S) -- (G);

\coordinate (mid) at ($0.5*(C.east) + 0.5*(S.west)$);

\draw [draw, line] (mid) -- ++(0,0.75);
\draw [draw, line] (mid) -- ++(0,-0.75);

\node [Attacker, below of=mid, node distance=2cm, name=D] {$\mathcal{A}$};

\coordinate (dleftmid) at ($0.5*(D.north west) + 0.5*(D.north)$);
\coordinate (drightmid) at ($0.5*(D.north east) + 0.5*(D.north)$);

%\draw [draw, dotto] (dleftmid) -- (F.south);
%\draw [draw, dotto] (dleftmid) -- (S.south);
%\draw [draw, dotto] (drightmid) -- (C.south);
%\draw [draw, dotto] (drightmid) -- (G.south);
\draw (dleftmid) to[bend right, thick, dotted] ($0.5*(F.south)+0.5*(dleftmid)$);
\draw ($0.5*(F.south)+0.5*(dleftmid)$) to[->, bend left, thick, dotted] (F.south);
\draw (dleftmid) to[bend left, thick, dotted] ($0.5*(S.south)+0.5*(dleftmid)$);
\draw ($0.5*(S.south)+0.5*(dleftmid)$) to[->, bend right, thick, dotted] (S.south);

\draw (drightmid) to[bend right, thick, dotted] ($0.5*(C.south)+0.5*(drightmid)$);
\draw ($0.5*(C.south)+0.5*(drightmid)$) to[->, bend left, thick, dotted] (C.south);
\draw (drightmid) to[bend left, thick, dotted] ($0.5*(G.south)+0.5*(drightmid)$);
\draw ($0.5*(G.south)+0.5*(drightmid)$) to[->, bend right, thick, dotted] (G.south);

\end{scope}
\end{tikzpicture}
\end{center}
\caption{The indifferentiability notion} \vspace{-0.7cm}
\label{fig:indiff}
\end{figure}

\vspace{-5pt}
\subsubsection*{Coefficient-H Technique.}
\label{sec:h-coeff-techn}

We shall prove indifferentiability using Patarin's coefficient-H technique~\cite{SAC:Patarin08}. Fix any distinguisher $\ddv$ making $q$ queries. As the distinguisher is computationally unbounded, without loss of generality we can assume it to be deterministic \cite{INDOCRYPT:Nandi06,EC:CheSte14}. The interaction of $\ddv$ with its oracles is described by a transcript $\tau$. $\tau$ contains all the queries and the corresponding responses $\ddv$ makes during its execution. Let $\Theta$ denote the set of all possible transcripts. Let $X_{\mbox{\tt real}}$ and $X_{\mbox{\tt ideal}}$ denote the probability distribution of the transcript in the real and the ideal worlds, respectively.

\begin{lemma}{\cite{SAC:Patarin08}}
  Consider a fixed deterministic distinguisher $\ddv$. Let $\Theta$ can be partitioned into sets $\Theta_{good}$ and $\Theta_{bad}$. Suppose $\varepsilon\geq 0$ be such that for all $\tau\in\Theta_{good}$,
  \begin{align*}
    \Prob{X_{\mbox{\tt real}}=\tau}\geq (1-\varepsilon) \Prob{X_{\mbox{\tt ideal}}=\tau}
  \end{align*}
  Then
  $\Adv{\mbox {Indiff}}{(C^{\mathcal{F}},\mathcal{F}),(\mathcal{G},S^\mathcal{G})}\leq \varepsilon + \Prob{X_{\mbox{\tt ideal}}\in \Theta_{bad}}$
\end{lemma}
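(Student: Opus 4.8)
The plan is to reduce the distinguishing advantage to the statistical distance between the two transcript distributions, and then to exploit the good/bad partition together with the one-sided probability-ratio hypothesis.

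First I would use that $\ddv$ is deterministic, so its output bit is a fixed function of the transcript it sees; let $\Theta_1\subseteq\Theta$ be the set of transcripts on which $\ddv$ outputs $1$. Expressing the two acceptance probabilities as sums over $\Theta_1$, the signed advantage equals $\sum_{\tau\in\Theta_1}\big(\Prob{X_{\mbox{\tt real}}=\tau}-\Prob{X_{\mbox{\tt ideal}}=\tau}\big)$. Over all possible decision regions this quantity is maximised (and its negative minimised) by collecting exactly the transcripts on which one world is heavier, so in absolute value it never exceeds the statistical distance $\mathrm{SD}(X_{\mbox{\tt real}},X_{\mbox{\tt ideal}})$. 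This gives $\Adv{\Indiff}{(C^{\mathcal{F}},\mathcal{F}),(\mathcal{G},S^\mathcal{G})}\le \mathrm{SD}(X_{\mbox{\tt real}},X_{\mbox{\tt ideal}})$.

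Next I would record the asymmetric form of the statistical distance,
\[
  \mathrm{SD}(X_{\mbox{\tt real}},X_{\mbox{\tt ideal}})=\sum_{\tau\,:\,\Prob{X_{\mbox{\tt ideal}}=\tau}>\Prob{X_{\mbox{\tt real}}=\tau}}\big(\Prob{X_{\mbox{\tt ideal}}=\tau}-\Prob{X_{\mbox{\tt real}}=\tau}\big),
\]
which holds because the total positive and negative deviations coincide (both distributions have mass $1$). This one-directional expression is precisely what makes the hypothesis $\Prob{X_{\mbox{\tt real}}=\tau}\ge(1-\varepsilon)\Prob{X_{\mbox{\tt ideal}}=\tau}$ usable. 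I would then split the sum across $\Theta_{good}$ and $\Theta_{bad}$. On a good transcript the hypothesis yields $\Prob{X_{\mbox{\tt ideal}}=\tau}-\Prob{X_{\mbox{\tt real}}=\tau}\le\varepsilon\,\Prob{X_{\mbox{\tt ideal}}=\tau}$, so the good part contributes at most $\varepsilon\sum_\tau\Prob{X_{\mbox{\tt ideal}}=\tau}=\varepsilon$. On a bad transcript I would drop the non-negative real term and bound each summand by $\Prob{X_{\mbox{\tt ideal}}=\tau}$, so the bad part contributes at most $\Prob{X_{\mbox{\tt ideal}}\in\Theta_{bad}}$. Summing the two bounds gives the claimed inequality.

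The only genuinely delicate point is the first step: one must argue cleanly that a deterministic distinguisher's advantage cannot exceed the statistical distance of the two transcript distributions, and that the worst-case decision region is the one concentrating on the heavier world (this suffices because the computationally unbounded distinguisher may be assumed deterministic, as already noted). Everything after that is mechanical bookkeeping; the single conceptual move is to measure the gap only in the direction where $\Prob{X_{\mbox{\tt ideal}}=\tau}>\Prob{X_{\mbox{\tt real}}=\tau}$, which aligns exactly with the one-sided lower bound we are given and lets us be deliberately lossy on $\Theta_{bad}$ without harming the final bound.
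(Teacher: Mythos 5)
Your proof is correct and complete: bounding the deterministic distinguisher's advantage by the statistical distance of the transcript distributions, rewriting that distance in its one-sided form $\sum_{\tau:\,\Prob{X_{\mbox{\tt ideal}}=\tau}>\Prob{X_{\mbox{\tt real}}=\tau}}\bigl(\Prob{X_{\mbox{\tt ideal}}=\tau}-\Prob{X_{\mbox{\tt real}}=\tau}\bigr)$, and splitting over $\Theta_{good}$ and $\Theta_{bad}$ is exactly the classical argument for Patarin's H-coefficient lemma. The paper itself gives no proof of this statement (it is imported by citation from Patarin's work), so there is no in-paper argument to diverge from; your write-up matches the standard proof found in the literature.
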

    
\subsubsection{Markov Inequality}
\label{sec:markov-inequality}

We recall the well known Markov inequality.
\begin{lemma}
  Let $X$ be a non-negative random variable and $a>0$ be a real number. Then it holds that
  \begin{align*}
    \Pr[X \geq a] \leq \frac{\Expect{X}}{a}
  \end{align*}
\end{lemma}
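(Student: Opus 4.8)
The plan is to prove Markov's inequality by the standard indicator-variable argument, which exploits the non-negativity of $X$ directly. First I would introduce the indicator random variable $\mathbf{1}_{\{X\geq a\}}$, equal to $1$ on the event $\{X\geq a\}$ and $0$ elsewhere. The crucial pointwise observation is that $a\cdot\mathbf{1}_{\{X\geq a\}}\leq X$ holds for every outcome: on $\{X\geq a\}$ the left side equals $a$ and the right side is at least $a$, while on the complementary event the left side is $0$ and the right side is non-negative by hypothesis. This single inequality is the only place the assumption $X\geq 0$ enters, and it is the only step requiring any care.

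Next I would take expectations on both sides and use linearity together with monotonicity of expectation, yielding $a\cdot\Expect{\mathbf{1}_{\{X\geq a\}}}\leq\Expect{X}$. Since the expectation of an indicator is the probability of the associated event, $\Expect{\mathbf{1}_{\{X\geq a\}}}=\Prob{X\geq a}$, so this reads $a\cdot\Prob{X\geq a}\leq\Expect{X}$. Finally, dividing through by $a>0$ gives the claimed bound $\Prob{X\geq a}\leq\Expect{X}/a$.

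There is essentially no serious obstacle here, as the statement is elementary and the proof is two lines. The only subtlety worth flagging is that the argument should be phrased to cover discrete, continuous, and mixed distributions uniformly, and the indicator/linearity formulation achieves this without any case split. If one preferred the integral form, I would instead write $\Expect{X}=\int_0^\infty x\,dF(x)\geq\int_a^\infty x\,dF(x)\geq a\int_a^\infty dF(x)=a\,\Prob{X\geq a}$, where non-negativity justifies taking $0$ as the lower limit and guarantees that discarding the contribution on $[0,a)$ only decreases the integral. Either route delivers the result immediately.
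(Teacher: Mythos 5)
Your proof is correct: the indicator-variable argument (noting $a\cdot\mathbf{1}_{\{X\geq a\}}\leq X$ pointwise, taking expectations, and dividing by $a$) is the standard and complete proof of Markov's inequality. The paper itself states this lemma without proof, simply recalling it as a well-known fact, so your write-up supplies a valid proof where the paper offers none.
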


\section{Compactness: Normalizing Efficiency for Optimally Secure Constructions}
\label{sec:defcompact}
% We give our new notion for the measure of efficiency of a collision secure hash function based on compressing primitives. Towards that notion, which we call \textit{compactness} of a hash function, we follow Stam's collision security bound~\cite{C:Stam08} which comes with a matching (the bound) instantiation~\cite{ICALP:ShrSta08} for its non-compressing function variant. Our notion can be interpreted as a more fine-grained, informative and constructive efficiency estimate than the usual rate (primitive calls per block) of a hash function. Stam's collision bound allows our compactness measure to also take into account the level of internal primitive compression, in addition to the number of primitive calls and message/block input size, and more importantly, from the design point of view give the sought for scaling (or tuning) factor of all these parameters. Informally speaking, our notion gives the ratio between the  hash function input message size and its ``optimal'' internal message processing capacity.   

In Crypto 2008, Stam made the following conjecture (Conjecture 9 in ~\cite{C:Stam08}): If $C^f:\bool{m+s}\to \bool{s}$ is a compression function making $r$ calls to primitive $f:\bool{n+c}\to\bool{n}$, a collision can be found in the output of $C$ by making $q\leq 2^{\frac{nr+cr-m}{r+1}}$ queries. The conjecture was proved in two papers, the case $r=1$ was proved by Steinberger in \cite{DBLP:conf/eurocrypt/Steinberger10}, whereas the general case was proved by Steinberger, Sun and Yang in \cite{C:SteSunYan12}. The result, in our notation, is stated below.
\begin{theorem}[\cite{C:SteSunYan12}]
 Let $f_1,f_2,\ldots,f_r:\bool{n+c}\to\bool{n}$ be potentially distinct $r$ many compression functions. Let $C:\bool{m+s}\to \bool{s}$ be a domain extension algorithm making queries to $f_1,f_2,\ldots,f_r$ in the fixed order. Suppose it holds that $1\leq m\leq (n+c)r$ and $\frac{s}{2}\geq \frac{nr+cr-m}{r+1}$. There exists an adversary making at most $q=\mathcal{O}\left(r2^{\frac{nr+cr-m}{r+1}}\right)$ queries finds a collision with probability at least $\frac{1}{2}$.  
\end{theorem}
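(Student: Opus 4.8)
The plan is to exhibit an explicit query-only adversary and analyze it through a \emph{yield} argument followed by a birthday-type collision count; recall that the adversaries here are computationally unbounded, so only the number of oracle queries matters. Model each $f_i$ as a random function and let the adversary ask $q_i$ queries to $f_i$, for a total budget $q=\sum_{i=1}^r q_i$. Call a message $M\in\{0,1\}^{m+s}$ \emph{completed} if every one of the $r$ primitive calls triggered by evaluating $C(M)$ has already been asked, and let the \emph{yield} $Y$ be the number of completed messages. The two things I must establish are (i) a lower bound on the achievable yield in terms of the $q_i$, and (ii) that a yield of order $2^{s/2}$ forces an output collision with constant probability. Combined with the hypotheses $1\le m\le (n+c)r$ and $\tfrac{s}{2}\ge\tfrac{nr+cr-m}{r+1}$, this should pin the budget at $q=\mathcal{O}\!\left(r\,2^{(nr+cr-m)/(r+1)}\right)$.

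For (i) I would run the adversary round by round, exploiting that $C$ issues its calls in a fixed order. Maintain the set of \emph{live prefixes}: partial computations consistent with the queries asked so far, starting from all $2^{m+s}$ messages. At round $i$ each live prefix determines a single input $x_i\in\{0,1\}^{n+c}$ to $f_i$ (a function of $M$ and the already-revealed answers $y_1,\dots,y_{i-1}$), so the live prefixes fall into at most $2^{n+c}$ bins indexed by this input. Querying the $q_i$ heaviest bins retains at least a $q_i/2^{n+c}$ fraction of the live prefixes, since among $2^{n+c}$ bins grouped into blocks of $q_i$ the heaviest block holds at least the average. Iterating over the $r$ rounds gives
\begin{equation*}
Y\;\ge\;2^{m+s}\prod_{i=1}^{r}\frac{q_i}{2^{n+c}}.
\end{equation*}
Taking the balanced allocation $q_i=2^{t}$ with $t=\tfrac{nr+cr-m}{r+1}$ and substituting $(n+c)r-m=(r+1)t$ yields $Y\ge 2^{\,m+s-(n+c)r+rt}=2^{\,s-t}$, which is at least $2^{s/2}$ precisely because the hypothesis gives $t\le s/2$; the constraint $m\le (n+c)r$ keeps $t\ge 0$ and $q_i\le 2^{n+c}$, so the allocation is feasible. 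This fixes the budget at $q=r\,2^{t}$.

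For (ii) I would pass from $Y\ge 2^{s/2}$ completed messages, whose outputs all lie in $\{0,1\}^{s}$, to an actual collision. The clean heuristic is a first-moment count: there are $\binom{Y}{2}\approx 2^{s}/2$ pairs, each colliding with probability about $2^{-s}$ over the randomness of the $f_i$, so the expected number of colliding pairs is $\Omega(1)$, and a second-moment (or direct structural) argument then boosts the collision probability to at least $\tfrac12$.

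\textbf{Main obstacle.} I expect step (ii) to be the delicate part, and it is exactly what separates the general-$r$ case from the single-call case. The outputs of the completed messages are \emph{not} independent: they are computed from a shared pool of query answers, and the wiring of $C$ may force completed messages to share identical outputs (harmless) or, conversely, to spread out so as to suppress collisions. Both the completability of a message and its output depend on the same random answers, so these events are correlated, and a naive ``large yield implies collision'' implication is insufficient. Controlling the per-pair collision probability uniformly, and accounting for internal collisions among intermediate values $y_i$ (which could distort the effective count of distinct live prefixes used in step (i)), is the technical heart of the argument; this is the careful bookkeeping carried out in the proofs of Steinberger and of Steinberger--Sun--Yang.
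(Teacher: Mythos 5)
There is no internal proof to compare against here: the paper imports this theorem from Steinberger--Sun--Yang \cite{C:SteSunYan12} (and Steinberger for $r=1$) and never proves it, so your proposal must stand on its own. Judged that way, it has a genuine gap, and you have correctly located it yourself. Your step (i) is fine: the heaviest-bins greedy argument gives the deterministic yield bound $Y\ge 2^{m+s}\prod_i q_i/2^{n+c}$, and the arithmetic with $q_i=2^t$, $t=\frac{nr+cr-m}{r+1}$, $t\le s/2$ does give $Y\ge 2^{s-t}\ge 2^{s/2}$. But this is essentially Stam's original heuristic motivating the conjecture; it is the easy half. The entire mathematical content of the theorem is step (ii), which you do not prove.

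Moreover, step (ii) is not merely ``delicate bookkeeping'': as stated, the implication ``yield $2^{s/2}$ $\Rightarrow$ collision with probability $1/2$'' is false, so no second-moment patch of your first-moment heuristic can close it. Concretely, take $C(M)$ to be the first $s$ bits of $M$, with the $r$ oracle calls made in fixed order but ignored by the output. Then the output of every completed message is deterministic, the per-pair collision probability is $0$ or $1$ rather than ``about $2^{-s}$'', and a yield-maximizing greedy adversary can perfectly well end up with $2^{s/2}$ completed messages having pairwise distinct first $s$ bits and hence zero collisions. (Of course this particular $C$ is trivially broken by an adversary that picks two messages agreeing on their first $s$ bits --- but that adversary is exploiting the structure of $C$, not the yield bound.) This shows the real theorem requires the adversary and the analysis to be tailored to the wiring of $C$, which is exactly what makes the general-$r$ proof of \cite{C:SteSunYan12} long and technical. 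Your proposal is an honest and well-organized plan whose second half coincides with the open problem the cited paper solved; it is not a proof of the statement.
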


 % a family of compression functions for $i\in[r]$.  one call to each of $f_i$. When $f_i$s are non-compressing functions $f_i:\bool{n}\to\bool{n}$ for $i\in[r]$, then the rate of $C$ is determined by $\frac{m}{rn}$. %where Steinberger's bound for $C$'s collision security of $C$ is $2^{\frac{nr-n-m/2}{r}}$. One can see that rearranging the bound for the $2^{n/2}$ target collision security value, one obtains the $m\leq nr - n/2$. Now the idea is that the rate  $\frac{m}{rn}$ of the latter constructions is then tightly bound Steinberger's bound which inspires us to our
% While well-defined for $C$ based on non-compressing $f$s, this measure of rate does not suffice when the hash functions are based on, say $(n+c)$-to-$n$-bit, compressing primitives. We recall here that Stam's collision bound  says that one can find a collision in $C$ based on such compressing $f$s by making $2^{\frac{nr+cr-m}{r+1}}$ queries.

\noindent  
 In other words, if one wants to construct a hash function that achieves birthday bound collision security asymptotically, the query complexity of the attacker  must be at least  $2^{s /2}$. Then the parameters must satisfy the following equation:

\begin{align*}
  \frac{nr+cr-m}{r+1}\geq \frac{s}{2} 
\end{align*}
Next, we rearrange the equation and get
\begin{align*}
  m\leq \frac{2nr+2cr-sr-s}{2}
\end{align*}
Thus we can analyze the security-efficiency trade-off across different constructions by considering only the schemes secure (asymptotically)  up to the birthday bound and describe the efficiency by the ratio $\frac{2m}{2nr+2cr-sr-s}$. Then we argue that the optimal efficiency is reached when the parameters satisfy  
\begin{align*}
  m=\frac{2nr+2cr-sr-s}{2}
\end{align*}
Now we are ready to define compactness of hash functions based on compressing primitives.
\begin{definition}{\bf Compactness}
  \label{def:compact}
  Let $f_1,f_2,\ldots,f_r:\bool{n+c}\to\bool{n}$ be potentially distinct $r$ many compression functions. Let $C:\bool{m+s}\to \bool{s}$ be a domain extension algorithm making queries to $f_1,f_2,\ldots,f_r$ in the fixed order. We say $C$ is $\alpha$-compact if
  \begin{itemize}
  \item for all adversary $\algA$ making $q$ queries, for some constant $c_1,c_2$, it satisfies that
    \begin{align*}
      \Adv{\Coll}{C}(\algA) \leq \mathcal{O}\left( \frac{s^{c_1}r^{c_2}q^2}{2^s}\right),
    \end{align*}
  \item  \begin{align*}
  \alpha=\frac{2m}{2nr+2cr-sr-s}
  \end{align*}
  \end{itemize}
 \end{definition}

\noindent Clearly for any construction, $\alpha\leq 1$. For the rest of the paper, we consider constructions where $s=n$. Thus, we derive the value of $\alpha$ as
\begin{align*}
  \alpha=\frac{2m}{2cr+nr-n}
\end{align*}

In~\cref{sec:examples-compact}, in Examples 1 and 2  we estimate that both Merkle--Damg{\aa}rd and Merkle tree domain extenders with $2n$-to-$n$-bit compression function primitives have a compactness of $\approx 2/3$.

\subsection{Compactness of Existing Constructions}
\label{sec:examples-compact}

\begin{example}
	We consider the textbook \textbf{Merkle--Damg{\aa}rd} (MD) domain extension with length padding and fixed IV. Let the underlying function be a $2n$-to $n$-bit compression function $f$. Let the total number of calls to $f$ be $r$. At every call $n$-bits of message is processed. Assuming the length-block is of one block, the total number of message bits hashed using $r$ calls is $(r-1)c$. Hence, we get $m=(r-1)c-n$. Putting $c = n$  we compute
	% \begin{align*}
	%   \alpha=\frac{2c(r-1)-2n}{2cr+nr-n}
	% \end{align*}
	
	% Assuming $c=n$ (as in the case of most of the compression functions considered in the paper), we get
	\begin{align*}
	\alpha=\frac{2n(r-1)-2n}{2nr+nr-n}=\frac{2nr-4n}{3nr-n} < \frac{2}{3}
	\end{align*}
	
\end{example}

\begin{example} For binary \textbf{Merkle tree} with $c=n$, let the number of $f$ calls at the leaf level is $z$. Then the total number of message bit is $2nz$.  Let the total number of calls to the compression function $f$ is $r=z+z-1=2z-1$. Comparing with the number of message bits we get $m+n=(r+1)n$ which implies $m=rn$. So we calculate the compactness of Merkle tree as
	\begin{align*}
	\alpha=\frac{2rn}{3nr-n}=\frac{2r}{3r-1} < \frac{2}{3}    
	\end{align*}
\end{example}
% \subsubsection{Constructions using non-compressing $f$s}.
\begin{example}
Next we consider \textbf{Shrimpton-Stam $2n$-to-$n$} compression function using three calls to $n$-to-$n$-bit function $f$. Here  $m=n$ and $c = 0$. Then $\alpha=\frac{2n}{3n-n}=1$. The \textbf{Mennink-Preneel} generalization \cite{C:MenPre12} of this construction gives $2n$-to-$n$-bit compression function making three calls to $n$-bit permutations. Thus in that case $\alpha=\frac{2n}{3n-n}=1$ as well.
\end{example}

% \subsubsection{Compactness of MD and Merkle Tree with larger compression functions}
% \label{sec:ex-variants-cmpctns}  
\begin{example}\label{ex:5ary-md}
  Consider again MD domain extension with length padding and fixed IV but let the underlying function be a $5n$-to $n$-bit compression function $f$. At every (out of $r$) $f$ call $4n$-bits are processed (the rest $n$-bits are the chaining value). As
  we have one length-block, the total number of message bits hashed is $(r-1)4n$. Hence, we get $m=(r-1)4n-n$ and compute:
  \begin{align*}
    \alpha=\frac{2\times4n(r-1)-2n}{2\times 4r+nr-n} = \frac{8nr-6n}{9nr-n}\approx \frac{8}{9}
  \end{align*}
\end{example}

\begin{example}\label{ex:5ary-mt}
	The $5$-ary Merkle tree with $5z$ leaf messages  has $5nz$ bit input in total. Thus $r = \frac{3(5z-1)}{4}$ and $m = n(5z-1)$. The compactenss is given by
	\begin{align*}
	\alpha = \frac{2n(5z-1)}{2nr+nr-n} = \frac{5z-1}{3r-1} = \frac{8(5z-1)}{9(5z-1)-4} \approx \frac{8}{9}
	\end{align*}
\end{example}

% \section{Warmup: $\cmt$ Compression Function with Compactness $\alpha=1$}
% \label{sec:markle-tree-with}
 
 \section{$\cmt$ Mode with Compactness $\alpha=1$}
 \label{sec:coll-resist-gener}
 
In this section we present the  $\cmt$ domain extender. We prove its collision resistance in the random oracle model and show that it is optimally $(\alpha=1)$-compact. Our $\cmt$ mode collision-resistance-proof is valid for FIL trees. That means that our result is valid for trees of arbitrary height but once the height is fixed, all the messages queried by the adversary must correspond to a tree of that height.   We remind the reader that the majority of Merkle tree applications rely \textit{exactly} on FIL Merkle trees. \footnote{Although VIL Merkle tree exists with collision preservation proof, that is done at the cost of an extra block of Merkle-Damg{\aa}rd-type strengthening and padding schemes. As Stam's bound is derived for FIL constructions, we restrict our focus on FIL constructions only.}  
The parameter of our construction is $\ell$ which denotes the height of the tree. The construction makes $r=2^{\ell}-1$ many independent  $2n$-to-$n$-bit functions and takes input messages from the set $\bool{\mu n}$, where $\mu=2^{\ell}+2^{\ell-1}-1$. $f_{(j,b)}$ denotes the $b^{th}$ node at $j^{th}$ level. The parents of $f_{(j,b)}$ are denoted by $f_{(j-1,2b-1)}$ and $f_{(j-1,2b)}$. We use the following notations for the messages. Let $M$ be the input messages with $\mu$ many blocks of $n$-bits. The corresponding input to a leaf node $f_{(1,b)}$ is denoted by $m_{(1,2b-1)}$ and $m_{(1,2b)}$. For the internal function $f_{(j,b)}$, $m_{(j,b)}$ denotes the message that is xored with the previous chaining values to produce the input. We refer the reader to Fig.~\ref{fig:full-tree} for a pictorial view. Note, the leaves are at level $1$ and the root of the tree is at level $\ell$. The message is broken in $n$-bit blocks. $2^\ell$ many message blocks are processed at level $1$. For level $j(>1)$, $2^{\ell-j}$ many blocks are processed. The adversary $\algA$ has query access to all functions, and it makes $q$ queries in total.

% \begin{figure}[ht]
%   \centering
%   \input{figs/fulltree.tikz}
%   \caption{$\cmt$ tree of height $\ell = 3$ with $2^{3}$ leaf message inputs (valid for Merkle tree), $r = 7$ compression function calls, and total of $2^{\ell}+2^{\ell-1}-1 = 11$  input blocks.}
%   \label{fig:full-tree}
% \end{figure}

\begin{figure}[htbp]
  \centering
  \begin{subfigure}[b]{0.45\textwidth}
    \centering
    \scalebox{0.70}{%
      \fbox{\procedure{$y \gets \cmt\text{ mode}(m_1, \ldots, m_{2^\ell + 2^{\ell-1}-1})$}{%
        i \gets 1, j \gets 1\\
        \pcdo  \\
        \pcind y_{1, j}  = f_{1, j}(m_i, m_{i+1})\\
        \pcind i \gets i + 2, j \gets j+1\\
        \pcwhile i < 2^\ell\\
        count \gets 2^\ell\\
        \pcfor j \pcin \{2, \ldots, \ell\}\\
        i \gets 1, s \gets count\\
        \pcind \pcdo \\
        \pcind[2] y_{j,i} = f_{j,i}(m_{s+i}\oplus y_{j-1,2i-1},\\
        \pcind[5] m_{s+i}\oplus y_{j-1,2i})\oplus y_{j-1,2i}\\
        \pcind \pcwhile i < 2^{\ell-j}\\
        count \gets count + 2^{\ell-j}\\
        \pcendfor\\
        \pcreturn y_{\ell,1}
      }}
    }
    \caption{Algorithm for computing $\cmt$ mode hash value with height $\ell$}
    \label{fig:mtree1}
  \end{subfigure}
  ~~
  \begin{subfigure}[b]{0.45\textwidth}
    \centering
    \scalebox{0.65}{\begin{tikzpicture}

  \pgfmathsetmacro{\numnode}{4}
  \edef\mcount{1}
\foreach \i/\r/\j in {1/1, 3/2, 6/3, 8/4}
{
\pgfmathsetmacro{\tlev}{1}
% \pgfmathtruncatemacro{\label}{$\i$}

\node[compres] (h\i) at (\i,1) {$f_{\tlev,\r}$};
\node[XOR] at (\i,-0.5) (x\i) {};
\draw[->] (h\i) -- (x\i);

\pgfmathtruncatemacro{\inp}{2*(\r-1)+1}

\node (m\inp) at ([yshift=.5cm]h\i.north west) {$m_{\inp}$};
\draw[->] (m\inp) -- (h\i.north west);
\pgfmathtruncatemacro{\nxtm}{\inp+1}

\node (m\nxtm) at ([yshift=.5cm]h\i.north east) {$m_{\nxtm}$};
\draw[->] (m\nxtm) -- (h\i.north east);
}

\foreach \j/\r/\t in {2/1/9, 7/2/10}
{
\pgfmathtruncatemacro{\tlev}{2}
%\pgfmathtruncatemacro{\numnode}{\numnode+\r}
\pgfmathtruncatemacro{\numsge}{8+\r}
\pgfmathtruncatemacro{\labelone}{\j-1}
\pgfmathtruncatemacro{\labeltwo}{\j+1}
\node[compres] (h\j) at (\j,-1.5) {$f_{\tlev,\r}$};
\draw[->] (x\labelone) -- (h\j.north west);
\draw[->] (x\labeltwo) -- (h\j.north east);
\node (m\numsge) at ([yshift=5mm]$ (x\labelone)!.5!(x\labeltwo) $) {$m_{\t}$};
\draw[->] (m\numsge) -- (x\labelone);
\draw[->] (m\numsge) -- (x\labeltwo);
\node[XOR] at (\j,-2.5) (x\j) {};
\node[XOR] at (\j,-3.5) (xm\j) {};
\draw[->] (x\j) -- (xm\j);
\draw[->] (h\j.south) -- (x\j);
}
\node[fill=black,inner sep=1pt] (z1) at ($ (h3.south)!.5!(x3) $) {};
\draw[->, black] (z1) .. controls (4,-1) .. (x2);
\node[fill=black,inner sep=1pt] (z2) at ($ (h8.south)!.5!(x8) $) {};
\draw[->, black] (z2) .. controls (9,-1) .. (x7);
\pgfmathtruncatemacro{\prevlabel}{7}

\node[compres] (f) at (4.5,-5) {$f_{3,1}$};
\node[XOR] at ([yshift=-.5cm]f.south) (xx) {};
\draw[->] (xm2) -- (f.north west);
\draw[->] (xm7) -- (f.north east);
\node (m11) at ([yshift=5mm]$ (xm2)!.5!(xm7) $) {$m_{11}$};
\draw[->] (m11) -- (xm2);
\draw[->] (m11) -- (xm7);

\node[fill=black,inner sep=1pt] (z3) at ($ (xm7)!.5!(x7) $) {};
\draw[->, black] (z3) .. controls (8,-4.5) .. (xx);
\draw[-] (f.south) -- (xx);
\draw[->] (xx) -- ([yshift=-1cm]f.south);

\end{tikzpicture}}
      \caption{$\cmt$ mode of height $\ell = 3$ with $2^{3}$ leaf message inputs (valid for Merkle tree), $r = 7$ compression function calls, and total of $2^{\ell}+2^{\ell-1}-1 = 11$  input blocks.}
    \label{fig:merkletree}\label{fig:full-tree}
  \end{subfigure}
  \caption{\cmt\ mode algorithm and instantiation}
\end{figure}
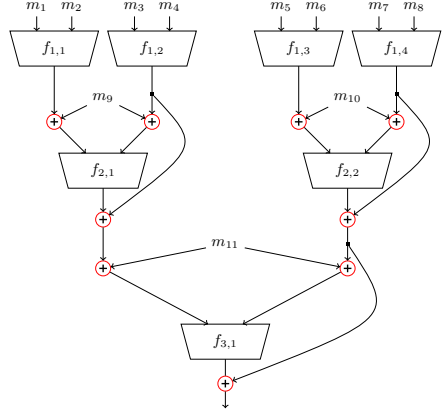

  \begin{theorem}
    \label{thm:mainbig}
    Let $\ell\geq 2$ be a natural number and $r=2^{\ell}$. Let $f:[r]\times\bool{2n}\to\booln$ be a family of functions.  Let $\algA$ be an adversary against the collision resistance of $\cmt$ mode. If the elements of $f$ are modeled as independent random oracles, then
  \begin{align*}
    \Adv{\Coll}{\cmt}(\algA^f) = \mathcal{O}\left( \frac{rn^2q^2}{2^n} \right).
  \end{align*}
  where $q$ is the number of queries $\algA$ makes to $f$ satisfying $q^2< \frac{2^n}{2e(n+1)}$. 
\end{theorem}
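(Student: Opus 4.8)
The plan is to reduce a hash collision to a \emph{local collision} at a single compression-function node, and then to bound the probability of such a local collision using the randomness of the $f_{j,i}$'s, paying careful attention to the XOR masks that distinguish $\cmt$ from a plain Merkle tree.

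First I would establish the reduction. Suppose $\cmt(M)=\cmt(M')$ with $M\neq M'$. Writing the node computation as $y_{j,i}=f_{j,i}(m\oplus y_L,\,m\oplus y_R)\oplus y_R$, I observe that if the two queries feeding a node (one from the $M$-computation, one from the $M'$-computation) are \emph{equal}, then equality of the node outputs forces $y_R=y'_R$, and reading off the two input coordinates then forces $m=m'$ and $y_L=y'_L$; i.e.\ the two children outputs and the local message block all coincide. Tracing this from the root downward, as long as the feeding queries agree the difference $M\neq M'$ is pushed deeper into a subtree, so there must be a node $(j,i)$ at which the two feeding queries are \emph{distinct} yet the node outputs collide. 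At a leaf this is just a genuine collision of $f_{1,i}$; at an internal node it is the event $f_{j,i}(X)\oplus f_{j,i}(X')=y_R\oplus y'_R$ for distinct $X\neq X'$, where $y_R,y'_R$ are the corresponding right-child outputs. This ``collision implies local collision'' claim is the structural backbone of the proof.

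The main obstacle is that, unlike in a Merkle tree, the target $y_R\oplus y'_R$ of an internal local collision is not a fixed constant but is determined by lower-level outputs, and the adversary chooses $M,M'$ adaptively after seeing all answers. A single query $X$ to $f_{j,i}$ can therefore serve as the node query for many messages, one for each way of writing $X_1\oplus X_2=y_L\oplus y_R$ with $y_L,y_R$ achievable left/right child outputs. To control this I would first pass to a ``good'' transcript event: using a balls-into-bins (Poisson-tail) argument, bound by $n+1$ the number of such interpretations of any query, equivalently the maximum number of (left-child-output, right-child-output) pairs sharing any fixed XOR value. The hypothesis $q^2<2^n/(2e(n+1))$ is exactly what keeps the expected number of such pairs per XOR-value below $1/(2e(n+1))$, so that the probability of exceeding multiplicity $n+1$ at any of the $r$ nodes is negligible, and in particular lower order than the claimed bound. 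Here the independence of the child functions $f_{j-1,2i-1},f_{j-1,2i}$ from $f_{j,i}$ is what lets these outputs behave like independent uniform balls.

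Conditioned on the good event, I would bound the local-collision probability node by node. Fix $(j,i)$ and a pair of distinct queries $X,X'$ to $f_{j,i}$. Taking expectation over the lower-level randomness fixes at most $n+1$ candidate values of $y_R$ (from the interpretations of $X$) and at most $n+1$ candidate values of $y'_R$ (from $X'$). Since $f_{j,i}$ is independent of everything determining these targets, for each of the at most $(n+1)^2$ target pairs the event $f_{j,i}(X)\oplus f_{j,i}(X')=y_R\oplus y'_R$ holds with probability $2^{-n}$ over the fresh output of whichever of $X,X'$ is queried later. A union bound over the $\binom{q}{2}$ query pairs and the at most $r$ nodes then yields $O\!\left(r(n+1)^2q^2/2^n\right)=O(rn^2q^2/2^n)$, while the leaf nodes (which carry no XOR mask, hence unit multiplicity) contribute only the lower-order $O(rq^2/2^n)$. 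Adding the negligible failure probability of the good event gives the stated bound. I expect the delicate points to be the precise definition of ``achievable child output'' that makes the multiplicity count both well defined and independent of $f_{j,i}$, and the verification of the balls-into-bins tail under the given constraint on $q$.
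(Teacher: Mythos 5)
Your structural reduction (a hash collision forces a ``local collision'' at some node with distinct feeding queries) is exactly the paper's proper-internal-collision lemma, and your multiplicity bound of $n+1$ interpretations per query is exactly the paper's yield bound (their event $\mathsf{Bad1}$, with threshold $k^{\ell}=n+1$). Up to that point you are on the paper's track. The gap is in your central probabilistic step: you union-bound over pairs of node queries $X,X'$ to $f_{j,i}$ and argue that $f_{j,i}(X)\oplus f_{j,i}(X')=y_R\oplus y_R'$ has probability $2^{-n}$ ``over the fresh output of whichever of $X,X'$ is queried later.'' This presumes that the candidate feedforward targets $y_R,y_R'$ already exist when the later node query is answered, i.e.\ that queries are made bottom-up, children before parents. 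An adaptive adversary need not do this: it can query $f_{j,i}$ at $X$ and $X'$ first, learn $\Delta=f_{j,i}(X)\oplus f_{j,i}(X')$, and only then make lower-level queries attempting to manufacture chaining values $h,h'$ (together with the compatibility witnesses on the left children) satisfying $h\oplus h'=\Delta$. In that ordering the randomness to be analyzed is that of the child-level outputs, not of $f_{j,i}$, and your union bound does not cover the event at all.

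This omission is compounded by a cascade effect that also undermines your good event. A single low-level query can create many new chaining values, and each new value at level $j-1$ can combine with queries \emph{already made} at level $j$ to create new chaining values there, and so on up the tree. Consequently the sets of ``achievable child outputs'' are not of size $O(q)$ and are not collections of independent uniform balls: your balls-into-bins estimate (expected pairs per XOR value $\approx q^2/2^n$) is only valid at the leaf level, while at internal levels the relevant quantity is $|\Gamma_L||\Gamma_R|/2^n$ with $|\Gamma_L|,|\Gamma_R|$ themselves adaptively grown. This is precisely where the paper spends most of its effort: it defines the \emph{load} of a node, splits load growth into Type~I contributions (new queries to the node itself --- the only case you treat) and Type~II contributions (later queries inside the node's subtree), controls Type~II by bounding its expectation and applying Markov's inequality (their event $\mathsf{Bad2}$), and runs the whole argument under a level-by-level conditioning on $\neg\mathsf{Coll}_{j'}$ and $\neg\mathsf{Bad}_{j'}$ for $j'<j$ so that the recursive estimates are legitimate. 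Your argument as written would suffice for the paper's $\ell=2$ warm-up, where level-wise querying can be assumed because the leaf functions are independent of the root function, but for general $\ell$ it needs an analogue of the load/Type-II analysis to be a proof.
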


\subsection{Warmup: $\cmt$ mode with height $2$.}
\label{sec:fiventon}      
First, we prove the security of the case $\ell=2$. In this case $\cmt$ mode implements a $5n$-to-$n$-bit compression function with $3$ calls to $2n$-to-$n$-bit compression functions. For convenience of explanation, we refer the three functions as $f_0,f_1,f_2$ (see Fig. \ref{fig:abr-compress}). 
%\subsection{Compact Merkle Tree}
%\label{sec:compact-merkle-tree}

\begin{construction}
  Let $f_0,f_1,f_2:\bool{2n}\to\booln$ be three compression functions. We define $\cmt$ mode for $\ell=2$ as $\cmt^f:\bool{5n}\rightarrow\booln$ where
  \begin{align*}
    \cmt (m_1,m_2,m_3,m_4,m_5)= f_2\left(x_3,x_4\right)\xor f_0\left(m_5\xor f_1\left(x_1,x_2\right),m_5\xor f_2\left(x_3,x_4\right) \right) 
  \end{align*}
\end{construction}

%\textcolor{red}{We need to fix a notation for this compression function and replace CMT by that.}
% In Fig.~\ref{fig:mtree1} , an additional message $m_5$ is added to the Merkle tree. At first we consider the addition of such message blocks only after the leaf
% level. If the tree has height $\ell$ then the number of leaf nodes in the tree is $2^{\ell-1+1} = 2^\ell$ and the number additional message in the
% tree will be $2^{\ell-2}$. Each node  is a hash function $h: \{0,1\}^{2n} \rightarrow \{0,1\}^n$.
\noindent Theorem~\ref{thm:mainbig} can be restated for this case as the following proposition.

\begin{prop}
  \label{prop:main}
  Let $f_0,f_1,f_2:\times\bool{2n}\to\booln$.  Let $\algA$ be an adversary against the collision resistance of $\cmt$. If $f_i$s  are modeled as independent random oracles, then 
  \begin{align*}
    \Adv{\Coll}{\cmt}(\algA^f) = \mathcal{O}\left(\frac{n^{2}q^2}{2^n}  \right)
  \end{align*}
  where $q$ is the maximum number of queries $\algA$ makes to the oracles $f_0,f_1,f_2$s.
\end{prop}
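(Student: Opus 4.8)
The plan is to reduce a collision to a handful of ``internal'' multiplicity events and then run a query-by-query argument in which each collision is charged to the oracle query that completes it. First I would fix notation matching the construction: for $M=(m_1,\dots,m_5)$ write $a=f_1(m_1,m_2)$, $b=f_2(m_3,m_4)$, $u=m_5\oplus a$, $v=m_5\oplus b$, and $w=f_0(u,v)$, so that $\cmt(M)=b\oplus w$ and the ``link condition'' $u\oplus v=a\oplus b$ holds. The structural observation driving everything is that the output $b\oplus w$ depends only on the $f_2$-query and the $f_0$-query, while the $f_1$-query enters \emph{only} through the link condition that decides whether a triple of queries forms a valid evaluation. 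Assuming without loss of generality that $\algA$ has queried all three oracles underlying every message it might output (this costs at most a constant number of extra queries), a collision is a pair of valid evaluations $M\ne M'$ with $b\oplus w=b'\oplus w'$. A short case analysis shows that, barring an internal collision in $f_0$, $f_1$, or $f_2$, a collision must involve two \emph{distinct} $f_0$-queries.

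Next I would introduce a good event $G$ consisting of: (i) no internal collisions in $f_0,f_1,f_2$ (each of probability at most $\binom{q}{2}/2^n$); and three XOR-multiplicity bounds asserting that, for every difference $\delta\in\{0,1\}^n$, (M1) the number of pairs of an $f_1$-query and an $f_2$-query with $a\oplus b=\delta$ is at most $n$; (M2) the number of pairs (valid evaluation, $f_1$-query) with (output)$\oplus\,a=\delta$ is at most $n$; and (M3) the analogous bound for pairs (valid evaluation, $f_2$-query). The hypothesis $q^2<2^n/(2e(n+1))$ is exactly what makes each of these hold except with negligible probability: the relevant counts are sums of at most $q^2$ (resp.\ $q^2n$) indicators landing in $2^n$ bins, so a Poisson/binomial tail gives $\Pr[\text{bin}\ge n+1]\le\bigl(eq^2/((n+1)2^n)\bigr)^{n+1}$ (resp.\ with an extra factor $n$), which stays below $2^{-n}$ per bin by the hypothesis and survives the union over $\delta$. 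A key deterministic consequence of (M1) is that the total number of valid evaluations is at most $qn$ (at most $n$ linkable $(f_1,f_2)$ pairs per $f_0$-query, times at most $q$ $f_0$-queries).

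Conditioned on $G$, I would process the $q$ queries in order and charge each potential collision to the \emph{last} query among all oracle calls of the two evaluations; its response is fresh and uniform. If that query belongs to both evaluations, the collision forces an internal $f_0/f_1/f_2$ collision, already excluded by $G$. Otherwise one evaluation is fully determined and the fresh query completes the other, and I would bound the number of ``bad'' fresh values by type. If the last query is an $f_0$-query (fresh $w^\ast$), it anchors at most $n$ valid evaluations (by (M1)), each requiring $w^\ast=Y_{\mathrm{old}}\oplus b$ for one of at most $qn$ prior outputs $Y_{\mathrm{old}}$, giving at most $qn^2$ bad targets. If the last query is an $f_2$-query (fresh $b^\ast$), the collision and the link condition both pin $b^\ast$, and consistency of the two is a prior condition $Y_{\mathrm{old}}\oplus a=u\oplus v\oplus w$; for each of the $\le q$ prior $f_0$-queries this is an XOR-coincidence between evaluation-outputs and $f_1$-outputs, bounded by $n$ via (M2), yielding at most $qn$ bad targets. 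The $f_1$-last case is symmetric via (M3), again $\le qn$. Thus each query contributes at most $qn^2/2^n$, and summing over $q$ queries gives the $f_0$-dominated bound $\mathcal{O}(n^2q^2/2^n)$; adding $\Pr[\neg G]$ (negligible) completes the argument.

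The main obstacle is precisely the $f_1$-last and $f_2$-last cases: because the output $b\oplus w$ omits $a$, the naive count of completable evaluations there is cubic in $q$, and a quadratic bound is recovered only by replacing the crude count with the auxiliary multiplicity bounds (M2)/(M3) on XOR-coincidences between evaluation-outputs and single-oracle outputs. Getting these multiplicity events to hold with the stated parameters — i.e.\ matching the $(n+1)$ in the tail exponent against the constraint $q^2<2^n/(2e(n+1))$ and surviving the union bound over all $2^n$ differences — is the delicate quantitative step, and it is what simultaneously produces the two factors of $n$ in the final advantage.
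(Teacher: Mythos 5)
Your charging step breaks in the $f_1$-last case. You claim that if the last query among the oracle calls of the two colliding evaluations belongs to \emph{both} evaluations, the collision forces an internal collision in $f_0$, $f_1$ or $f_2$. That is true when the shared fresh query is an $f_0$- or $f_2$-query, but false for $f_1$: precisely because the output $b\oplus w$ does not involve the $f_1$-output, two evaluations sharing a fresh $f_1$-query $a^\ast$ can collide while using distinct $f_2$-queries ($b\neq b'$) and distinct $f_0$-queries ($w\neq w'$), provided $b\oplus w=b'\oplus w'$ and $b\oplus u\oplus v=b'\oplus u'\oplus v'$ (both are conditions on queries made \emph{before} $a^\ast$), and $a^\ast$ then hits the common link value $b\oplus u\oplus v$. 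Concretely, take $b=0$, $w=1$, $b'=1$, $w'=0$, $(u,v)=(a^\ast,0)$, $(u',v')=(a^\ast\oplus 1,0)$: two distinct messages, equal $\cmt$ outputs, and no collision inside any single oracle. This pathway — the adversary precomputes, among its $f_2$/$f_0$ queries, a pair of combinations with matching link values and colliding outputs, and then fishes for $a^\ast$ with repeated $f_1$-queries — is covered by none of your three cases. It can be bounded (the enabling configuration is itself an XOR coincidence among prior queries, with expected count $\mathcal{O}(q^2 n/2^n)$), but that is an additional argument your proof must supply; as written the case analysis is incomplete.

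Separately, the tail bounds for (M2)/(M3) are not justified by the stated "independent indicators into $2^n$ bins" heuristic. The bin values of different pairs are strongly correlated: all evaluations anchored on a single $f_0$-query have outputs that are a common shift, by the one fresh value $w$, of already-known quantities; and two pairs sharing an $f_1$-query land in the same bin exactly when the two evaluation outputs collide — which is the very event you are trying to bound. Moreover the queries are adaptive, so the bins are not fixed in advance. Making (M2)/(M3) rigorous requires a counting argument over adaptively generated compound objects, which is considerably more delicate than a binomial tail. This is exactly what the paper's proof avoids: it first reduces any adaptive adversary to a level-wise one (its Lemma~\ref{lemma:adaptive-to-nonadaptive}, using that the leaf oracles are independent and their inputs are raw message blocks), after which the adversary holds two independent random lists $L_1,L_2$ and only queries $f_0$ adaptively. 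In that setting the only multiplicity event needed is your (M1) (the paper's Lemma~\ref{lemma:many4xor}, proved by exact counting over the two lists), and the last query of any collision is necessarily an $f_0$-query, so your sound $f_0$-last count alone finishes the proof. To salvage your adaptive, charge-the-last-query structure you would need both to repair the $f_1$-last case and to prove (M2)/(M3) by such a counting argument; adopting the level-wise reduction instead collapses your proof to its correct core.
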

\textbf{Proof of Proposition~\ref{prop:main}}
The proof strategy closely follows \cite{ICALP:ShrSta08}. \\
\noindent\textsc{Moving to level-wise setting.} In general, one needs to consider the adversary making queries in some adaptive (and possibly probabilistic) manner. But for the case of $5n$-bit to $n$-bit $\cmt$, as in \cite{ICALP:ShrSta08}, we can avoid the adaptivity as $f_1$ and $f_2$ are independent random oracles. 
\begin{lemma}
  \label{lemma:adaptive-to-nonadaptive}
  For every adaptive adversary $\hat{\algA}$, there exists an adversary $\algA$ who makes level-wise queries and succeeds with same probability; $$\Adv{\Coll}{\cmt}(\hat{\algA})=\Adv{\Coll}{\cmt}(\algA).$$
\end{lemma}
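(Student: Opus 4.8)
The plan is to exhibit, for a given adaptive $\hat{\algA}$, one specific level-wise adversary $\algA$ and to argue that the two have identical collision probabilities by exploiting the mutual independence of the three random oracles $f_0,f_1,f_2$. I read ``level-wise'' as the requirement that $\algA$ issue all of its leaf queries (to $f_1$ and $f_2$) before any of its root query (to $f_0$), while it may stay adaptive within each level. The one structural fact I would lean on throughout is that in the height-$2$ construction the input to the root call, namely $(m_5 \oplus f_1(m_1,m_2),\, m_5 \oplus f_2(m_3,m_4))$, is completely determined by the two leaf outputs and the block $m_5$. Hence information flows strictly from leaves to root, and a root query can lie on the evaluation path of a candidate message only once the feeding leaf values are fixed.

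First I would build $\algA$ as a two-phase simulator of $\hat{\algA}$. In the first phase $\algA$ runs $\hat{\algA}$, forwarding every $f_1$- and $f_2$-query to the real oracles while answering every $f_0$-query with internally and lazily sampled uniform values; since $f_0$ is an independent random oracle, this simulated view is distributed exactly as a real one, so $\hat{\algA}$'s final output pair $(M,M')$ has precisely the distribution it has in the real game, jointly with the real leaf answers. In the second phase $\algA$, now knowing $(M,M')$ and all relevant leaf outputs, issues the genuine root queries dictated by $M$ and $M'$ (so as to stay within the level-wise query model) and outputs $(M,M')$. By construction all $f_1,f_2$ queries precede all $f_0$ queries, so $\algA$ is level-wise, and its query count is that of $\hat{\algA}$ plus the two terminal root evaluations.

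The hard part will be showing that this substitution preserves the success probability \emph{exactly}, not merely approximately, since $\hat{\algA}$ may have chosen some leaf queries as a function of earlier (now simulated) $f_0$-responses, and I must argue that replacing those real responses by fresh simulated ones does not change the probability that $(M,M')$ collides under the real $f_0$. This is exactly where the independence of $f_0$ from $f_1,f_2$ is essential: the $f_0$-values that merely \emph{guide} the choice of leaf queries should be uncorrelated with the $f_0$-values at the terminal evaluation points $(m_5 \oplus f_1(m_1,m_2),\,m_5 \oplus f_2(m_3,m_4))$ and $(m_5' \oplus f_1(m_1',m_2'),\,m_5' \oplus f_2(m_3',m_4'))$ that actually \emph{decide} the collision. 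I would formalize this through a deferred-sampling coupling, lazily revealing each coordinate of $f_0$ only at the first moment its value is needed for the collision test, so that the guiding coordinates and the deciding coordinates are drawn from disjoint, independent parts of $f_0$; the equality $\Adv{\Coll}{\cmt}(\hat{\algA}) = \Adv{\Coll}{\cmt}(\algA)$ then follows because the joint law of $(M,M')$ together with the deciding $f_0$-coordinates is identical in the two experiments. The genuinely delicate case, and the crux of the argument, is when a guiding query coincides with a deciding point; I expect to dispose of it precisely by the same independence/deferred-sampling bookkeeping that underlies the Shrimpton--Stam treatment, after which the remainder is routine.
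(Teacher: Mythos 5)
Your construction goes in the wrong direction, and the resulting gap is fatal rather than a corner case. The collision event is decided by the \emph{real} $f_0$: the output of $\cmt$ is $f_2(m_3,m_4)\oplus f_0(u,v)$ with $u=m_5\oplus f_1(m_1,m_2)$, $v=m_5\oplus f_2(m_3,m_4)$, so any adversary that finds a collision with non-negligible probability does so by querying the real $f_0$ at the two deciding root points and \emph{observing} that the required relation holds. In other words, for every run in which $\hat{\algA}$ succeeds, the ``guiding'' $f_0$-queries and the ``deciding'' $f_0$-coordinates coincide; the disjointness premise on which your deferred-sampling coupling rests is violated in exactly the runs that carry all of the advantage. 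Concretely: let $\hat{\algA}$ query $f_1(m_1,m_2)=h_1$ and $f_2(m_3,m_4)=h_2$ once each, then query the real $f_0$ at $(m_5\oplus h_1, m_5\oplus h_2)$ for $q-2$ values of $m_5$ and output two messages whose $f_0$-responses agree. This adversary wins with probability $\approx q^2/2^{n+1}$. Under your reduction, $\hat{\algA}$ sees only your internally sampled (fake) $f_0$, so the pair $(M,M')$ it outputs collides under the fake function; the real $f_0$ at the two root points is fresh and independent, so your $\algA$ wins with probability $2^{-n}$. The claimed equality $\Adv{\Coll}{\cmt}(\hat{\algA})=\Adv{\Coll}{\cmt}(\algA)$ fails by an exponential factor, and no bookkeeping can repair a simulation in which the adversary never sees the oracle that decides its success.

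The argument the paper intends (following Shrimpton--Stam) inverts your roles of the oracles: one pre-samples the \emph{leaf} oracles and keeps the \emph{root} oracle real. Since $f_1,f_2$ are independent random oracles, the response to every fresh leaf query is a fresh uniform value regardless of which input was chosen and regardless of anything learned from $f_0$; hence adaptivity in selecting leaf inputs buys nothing. Formally, $\algA$ draws (or queries in advance) two lists $L_1,L_2$ of $q$ uniform values, runs $\hat{\algA}$ answering each fresh $f_1$/$f_2$ query with the next unused list entry under an injective on-the-fly relabeling of inputs, and \emph{forwards} every $f_0$ query to the real oracle. The joint transcript distribution is identical to the real game, the relabeling preserves distinctness of messages and the collision relation (leaf inputs enter the computation only through their outputs), so the advantage is preserved exactly, and all leaf queries precede all root queries as required. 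The asymmetry that makes this direction work and yours fail is the one you yourself identified but then misapplied: information flows strictly from leaves to root, so leaf responses may be replaced by pre-sampled randomness, while root responses are precisely what a collision finder must correlate with.
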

\noindent\textsc{Collision Probability in the level-wise query setting}
From this point on, we assume that the adversary is provided with two lists $L_1$ and $L_2$ at the start of the game. $L_1$ and $L_2$ have $q$ uniformly sampled points and they should be considered as the responses of the queries made by the adversary to $f_1$ and $f_2$, respectively. The adversary only needs to query $f_0$.

Let $\algA$ be an adversary that can find a collision in $\cmt$. Two cases may arise. In the first case, $\algA$ can find collision in the leaf nodes ($f_1$ or $f_2$). In that case, there is a collision in either $L_1$ and $L_2$. In the other case, there is no collision among the outputs of $f_1$ or $f_2$, and the collision is generated at the final output. Let $\mathsf{Coll}_i$ denote the event that $\algA$ finds a collision in $L_i$.  Let $\mathsf{Coll}$ denote the event that $\algA$ finds a collision in $\cmt$.

\begin{align*}
  \Adv{\Coll}{\cmt}(\algA^f) \leq \Prob{\mathsf{Coll}}
  &= \Prob{\mathsf{Coll}\wedge (\mathsf{Coll}_1\vee \mathsf{Coll}_2)}+\Prob{\mathsf{Coll}\wedge \neg(\mathsf{Coll}_1\vee \mathsf{Coll}_2)}\\
  & \leq \Prob{ \mathsf{Coll}_1\vee \mathsf{Coll}_2 }+\Prob{\mathsf{Coll}\mid \neg(\mathsf{Coll}_1\vee \mathsf{Coll}_2)}\\
  &\leq \Prob{ \mathsf{Coll}_1}+ \Prob{\mathsf{Coll}_2 }+\Prob{\mathsf{Coll}\mid\neg(\mathsf{Coll}_1\vee \mathsf{Coll}_2)}.
\end{align*} 

As the functions are independent random oracles, $\Prob{\mathsf{Coll}_1}$ and $\Prob{\mathsf{Coll}_2}$ are bounded above by $\frac{q^2}{2^n}$. In the remaining, we bound the probability of the third term.

\noindent\textsc{Defining the range.} For every query $(u_i,v_i)$ made by the adversary
to $f_0$, we define the following quantity
\begin{align*}
  Y_i\eqdef\mid\{(h_1,h_2)\mid h_1\in L_1,h_2\in L_2,h_1\xor u_i=h_2\xor v_i\}\mid.
\end{align*}
where $f_0(u_i,v_i)$ is the $i^{th}$ query of the adversary.
While $Y_i$ counts the number of valid or more precisely consistent with the $\cmt$ structure pairs $(h_1, h_2)$ that were already queried to $f_1$ and $f_2$, $Y_{i}$ also denotes the number of possible $\cmt$ hash outputs produced by the adversary by making $f_0(u_i,v_i)$ query.
Notice, that $Y_i$ inputs to $f_0$ generate $Y_i$ outputs. Each of these outputs are XORed each with only one corresponding consistent $h_2$ value determined by the equation $h_1\xor u_i=h_2\xor v_i$, hence producing $Y_i$ $\cmt$ outputs on $Y_i$ consistent number inputs to $f_0$. Let $Y=\mbox{max}_i Y_i$.

\noindent\textsc{Bounding Collision by range.} Now, we show how bounding the range will help us bounding the collision probability. Let $E_i$ denotes the probability that after making the $i^{th}$ query $f_0(u_i,v_i)$ produces a collision in the output of $\cmt$. Suppose  after making $i-1$ queries, adversary is not able to produce a collision for $\cmt$. Hence, the adversary has produced $\sum_{j=1}^{i-1}Y_j$ many hash outputs. We bound the probability that $i^{th}$ query response produces a collision. 
\begin{align*}
  \prob{E_i\mid\wedge_{j=1}^{i-1}\neg E_j} \leq \frac{Y_i\sum_{j=1}^{i-1} Y_{j}}{2^n}
\end{align*}
Now we can bound the collision probability as
\begin{align*}
\Prob{\mathsf{Coll}\mid\neg(\mathsf{Coll}_1\vee \mathsf{Coll}_2)} \leq \sum_{i=1}^q  \frac{Y_i\sum_{j=1}^{i-1} Y_{j}}{2^n} \leq  \sum_{i=1}^q \sum_{j=1}^{i-1}\frac{Y^2}{2^n}\leq \frac{q^2Y^2}{2^{n+1}}
\end{align*}
We shall use the following lemma, which we prove later.
\begin{lemma}
  \label{lemma:many4xor}
  \begin{align*}
  \Prob{Y>k\mid \neg(\mathsf{Coll}_1\vee \mathsf{Coll}_2)}\leq  \frac{q^{2k}(2^n-k)!}{k!\,(2^n-1)!}  
  \end{align*}
\end{lemma}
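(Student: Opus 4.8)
The plan is to bound the tail of $Y$ by a first-moment (union-bound) argument over families of input pairs that share a common XOR, computing the per-configuration probability under the without-replacement conditioning $\neg(\mathsf{Coll}_1\vee\mathsf{Coll}_2)$. The starting point is to reformulate $Y$ as a purely combinatorial quantity of the two lists. Writing $N(c)=|\{(h_1,h_2)\in L_1\times L_2 : h_1\oplus h_2=c\}|$, the defining relation $h_1\oplus u_i=h_2\oplus v_i$ shows $Y_i=N(u_i\oplus v_i)$, so $Y=\max_i Y_i\le \max_{c}N(c)$. Since $\{Y>k\}\subseteq\{\max_c N(c)\ge k\}$, it suffices to bound $\Prob{\max_c N(c)\ge k}$.

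A small but important observation is that, for a fixed $c$, any two pairs counted by $N(c)$ must have distinct first coordinates and distinct second coordinates: $h_1\oplus h_2=h_1'\oplus h_2'=c$ with $h_1=h_1'$ forces $h_2=h_2'$, which is impossible for two genuinely different pairs once $\neg(\mathsf{Coll}_1\vee\mathsf{Coll}_2)$ guarantees that $L_1$ and $L_2$ are internally collision-free. Hence $\max_c N(c)\ge k$ is equivalent to the existence of an \emph{unordered} family of $k$ pairs with distinct first coordinates, distinct second coordinates, and a common XOR. I would union-bound over such candidate families. A family is specified by a $k$-subset of indices into $L_1$ (taken in increasing order to factor out the relabelling symmetry, $\binom{q}{k}$ choices) together with an ordered assignment of $k$ distinct indices of $L_2$ (at most $q^k$ choices), giving at most $\binom{q}{k}q^k\le q^{2k}/k!$ candidate families.

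For each fixed family I would condition on the realization of $L_1$: the requirement that all $k$ XORs coincide fixes, relative to the first pair, the remaining $k-1$ entries of $L_2$ to the values $h_2^{(1)}\oplus\delta_t$, where the offsets $\delta_t$ are distinct and nonzero precisely because the chosen $L_1$ entries are distinct. Summing over the $2^n$ possible values of $h_2^{(1)}$ and using that $\neg\mathsf{Coll}_2$ makes $L_2$ a uniform tuple of \emph{distinct} values (sampling without replacement), the probability that the targeted $k$ positions realize one of these patterns is exactly
\[ 2^n\cdot\frac{(2^n-k)!}{(2^n)!}=\frac{(2^n-k)!}{(2^n-1)!}=\frac{1}{(2^n-1)(2^n-2)\cdots(2^n-k+1)}, \]
independent of the actual $L_1$ values, so it persists after averaging over $L_1$. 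Combining the count with the per-family probability then yields $\Prob{\max_c N(c)\ge k}\le \frac{q^{2k}}{k!}\cdot\frac{(2^n-k)!}{(2^n-1)!}$, which is the claimed bound.

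I expect the main subtlety to be the without-replacement analysis: conditioning on $\neg(\mathsf{Coll}_1\vee\mathsf{Coll}_2)$ turns the per-family probability from the naive $2^{-n(k-1)}$ into $\prod_{j=1}^{k-1}(2^n-j)^{-1}$, which is exactly what produces the factorial ratio in the statement (and is slightly larger, so the conditioning cannot simply be dropped). The two bookkeeping points that must be handled with care are (i) extracting the $1/k!$ symmetry factor by counting unordered families, equivalently by fixing the increasing order of the $L_1$-indices, and (ii) the harmless one-step slack $\{Y>k\}\subseteq\{Y\ge k\}$ together with $Y\le\max_c N(c)$, which let the combinatorial event be analyzed in place of $Y$ itself.
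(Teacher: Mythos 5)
Your proposal is correct and takes essentially the same approach as the paper: both arguments union-bound over families of $k$ index pairs sharing a common XOR and compute the exact probability under the without-replacement (collision-free) sampling induced by $\neg(\mathsf{Coll}_1\vee\mathsf{Coll}_2)$, producing the same factors $q^{2k}/k!$ and $2^n\cdot\frac{(2^n-k)!}{(2^n)!}=\frac{(2^n-k)!}{(2^n-1)!}$. The only difference is presentational: the paper counts favorable configurations of $(L_1,L_2)$ over all collision-free configurations, whereas you condition on $L_1$ and sum over the anchor value $h_2^{(1)}$ (equivalently, over the common XOR value $a$).
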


\noindent Using Lemma~\ref{lemma:many4xor}, we get
\begin{align*}
  \Prob{\mathsf{Coll}\mid\neg(\mathsf{Coll}_1\vee \mathsf{Coll}_2)}% &\leq   \Prob{\mathsf{Coll} \wedge Y\leq k\mid\neg(\mathsf{Coll}_1\vee \mathsf{Coll}_2)}\\ &\qquad +\Prob{\mathsf{Coll} \wedge Y> k\mid\neg(\mathsf{Coll}_1\vee \mathsf{Coll}_2)}\\\\
                                                                &\leq \Prob{\mathsf{Coll} \wedge Y\leq k\mid\neg(\mathsf{Coll}_1\vee \mathsf{Coll}_2)}\\ &\qquad +\Prob{ Y> k\mid \neg(\mathsf{Coll}_1\vee \mathsf{Coll}_2)}\\
  &\leq \frac{k^2q^2}{2^{n+1}}+\frac{q^{2k}(2^n-k)!}{k!\,(2^n-1)!} 
\end{align*}
Putting $k=n$ we get the probability as

\begin{align*}
 \Prob{\mathsf{Coll}\mid\neg(\mathsf{Coll}_1\vee \mathsf{Coll}_2)}\leq \frac{n^{2}q^2}{2^{n+1}}+ \frac{q^{2n}}{n!\, (2^n-1)\cdots (2^n-n+1)}&\approx \frac{n^{2}q^2}{2^{n+1}}+\frac{q^{2n}}{2^{n^2}}\\ &=\mathcal{O}\left(\frac{n^{2}q^2}{2^{n}}\right) 
\end{align*}
 Hence, we get the theorem. \qed

\subsubsection{Proof of Lemma~\ref{lemma:many4xor}.}
\label{sec:proof-many}
Let $(h_{i_1},h_{j_1}^\prime), (h_{i_2},h_{j_2}^\prime),\cdots,(h_{i_k},h_{j_k}^\prime)$ be the set of $k$ pairs such that each $h_{i_l}\in L_1$ and $h_{j_l}^\prime\in L_2$, and 
\begin{align*}
  h_{i_1} \xor h_{j_1}^\prime= h_{i_2}\xor h_{j_2}^\prime=\cdots=h_{i_k}\xor h_{j_k}^\prime=a~(\mbox{say})
\end{align*}

The condition $\neg(\mathsf{Coll}_1\vee \mathsf{Coll}_2)$ implies that there is no collision in $L_1$ and $L_2$. The total number of ways to choose each of $L_1$ and $L_2$ such that there is no collision is $q!{{2^n}\choose q}$. \\
Next we count the number of ways of choosing $L_1$ and $L_2$ such that the $k$ equalities get satisfied. The number of ways we can choose $i_1,i_2,\cdots,i_k$ is $q\choose k$. Fixing the order of $i_1,i_2,\cdots,i_k$, the number of ways to pair $j_1,j_2,\cdots,j_k$ is $k!{q\choose k}$. Observe that there can be $2^n$ many possible values of $a$.  Fix a value of $a$. Thus for each value of  $h_{i_l}$, there is a single value of $h_{j_l}^\prime$. Hence the total number of ways we can select $L_1,L_2$ such that the equalities get satisfied is $q!{{2^n}\choose q}\times q!{{2^n-k}\choose q}$. Hence the probability that for independently sampled $L_1$ and $L_2$,
\begin{align*}
  \Prob{Y>k\mid \neg(\mathsf{Coll}_1\vee \mathsf{Coll}_2)}=\frac{k!\left({q\choose k}\right)^22^nq!{{2^n}\choose q}\times q!{{2^n-k}\choose q}}{\left(q!{{2^n}\choose q}\right)^2}
\end{align*}
After simplification, we get the probability as
\begin{align*}
  \Prob{Y>k\mid \neg(\mathsf{Coll}_1\vee \mathsf{Coll}_2)}= \frac{(q!)^22^n(2^n-k)!}{\left((q-k)!\right)^2k!\left(2^n\right)!}\leq \frac{q^{2k}2^n(2^n-k)!}{k!\left(2^n\right)!}
\end{align*}
At the last step, we upper bound $ \frac{(q!)^2}{\left((q-k)!\right)^2}$ by $q^{2k}$. % Thus the probability gets bounded above by
% \begin{align*}
%     \Prob{Y>k\mid \neg(\mathsf{Coll}_1\vee \mathsf{Coll}_2)}
% \end{align*}
The lemma follows. \qed

\subsection{Proof of Theorem~\ref{thm:mainbig}}
\label{sec:proof-theorem-main}

\subsubsection{Proof Overview.}
\label{sec:proof-overview}
Now we prove the general case. We start with an overview of the proof. Unlike the case for $\ell=2$, we have to consider adaptive adversaries. Specifically, we can no longer assume that the adversary makes the queries level wise. Indeed, a query at a non-leaf level is derived from the previous chaining values (part of which is fed-forward to be xored with the output) and the messages. We can no longer ``replace'' the query without changing the chaining values. To the best of our knowledge, no proof technique achieving $2^{n/2}$ security bound asymptotically, exists in the literature for this case.

The intuition of our proof follows. Like in the previous case, our analysis focuses on the yield of a function. Informally, the yield of a query $(u,v)$ to a function $f$ is the number of chaining values created by the query. For example, consider a query $(u,v)$ made to function $f_{j,z}$, $z^{th} $ function of level $j$, and let $y$ be the output of the query. How many chaining values does this query create? A cursory inspection reveals that the number of created chaining values are the number of ``legal'' feedforward (chaining value from the previous level function $f_{j-1,2z}$) values $h$. Indeed a feedforward value $h$ can extend the chain, if there exists a chaining value $h'$ from the set of chaining values created from $f_{j-1,2z-1}$ (the other parent of $(j,z)$) such that $h'\xor u=h\xor v$. 

Naturally, if we can bound the total yield of a function (denoted as load), we can bound the probability of collision among the chaining values generated by the function. The load of a function $f_{j,z}$ gets increased in two ways. The first one is by a query made to $f_{j,z}$, as encountered in the previous section. The other one is by a query made to $f_{j',z'}$ where $j'<j$ and $(j',z')$ is in the subtree of $(j,z)$. To see why the second case holds, observe that the query to $f_{j',z'}$ increases the yield of the function, and thus creating new chaining values. Some of those newly created chaining values can be ``legal'' feedforward values for some queries already made to the next level, and thus increasing the yield of that query as well. Moreover, this in turn again creates new chaining value at the level $j'+1$. The effect continues to all the next levels and eventually affects the load of all the functions in the path to the root, including $(j,z)$.      

We bound the load of functions at each level starting from the leaves. At each level, we bound the probability of having a transcript which creates the load on a function (of that level) over a threshold amount, conditioned on the event that in none of the previous level the load exceeded the threshold.
\subsubsection{Formal Analysis.}
\label{sec:proof-theor-refthm:big}
Our formal analysis involves the transcript of the queries and the corresponding responses. Each entry of the transcript contains a query response pair, denoted by $(u,v,y)_{(j,b)}$ which indicates that $y$ is the response of the query $f_{j,b}(u,v)$. $\tau$ denotes the (partial) transcript generated after the $q$ many queries.  $Q_{(j,b)}$ denotes the set of queries made to the function $f_{(j,b)}$. $\mathcal{L}_{(j,b)}$ holds the responses.

\noindent\textsc{Yield Set} For each function $f_{(j,b)}$, we define a set $\Gamma_{(j,b)}$ holding the possible chaining values. Note, a chaining value $h \in \Gamma_{(j-1,2b)}$ can be a valid feedforward value for entry $(u,v,y)_{(j,b)}$ if there exists a matching $h'\in \Gamma_{(j-1,2b-1)}$ such that for some $m'$, it holds that $m'\xor h'=u$ and $m'\xor h=v$. Such a $m'$ can exist only if $h'\xor u= h\xor v$.
%Associated with each query $(u,v)$ to $f_{j,b}$, consider the feed forward values that can be used to generate a proper chaining value (\textcolor{red}{we need to clarify what proper chaining value means}). For a query $(u,v)$ to $f_{j,b}$, a feedforward value will be from the set $\Gamma_{j-1,2b}$. An element $h\in \Gamma_{j-1,2b}$ is interesting if
\begin{align*}
  \Gamma_{(1,b)}&\eqdef \{y \mid(u,v,y)_{(1,b)}\in \tau \}\\
  \Gamma_{(j>1,b)}&\eqdef \{y\xor h \mid (u,v,y)_{(j,b)}\in \tau, h \in \Gamma_{(j-1,2b)}, \exists h'\in \Gamma_{(j-1,2b-1)}, h'\xor u= h\xor v\}.
\end{align*}

\noindent \textsc{Feedforward set}. For each function $f_{(j,b)}$, we define a set $F_{(j,b)}$ containing the possible elements that can be used as feedforward and xored with the output of $f_{(j,b)}$ to generate valid chaining values. It is easy to verify that $F_{(j,b)}=\Gamma_{(j-1,2b)}$, where $\Gamma_{(0,b)}=\emptyset$.

Let $\mathsf{Coll}$ denotes the event that the adversary finds collision in $\cmt$ mode. Let $M=(m_{1,1},m_{1,2}\cdots,m_{1,2^{\ell}},\cdots,m_{\ell,1})$ and   $M'=(m'_{1,1},m'_{1,2}\cdots,m'_{1,2^{\ell}},\cdots,m'_{\ell,1})$ be the two distinct messages that produce the collision. We use $(u,v,y)_{(j,b)}$ and $(u',v',y')_{(j,b)}$ to be the corresponding queries made to function $f_{(j,b)}$ in the evaluation respectively.
\footnote{We assume the adversary makes all the internal queries before producing a collision. Indeed we can always add the missing queries in the transcript without significantly changing the query complexity.}

\noindent\textbf{Proper Internal Collision.}  The transcript is said to contain a \emph{proper internal collision} at $(j,b)$, if the transcript contains two distinct queries $(u,v,y)_{(j,b)}$ and $(u',v',y')_{(j,b)}$ and there exists $h,h'\in \Gamma_{(j-1,2b)}$
such that $y\xor h= y'\xor h'$. 

\begin{lemma}
  \label{lemma:crmerkle}
  Collision in tree implies a proper internal collision.
\end{lemma}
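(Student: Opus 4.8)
The plan is to imitate the classical Merkle-tree collision argument, descending from the root to the leaves, but with extra bookkeeping to untangle the feedforward XORs that distinguish $\cmt$ from a plain tree. Throughout I assume, as in the footnote preceding the lemma, that every chaining value occurring in the evaluation of the two colliding messages actually lies in the corresponding set $\Gamma_{(j,b)}$ (the adversary has issued all internal queries). Write $c_{(j,b)}$ and $c'_{(j,b)}$ for the chaining values produced at node $(j,b)$ when evaluating $M$ and $M'$. Recall that for an internal node $c_{(j,b)} = y \oplus R$, where $(u,v,y)_{(j,b)}$ is the query to $f_{(j,b)}$ and $R = c_{(j-1,2b)}\in\Gamma_{(j-1,2b)}$ is the right-parent value that is fed forward, while the inputs decompose as $u = m_{(j,b)}\oplus L$ and $v = m_{(j,b)}\oplus R$ with $L = c_{(j-1,2b-1)}$ the left-parent value.

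I would carry a descent invariant: we sit at a node $(j,b)$ with $c_{(j,b)} = c'_{(j,b)}$ and with $M,M'$ differing on the blocks feeding the subtree rooted at $(j,b)$. A tree collision gives $c_{(\ell,1)} = c'_{(\ell,1)}$, and $M\neq M'$ means the full messages differ, so the invariant holds at the root. At an internal node I split on whether the two queries to $f_{(j,b)}$ coincide. If $(u,v)\neq (u',v')$, then $(u,v,y)_{(j,b)}$ and $(u',v',y')_{(j,b)}$ are two \emph{distinct} transcript entries satisfying $y\oplus R = c_{(j,b)} = c'_{(j,b)} = y'\oplus R'$ with $R,R'\in\Gamma_{(j-1,2b)}$ — which is exactly a proper internal collision at $(j,b)$, and we stop.

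If instead $(u,v) = (u',v')$, then $y = y'$ since $f_{(j,b)}$ is a function, whence the collision $y\oplus R = y'\oplus R'$ forces $R = R'$; comparing the $v$-coordinates then yields $m_{(j,b)} = m'_{(j,b)}$, and comparing the $u$-coordinates yields $L = L'$. Thus both parents already collide and the local message block agrees, so the message difference must lie in at least one of the two subtrees; I descend into such a parent, which satisfies the invariant at a strictly smaller level, guaranteeing termination. If the descent reaches a leaf $(1,b)$ under the invariant, the two input blocks $(m_{(1,2b-1)},m_{(1,2b)})$ differ while $f_{(1,b)}$ returns the same value $c_{(1,b)} = c'_{(1,b)}$, i.e.\ a collision in the leaf primitive (the degenerate $j=1$ case, where $\Gamma_{(0,\cdot)}=\emptyset$ so no feedforward is present); otherwise we have already halted at a proper internal collision higher in the tree.

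The only genuinely delicate point is the disentangling in the matching-query case: because the output of $f_{(j,b)}$ is masked by the feedforward $R$ and both inputs share the same message mask $m_{(j,b)}$, one must use the two XOR equations \emph{together}, deducing $R=R'$, then $m_{(j,b)}=m'_{(j,b)}$, then $L=L'$ in that order. This is precisely where $\cmt$ departs from a plain Merkle tree and is the step I expect to require the most care; the rest is the standard root-to-leaf propagation of a collision.
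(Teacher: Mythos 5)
Your proof is correct and follows essentially the same route as the paper's: a root-to-leaf descent that splits at each node on whether the two queries to $f_{(j,b)}$ coincide, with distinct queries yielding a proper internal collision and identical queries forcing (via $y=y'$, hence $R=R'$, hence equality of the injected block and of the left parent) the difference into a subtree, so the argument can be repeated. Your write-up is in fact slightly more explicit than the paper's, which leaves the deduction $m_{(j,b)}=m'_{(j,b)}$ implicit and dismisses leaf collisions with a ``without loss of generality'' rather than treating them as the degenerate $j=1$ case as you do.
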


\begin{proof}
  The proof follows the Merkle tree collision resistance proof. Without loss of generality, we assume that there is no collision at the leaf. Now, consider a collision in the tree. This implies that there exist $(u,v,y)_{(\ell,1)}, (u',v',y')_{(\ell,1)}\in \tau$ and $h,h'\in \Gamma_{(\ell-1,2)}$ such that
  \begin{align*}
    y\xor h =y'\xor h'
  \end{align*}

  If $(u,v)_{(\ell,1)}\neq (u',v')_{(\ell,1)}$, then we get our proper internal collision at $(\ell,1)$, and we are done.  Otherwise $(u,v)_{(\ell,1)}=(u',v')_{(\ell,1)}$, which in turn implies $y=y'$. This implies $h=h'$. Moreover, we get $h\xor u\xor v= h'\xor u'\xor v'$
. The above two equalities give us collision in the both left and the right subtree. As $M\neq M'$, the messages differ in one of the subtrees. Repeating the above argument in the appropriate tree, we indeed find a $(j,b)$ with distinct inputs $(u,v)_{(j,b)}\neq (u',v')_{(j,b)}$. \qed
 \end{proof}

\noindent\textbf{Bounding Probabilities of a Proper Internal Collision}\\
% \noindent\textsc{Collision at Leaf node.} The first internal collision can happen in the leaf node.
% The rest of the analysis is done condition on no collision at the leaf. 

\noindent\textsc{Yield of a query.} Consider an element $(u,v,y)_{(j,b)}\in \tau$. We define the following quantity as the yield of the query $f_{(j,b)}(u,v)$. 
\begin{align*}
  Y_{u,v,j,b}\eqdef \twopartdef{\mid \{(h_1,h_2)\mid h_1\in \Gamma_{(j-1,2b-1)},h_2\in \Gamma_{j-1,2b}, h_1\xor u=h_2\xor v\}\mid}{j>1}{1}{j=1}
\end{align*}

\noindent \textsc{Load on a function.} The load on a function $f_{(j,b)}$ is defined by the total yield of the queries made to that function.
\begin{align*}
  L_{(j,b)}\eqdef \sum_{(u_i,v_i)\in Q_{j,b}} Y_{u_i,v_i,j,b}. 
\end{align*}

Observe that if no internal collision happens at a function , the size of the yield set is the load on that function; $L_{(j,b)}= \mid\Gamma_{j,b}\mid$
% \noindent\textbf{Bounding the probability of proper internal collision.}

% We first define the notion of bad transcript.
For the rest of the analysis we use the variable $k$ which is equal to $(n+1)^{\frac{1}{\ell}}$. 

\noindent\textsc{Bad Events.} In this section we define the notion of bad event. We observe that with every query, the load on the functions in the tree change.  Two types of contributions to load happen with each query.
\begin{enumerate}
\item \textbf{Type I} A new $(u,v)_{(j,b)}$ query contributes to $L_{(j,b)}$. The contribution amount is $Y_{(u,v,j,b)}$.
\item \textbf{Type II} A new $(u,v)_{j',b'}$ query increases the load of $(j,b)$ where $j>j'$ and $(j',b')$ is in the sub-tree rooted at $(j,b)$.      
\end{enumerate}
$\delta^1_{(j,b)}$ and $\delta^2_{(j,b)}$ denotes the total type-I and type-II contributions to $L_{(j,b)}$ respectively. We consider the following two helping $\mathsf{Bad}$ events.
\begin{enumerate}
\item $\mathsf{Bad1}$ happens at function $(j,b)$ such that for some $(u,v,y)_{(j,b)}\in \tau$, such that $Y_{(u,v,j,b)}>k^\ell$. This event corresponds to the Type I queries. 
\item $\mathsf{Bad2}$ happens at function $(j,b)$, if $\delta^2_{(j,b)}> k^{\ell} q$.  
\end{enumerate}
$\mathsf{Bad1}_j$ and $\mathsf{Bad2}_j$ denotes the event that $\mathsf{Bad1}$ or $\mathsf{Bad2}$ respectively happens at some node at level $j$. We define $\mathsf{Bad}_j$ as $\mathsf{Bad1}_j\cup \mathsf{Bad2}_j$. Let $\mathsf{Bad}$ denote the event that for the generated transcript $\mathsf{Bad}_j$ holds for some level $j$.
\begin{align*}
  \mathsf{Bad}\eqdef \bigcup_{j} \mathsf{Bad}_j
\end{align*}
The following proposition holds from the definitions.
\begin{lemma}
  \label{lemma:badsplit}
  \begin{align*}
    \neg \mathsf{Bad}_j\implies ~\forall b\in[2^{\ell-j}]~\mbox{ it holds that } L_{(j,b)} \leq 2k^{\ell} q
  \end{align*}
  
\end{lemma}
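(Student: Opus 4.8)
The plan is to prove Lemma~\ref{lemma:badsplit} by bookkeeping the two ways load accumulates on a function and bounding the total by summing the maximal per-query contributions. The key observation is that, by the definition given just above the statement, the load $L_{(j,b)}$ decomposes exactly as $L_{(j,b)} = \delta^1_{(j,b)} + \delta^2_{(j,b)}$, the total Type~I contribution plus the total Type~II contribution. So it suffices to bound each of these two quantities by $k^\ell q$ under the hypothesis $\neg\mathsf{Bad}_j$ (and in fact under $\neg\mathsf{Bad}$ for the levels below, which is subsumed in the level-wise conditioning that the surrounding argument maintains).

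First I would bound $\delta^1_{(j,b)}$. By definition, $\delta^1_{(j,b)} = \sum_{(u_i,v_i)\in Q_{(j,b)}} Y_{u_i,v_i,j,b}$, the sum of the yields of the direct queries to $f_{(j,b)}$. The event $\neg\mathsf{Bad1}_j$ guarantees that every individual query $(u,v,y)_{(j,b)}$ has yield $Y_{(u,v,j,b)}\leq k^\ell$. Since the adversary makes at most $q$ queries in total, in particular at most $q$ queries land on the single function $f_{(j,b)}$, so
\begin{align*}
  \delta^1_{(j,b)} \leq \sum_{(u_i,v_i)\in Q_{(j,b)}} k^\ell \leq k^\ell q.
\end{align*}
Next I would bound $\delta^2_{(j,b)}$, which is immediate: the event $\neg\mathsf{Bad2}_j$ is defined precisely as the statement $\delta^2_{(j,b)} \leq k^\ell q$. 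Combining the two gives $L_{(j,b)} = \delta^1_{(j,b)} + \delta^2_{(j,b)} \leq 2k^\ell q$, and since the hypothesis $\neg\mathsf{Bad}_j$ rules out both $\mathsf{Bad1}$ and $\mathsf{Bad2}$ at \emph{every} node $b\in[2^{\ell-j}]$ of level $j$, the bound holds uniformly in $b$, which is exactly the claimed implication.

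The main subtlety — and the point I would state carefully rather than gloss over — is justifying that the load genuinely splits as the sum of the two contribution types with no double counting, i.e. that every increment to $L_{(j,b)}$ is attributable to exactly one query and is classified as either Type~I (the query is made directly to $f_{(j,b)}$) or Type~II (the query is made to some $f_{(j',b')}$ with $j'<j$ in the subtree rooted at $(j,b)$). This follows from the structural definition of yield: a chaining value in $\Gamma_{(j,b)}$ is created when a matching feedforward pair $(h_1,h_2)$ becomes available, and each such creation is triggered by the single most recent query that completed the match. Thus the decomposition $L_{(j,b)}=\delta^1_{(j,b)}+\delta^2_{(j,b)}$ is a genuine partition of the load increments. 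Given this, the lemma is essentially a direct consequence of the definitions of the two bad events, and I would expect the only real work to be making the ``no double counting'' accounting airtight; the numerical bound itself is the trivial sum above.
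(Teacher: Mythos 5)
Your proof is correct and is essentially the argument the paper intends: the paper states this lemma without proof (``holds from the definitions''), and your decomposition $L_{(j,b)}=\delta^1_{(j,b)}+\delta^2_{(j,b)}$ with $\neg\mathsf{Bad1}_j$ bounding each per-query Type~I contribution by $k^\ell$ (hence $\delta^1_{(j,b)}\leq k^\ell q$ over at most $q$ queries) and $\neg\mathsf{Bad2}_j$ bounding $\delta^2_{(j,b)}\leq k^\ell q$ directly is exactly the intended from-the-definitions reasoning, which also explains the factor $2$ in the bound. Your extra care about the partition of load increments (every increment is triggered by exactly one query, either to $f_{(j,b)}$ itself or to a node in its subtree) is a worthwhile detail the paper leaves implicit.
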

%****************** The proof may be moved to appendix*****************
% \begin{proof}
% By Definition, $\neg \mathsf{Bad}_j$ implies $\neg \mathsf{Bad1}_j \wedge \neg \mathsf{Bad2}_j$. If $\neg\mathsf{Bad1}_j$ holds, then for every $(u,v,y)_{(j,b)}$ entry in $\tau$, $ Y_{(u,v,j,b)}\leq k^\ell $.
%   As there are at most $q$ many queries, for every $(j,b)$
%   $$\delta^1_{(j,b)}\leq k^\ell q. $$
%   Moreover, if  $\neg\mathsf{Bad2}_j$ holds, then by definition,
%   \begin{align*}
%     \delta^2_{(j,b)}\leq k^{\ell} q
%   \end{align*}
%   Hence, we get
%   \begin{align*}
%     L_{(j,b)}= \delta^1_{(j,b)}+\delta^2_{(j,b)}\leq 2k^{\ell} q \tag*{\qed}
%   \end{align*}
% \end{proof}

%********************************************************
% We now define the notion of bad transcript
% \begin{definition}{\textsc{Bad Transcript}}
%   \label{def:bad-transcript}
%    We say the transcript $\tau$ is bad if $\mathsf{Bad}$ happens.
%  \end{definition}

 % By definition, if a transcript is not bad, for every index $(j,b)$ it holds that  $L_{(j,b)}\leq 2.k^{\ell} q$.
 
% Let $\mathsf{Bad}$ denote the event that the generated transcript is bad. Moreover, $\mathsf{Bad}_j$ denotes that the bad event happens at some node at level $j$.
\noindent\textsc{Deriving Collision Probability.} Let $\mathsf{Coll}_{j}$ denote the event of a proper internal collision at $(j,b)$ for some $b\in [2^{\ell-j}]$. 
\begin{align*}
  \Prob{\mathsf{Coll}} &\leq \Prob{\mathsf{Coll} \cup \mathsf{Bad}}\\
  %%%%%%%%%%%%%%%%%%%%%%%%%%%%%%%%%%%%%%%%%%%%%%%%%%%%%%%%%%%%%%%
                       &\leq  \Prob{\mathsf{Coll}_{1}\cup \mathsf{Bad}_1}+\sum_{j>1}\Prob{(\mathsf{Coll}_{j}\cup \mathsf{Bad}_j) \cap \cap_{j'<j}\neg \mathsf{Coll}_{j'}\cap\cap_{j'< j} \neg\mathsf{Bad}_{j'}}\\
  %%%%%%%%%%%%%%%%%%%%%%%%%%%%%%%%%%%%%%%%%%%%%%%%%%%%%%%%%%%%%%
                       & \leq \Prob{\mathsf{Coll}_{1}\cup \mathsf{Bad}_1}+\sum_{j>1}\Prob{\mathsf{Bad}_{j}\cap\cap_{j'<j}\neg \mathsf{Coll}_{j'}\cap\cap_{j'<j} \neg\mathsf{Bad}_{j'}}+\\ &\qquad\qquad \sum_{j>1}\Prob{\mathsf{Coll}_{j}\cap \cap_{j'<j}\neg \mathsf{Coll}_{j'}\cap\cap_{j'\leq j} \neg\mathsf{Bad}_{j'}} \\
  %%%%%%%%%%%%%%%%%%%%%%%%%%%%%%%%%%%%%%%%%%%%%%
\end{align*}
Using the fact that $\prob{A\cap B}=\prob{A\mid B}\prob{B}\leq \prob{A\mid B}$,

\begin{align}
  \label{eq:main}
\Prob{\mathsf{Coll}} \leq& \Prob{\mathsf{Coll}_{1}\cup \mathsf{Bad}_1}+\sum_{j>1}\Prob{\mathsf{Bad}_{j}\mid\cap_{j'<j}\neg \mathsf{Coll}_{j'}\cap\cap_{j'<j} \neg\mathsf{Bad}_{j'}}+\nonumber\\ &\qquad\qquad \sum_{j>1}\Prob{\mathsf{Coll}_{j}\mid \cap_{j'<j}\neg \mathsf{Coll}_{j'}\cap\cap_{j'\leq j} \neg\mathsf{Bad}_{j'}}  
\end{align}

\noindent\textbf{Proof Sketch of Bounding $\Prob{\mathsf{Coll}_{1}\cup \mathsf{Bad}_1}$}. As all the functions are modeled as a random function, for all $b\in [2^{\ell-1}]$, we have  $\Prob{\mathsf{Coll}_{1,b}}\leq \frac{q^2}{2^n}$. Hence,
\begin{align*}
  \Prob{\mathsf{Coll}_{1}}\leq \frac{2^{\ell-1}q^2}{2^n}
\end{align*}
In order to find $\prob{\mathsf{Bad}_1}$, we recall that $F_{1,b}=\emptyset$. In other words the nothing is xored with the output of the functions at the leaf level. Hence, $Y_{(u,v,1,b)}=1$ for all $b \in [2^{\ell-1}]$ and $(u,v,y)_{1,b}\in\tau$. Hence $\Prob{\mathsf{Bad}_{1}}=0$. Hence we get,
\begin{align}
  \label{eq:leaflevel}
  \Prob{\mathsf{Coll}_{1}\cup \mathsf{Bad}_1}\leq \frac{2^{\ell-1}q^2}{2^n} 
\end{align}

\noindent \textbf{Proof Sketch of Bounding $\sum_{j>1}\Prob{\mathsf{Coll}_{j}\mid \cap_{j'<j}\neg \mathsf{Coll}_{j'}\cap\cap_{j'\leq j} \neg\mathsf{Bad}_{j'}}$}.
Fix $b\in [2^{\ell-j}]$ and thus fix a function at the $j\th$ level.  As analyzed in the previous section, given $\cap_{j'\leq j} \neg\mathsf{Bad}_{j'}$, the proper internal collision probability for $(j,b)$ is $\frac{L_{(j,b)}^2}{2^n}$. From Lemma~\ref{lemma:badsplit}, it holds that for each $b\in [2^{\ell-j}]$, $L_{(j,b)}\leq 2k^{\ell} q $. Hence for each $j>1,b\in [2^{\ell-j}]$,

\begin{align*}
  \Prob{\mathsf{Coll}_{(j,b)}\mid \cap_{j'<j}\neg \mathsf{Coll}_{j'}\cap\cap_{j'\leq j} \neg\mathsf{Bad}_{j'}}\leq \frac{4k^{2\ell}q^2}{2^n}.
\end{align*}

Taking sum over all $j>1 ,b\in [2^{\ell-j}]$,
\begin{align*}
  \sum_{j>1,b}\Prob{\mathsf{Coll}_{(j,b)}\mid \cap_{j'<j}\neg \mathsf{Coll}_{j'}\cap\cap_{j'\leq j} \neg\mathsf{Bad}_{j'}}   &\leq \sum_{j=2}^\ell \sum_{b=1}^{2^{\ell-j}} \frac{4k^{2\ell}q^2}{2^n}\\
                                                          &=\sum_{j=2}^\ell 2^{\ell-j}\times \frac{4k^{2\ell}q^2}{2^n}\\
                                                          &=\frac{2^{\ell+2}k^{2\ell}q^2}{2^n} \times \left(\sum_{j=2}^\ell  \frac{1}{2^{j}}\right)\\
%                                                          
  % &\leq \frac{2^{2\ell+3}q^2}{2^n}
\end{align*}
In the next step we shall use the fact that $\sum_{j=2}^\ell\frac{1}{2^{j}}< \frac{1}{2}$. Finally we get,
\begin{align}
  \label{eq:coll}
  \sum_{j>1,b}\Prob{\mathsf{Coll}_{(j,b)}\mid \cap_{j'<j}\neg \mathsf{Coll}_{j'}\cap\cap_{j'\leq j} \neg\mathsf{Bad}_{j'}} \leq \frac{2^{\ell+2}k^{2\ell}q^2}{2^{n+1}}   
\end{align}

\subsubsection{Bounding $\prob{\mathsf{Bad}}$}
\label{sec:bound-bad}

% In this section we bound the probability of the bad event. We observe that with every query, the load of the functions in the tree changes.  Two types of contribution to load happens with each query.
% \begin{enumerate}
% \item \textbf{Type I} A new $(u,v)_{(j,b)}$ query contributes to $L_{(j,b)}$. The contribution amount is $Y_{(u,v,j,b)}$.
% \item \textbf{Type II} A new $(u,v)_{j',b'}$ query increases the load of $(j,b)$ where $j>j'$ and $(j',b')$ is in the sub-tree rooted at $(j,b)$.      
% \end{enumerate}
% $\Delta^1_{(j,b)}$ and $\Delta^2_{(j,b)}$ denotes the total type-I and type-II contributions to $L_{(j,b)}$ respectively.

%  In order to handle this we consider the following two helping $\mathsf{Bad}$ events.
% \begin{enumerate}
% \item $\mathsf{Bad1}$ happens at function $(j,b)$ such that for some $(u,v,y)_{(j,b)}\in \tau$, such that $Y_{(u,v,j,b)}>k^\ell$. This event corresponds to the Type I queries. 
% \item $\mathsf{Bad2}$ happens at function $(j,b)$, $\Delta^2_{(j,b)}> k^{\ell} q$.  
% \end{enumerate}
% $\mathsf{Bad1}_j$ and $\mathsf{Bad2}_j$ denotes the event that $\mathsf{Bad1}$ or $\mathsf{Bad2}$ respectively happens at some node at level $j$.

Now we bound the probabilities of the two bad events. We bound the probabilities level-wise. Let $\mathsf{Bad}1_{j,b}$ denote that $\mathsf{Bad}1$ happens at node $b$ of level $j$. Similarly, let $\mathsf{Bad}2_{j,b}$ denote that $\mathsf{Bad}2$ happens at node $b$ of level $j$. Clearly, $\mathsf{Bad}1_{j}=\cup_{b\in[2^{\ell-j}]}\mathsf{Bad}1_{j,b}$ and $\mathsf{Bad}2_{j}=\cup_{b\in[2^{\ell-j}]}\mathsf{Bad}2_{j,b}$ 

\smallskip
\noindent\textbf{Bounding $\mathbf{Bad1}_j$.}
\begin{lemma}
  \label{lemma:bad1}
  For any $(u,v,y)_{(j,b)}$ for $b\in[2^{\ell-j}]$
  \begin{align*}
     \prob{\mathbf{Bad1}_{j,b}\mid\cap_{j'<j}\neg \mathsf{Coll}_{j'}\cap\cap_{j'<j} \neg\mathsf{Bad}_{j'}}\leq 2^n\left(\frac{ek^\ell q^2}{2^n}\right)^{k^\ell}
  \end{align*}
\end{lemma}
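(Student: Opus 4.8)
The plan is to bound $\mathsf{Bad1}_{j,b}$ by reducing it to a two-list XOR-coincidence count of exactly the type already handled in Lemma~\ref{lemma:many4xor}, now driven by the two parent yield sets instead of the lists $L_1,L_2$. First I would observe that, by definition of the yield, $Y_{u,v,j,b}$ depends on the query $(u,v)$ only through the value $a\eqdef u\oplus v$: it equals the number of pairs $(h_1,h_2)\in\Gamma_{(j-1,2b-1)}\times\Gamma_{(j-1,2b)}$ with $h_1\oplus h_2=a$. Hence $\mathsf{Bad1}_{j,b}$ is exactly the event that some value $a\in\{0,1\}^n$ is \emph{heavy}, i.e.\ is realised as the XOR of more than $k^\ell$ such pairs, and the adversary's freedom to choose $(u,v)$ with any prescribed $a$ is irrelevant. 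Conditioning on $\cap_{j'<j}\neg\mathsf{Coll}_{j'}\cap\cap_{j'<j}\neg\mathsf{Bad}_{j'}$, Lemma~\ref{lemma:badsplit} together with the fact that the absence of a proper internal collision at level $j-1$ forces $|\Gamma_{(j-1,b')}|=L_{(j-1,b')}$ gives the size bound $|\Gamma_{(j-1,2b-1)}|,|\Gamma_{(j-1,2b)}|\le 2k^\ell q\eqdef N$.

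Next I would apply a union bound over the $2^n$ possible values of $a$ --- this is precisely where the leading factor $2^n$ in the statement comes from --- so that it suffices to bound, for a single fixed $a$, the probability that at least $K\eqdef k^\ell$ pairs have XOR equal to $a$; call this count $N_a$. For a fixed $a$ we have $N_a=\sum_{h_1\in\Gamma_{(j-1,2b-1)}}\mathbb{1}[h_1\oplus a\in\Gamma_{(j-1,2b)}]$, so realising $K$ of them amounts to choosing $K$ elements of the first parent set and requiring their $a$-shifts to land in the second. Mirroring the counting in Lemma~\ref{lemma:many4xor}, there are at most $\binom{N}{K}$ such choices, and since the underlying level-$(j-1)$ outputs are fresh, uniform and (by $\neg\mathsf{Coll}_{j-1}$) collision-free once the lower transcript is fixed, each choice contributes probability at most $(N/2^n)^K$. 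This yields
\begin{align*}
  \Prob{N_a\ge K}\le \binom{N}{K}\left(\frac{N}{2^n}\right)^K\le \frac{N^{2K}}{K!\,2^{nK}},
\end{align*}
and after the union bound and the estimate $K!\ge (K/e)^K$ one obtains
\begin{align*}
  \Prob{\mathsf{Bad1}_{j,b}\mid\cap_{j'<j}\neg\mathsf{Coll}_{j'}\cap\cap_{j'<j}\neg\mathsf{Bad}_{j'}}\le 2^n\,\frac{1}{K!}\left(\frac{N^2}{2^n}\right)^K\le 2^n\left(\frac{eN^2}{K\,2^n}\right)^K=2^n\left(\frac{\mathcal{O}(k^\ell q^2)}{2^n}\right)^{k^\ell},
\end{align*}
where the substitution $N=2k^\ell q$, $K=k^\ell$ collapses $N^2/K$ to $\mathcal{O}(k^\ell q^2)$, matching the claimed form up to the absorbed constant.

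The main obstacle is justifying the \emph{fresh, uniform, collision-free} clause that makes the per-choice probability $(N/2^n)^K$ legitimate, because --- unlike the non-adaptive level-$2$ setting of Lemma~\ref{lemma:many4xor} --- the elements of a parent yield set are not raw random outputs: each is an output of a level-$(j-1)$ function XORed with a feedforward value, and a single query may contribute several of them (its own yield). I would handle this by conditioning on the transcript strictly below level $j-1$, which fixes all relevant feedforward values, and then running the count over the randomness of the level-$(j-1)$ outputs alone; the hypotheses $\neg\mathsf{Coll}_{j-1}$ and $\neg\mathsf{Bad}_{j-1}$ guarantee respectively that these outputs are distinct (so the combinatorial count over collision-free tuples, exactly as in the proof of Lemma~\ref{lemma:many4xor}, applies) and that the set sizes stay within $N$. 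The delicate point to verify carefully is that the multiplicity of a single output across several feedforwards does not inflate the coincidence probability beyond $(N/2^n)^K$; this is precisely the adaptive, multi-level analogue of the clean two-list computation already carried out for the height-$2$ case.
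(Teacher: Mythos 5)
Your proposal follows the paper's skeleton in its outer shell---exploit that $Y_{(u,v,j,b)}$ depends on $(u,v)$ only through $a=u\oplus v$, union bound over the $2^n$ values of $a$, then prove a tail bound on the number of XOR-coincidences between the two parent yield sets---but the step that carries all the probabilistic weight is a genuine gap. The claim $\Prob{N_a\ge K}\le\binom{N}{K}\left(N/2^n\right)^K$ treats the elements of $\Gamma_{(j-1,2b-1)}$ and $\Gamma_{(j-1,2b)}$ as if they were independent uniform samples, like the lists $L_1,L_2$ in Lemma~\ref{lemma:many4xor}. They are not: an element of $\Gamma_{(j-1,2b)}$ has the form $y\oplus h$, where one fresh output $y$ is reused with every legal feedforward $h$, and likewise on the left with $x\oplus g$. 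Consequently a \emph{single} coincidence of fresh randomness can create many counted pairs at once: if the lower-level sets contain pairs $(g,h)$ and $(g',h')$ with $g\oplus h=g'\oplus h'$ (and such coincidences are permitted under $\neg\mathsf{Bad}_{j'}$, since yields up to $k^\ell$ are allowed), then the one equation $x\oplus y=a\oplus g\oplus h$ produces both $(x\oplus g,\,y\oplus h)$ and $(x\oplus g',\,y\oplus h')$ as pairs with XOR $a$. So $\{N_a\ge 2\}$ can occur with probability on the order of $q^2/2^n$, which in the regime $q^2\ll 2^n$ is far larger than your bound $\binom{N}{2}(N/2^n)^2$; the inequality is not merely unjustified but false in general. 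Your proposed fix---conditioning on the transcript strictly below level $j-1$---fixes the feedforwards but does nothing to remove this correlation. You do flag exactly this as ``the delicate point to verify,'' but verifying it \emph{is} the lemma; as written, the proposal defers the crux rather than proving it.

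The paper's proof is structured precisely to avoid multiplying probabilities over chosen pairs. It enumerates contributing \emph{queries} $(u_1,v_1,y_1)_{(j-1,2b)}$ at the right parent---distinct queries do have independent fresh outputs---and bounds the probability that a given query contributes at all by $Y_{(u_1,v_1,j-1,2b)}\,|\Gamma_{(j-1,2b-1)}|/2^n$, i.e., the fan-out of one output across feedforwards is absorbed into a union over the pairs $(h,x)$ instead of being treated as independent events. The conditioning on $\neg\mathsf{Coll}_{j'},\neg\mathsf{Bad}_{j'}$ for $j'<j$ is invoked for exactly the two quantities this per-query bound needs (the per-query yield at level $j-1$ and the size of $\Gamma_{(j-1,2b-1)}$), and only then does the proof take $\binom{q}{k^\ell}$ choices of queries and raise the per-query probability to the power $k^\ell$. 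To salvage your element-wise count you would have to group the chosen $K$ elements by their underlying query and argue per group---which essentially reconstructs the paper's argument. A minor further point: even granting your step, the constants do not collapse to the stated bound; with $N=2k^\ell q$ you obtain $2^n\left(4ek^\ell q^2/2^n\right)^{k^\ell}$ rather than $2^n\left(ek^\ell q^2/2^n\right)^{k^\ell}$, which still suffices for Theorem~\ref{thm:mainbig} but only after adjusting the constant in the assumption on $q^2$.
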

\begin{proof}
  We bound the probability for any possible input $(u,v)_{(j,b)}$ that $ Y_{(u,v,j,b)}> k^\ell$. Fix $u\xor v=a$. Consider any entry $(u_1,v_1,y_1)_{(j-1,2b)}$ from $\tau$.  This entry contributes to $Y_{(u,v,j,b)}$ if  there exists a $h\in F_{(j-1,2b)}$  and $x\in \Gamma_{(j-1,2b-1)}$ such that $y_1\xor h\xor v=x\xor u$. Rearranging, we get that $y_1=h\xor x \xor a$. Probability of that event is $\frac{Y_{(u_1,v_1,j-1,2b)}\mid\Gamma_{(j-1,2b-1)}\mid}{2^n}$. As $\neg \textsf{Bad}_{j'}$ holds for all $j'<j$, we have $\mid\Gamma_{(j-1,2b-1)}\mid\leq k^{\ell}q$, and  $Y_{u_1,v_1,j-1,2b}\leq k^{j-1}$. Hence, the probability that $(u_1,v_1,y_1)_{(j-1,2b)}$ contributes to $Y_{u,v,j,b}$ is at most $\frac{k^{\ell+j-1}q}{2^n}$. As there are at most $q$ choices for $(u_1,v_1,y_1)_{(j-1,2b)}$ and each choice contributes one to $Y_{u,v,j,b}$,
  \begin{align*}
   \prob{Y_{u,v,j,b}>k^\ell}\leq  {q \choose k^\ell}\left(\frac{k^{\ell+j-1}q}{2^n}\right)^{k^\ell}
 \end{align*}
 Next, we use the inequality ${a \choose b} \leq \left(\frac{ea}{b}\right)^b$, where $e$ is the base of natural logarithm.
 \begin{align*}
   \prob{Y_{u,v,j,b}>k^\ell}\leq  \left(\frac{ek^{j-1}q^2}{2^n}\right)^{k^\ell}\leq \left(\frac{ek^\ell q^2}{2^n}\right)^{k^\ell}
 \end{align*}
 Now, taking union bound over all possible choice of $a$, we get that for any possible input $(u,v)$ to $f_{(j,b)}$,
 \begin{align*}
   \prob{Y_{u,v,j,b}>k^\ell} \leq  2^n\left(\frac{ek^\ell q^2}{2^n}\right)^{k^\ell}
   \tag*{\qed}
 \end{align*}
 \end{proof} 
 
\noindent \textbf{Bounding $\mathbf{Bad2}_j$.}

\begin{lemma}
  \label{lemma:bad2}
  Fix $b\in [2^{\ell-j}]$ and thus fix a function at the $j\th$ level.
  \begin{align*}
   \prob{\mathbf{Bad2}_{j,b}\mid\cap_{j'<j}\neg \mathsf{Coll}_{j'}\cap\cap_{j'<j} \neg\mathsf{Bad}_{j'}\cap \neg \mathbf{Bad1}_{j,b}}\leq \frac{2^\ell k^{\ell}q^2}{2^n} 
  \end{align*}
\end{lemma}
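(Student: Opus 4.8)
The plan is to bound $\delta^2_{(j,b)}$ in expectation and then invoke Markov's inequality, mirroring the passage from yields to collision probabilities used in the level-$j$ collision step. First I would unpack $\delta^2_{(j,b)}$ as a sum over \emph{new valid pairs}: a pair $(h',h)$ with $h'\in\Gamma_{(j-1,2b-1)}$, $h\in\Gamma_{(j-1,2b)}$ and $h'\xor u=h\xor v$ for some already-made query $(u,v)\in Q_{(j,b)}$ contributes to $\delta^2_{(j,b)}$ precisely when the later-created of $h',h$ appears after that $(j,b)$ query (otherwise it is a Type-I contribution). Attributing each such pair to its last-created child endpoint $h^\ast$ makes the decomposition $\delta^2_{(j,b)}=\sum_{h^\ast}\#\{(u,v)\in Q_{(j,b)}\text{ completed by }h^\ast\}$ free of double counting, since at the creation of $h^\ast$ both the query $(u,v)$ and the opposite-child partner $h^\ast\xor u\xor v$ already exist.

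Next I would extract from the conditioning the two quantitative facts I need. By Lemma~\ref{lemma:badsplit} applied at level $j-1$ (available because we condition on $\neg\mathsf{Bad}_{j'}$ for all $j'<j$), both child yield sets satisfy $|\Gamma_{(j-1,2b-1)}|,|\Gamma_{(j-1,2b)}|\le 2k^{\ell}q$; in particular the total number of child chaining values ever created, which are the only possible triggers $h^\ast$, is at most $4k^\ell q$. The remaining hypothesis $\neg\mathbf{Bad1}_{j,b}$ caps the per-query yields of $f_{(j,b)}$ at $k^\ell$, so that $\delta^1_{(j,b)}\le k^\ell q$; this is what pairs with the $\delta^2$ bound to recover $L_{(j,b)}=\delta^1_{(j,b)}+\delta^2_{(j,b)}\le 2k^\ell q$ in Lemma~\ref{lemma:badsplit}.

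The crux is the per-trigger estimate. Each newly created child value $h^\ast$ has the form $y\xor(\text{quantities fixed by the prior transcript})$, where $y$ is the fresh random output whose query brought $h^\ast$ into existence; conditioned on the history just before that query, $h^\ast$ is uniform on $\{0,1\}^n$. Hence for each fixed existing query $(u,v)\in Q_{(j,b)}$ the probability that $h^\ast$ completes a new pair, i.e. that $h^\ast\xor u\xor v$ lands in the opposite child set, is at most $|\Gamma_{(j-1,2b-1)}|/2^n\le 2k^\ell q/2^n$. I would deliberately use \emph{linearity of expectation} rather than independence here: even though a single deep fresh output may spawn several child values at once, every pair-completion incidence is charged to one value hitting one target, so summing the individual conditional expectations is legitimate. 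Combining the at most $4k^\ell q$ triggers, the at most $q$ existing queries, and the per-incidence probability $2k^\ell q/2^n$ gives $\Expect{\delta^2_{(j,b)}\mid\text{cond}}=O(k^{2\ell}q^3/2^n)$, and Markov's inequality at threshold $k^\ell q$ yields $\Prob{\mathbf{Bad2}_{j,b}\mid\text{cond}}=\Prob{\delta^2_{(j,b)}>k^\ell q}=O(k^\ell q^2/2^n)$, comfortably within the stated $2^\ell k^\ell q^2/2^n$; the explicit $2^\ell$ is convenient slack, equally obtainable by counting the triggers crudely across the $\le 2^\ell$ nodes of the subtree rooted at $(j,b)$.

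The step I expect to be the main obstacle is justifying the ``fresh and uniform'' property of each trigger $h^\ast$ cleanly in the presence of feed-forward. A child value can enter $\Gamma_{(j-1,2b)}$ either through a direct query to that child (a fresh output XORed with a valid feed-forward) or through a deeper fresh output that propagates up and matches an \emph{old} child query; in the latter case one must verify that the resulting value is still uniform given the history and that a deep output simultaneously creating several child values does not corrupt the linearity bookkeeping. Pinning down the conditional distribution at the exact moment of creation, and confirming that the conditioning on $\neg\mathsf{Bad}_{j'}$ at lower levels (rather than on the internal randomness) does not bias the fresh output, is the delicate point; the rest is the routine counting above.
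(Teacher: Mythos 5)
Your decomposition of $\delta^2_{(j,b)}$ into trigger--query incidences and the bookkeeping around Lemma~\ref{lemma:badsplit} are fine, but the probabilistic core of your argument rests on a claim that is false, and it is exactly the point you flagged as ``delicate.'' You assert that every newly created child value $h^\ast\in\Gamma_{(j-1,2b)}$ is uniform on $\{0,1\}^n$ conditioned on the history just before its creating query. This holds only for \emph{direct} creations, i.e.\ when $h^\ast=y\xor h$ with $y$ the fresh output of a query made to the child node $(j-1,2b)$ itself and $h$ a pre-existing feedforward value. For \emph{propagated} creations (a deeper fresh output percolating up through an already-made query at the child node) the situation is reversed: if the new deep value $w$ enters $\Gamma_{(j-2,4b)}$ and matches an existing query $(u_1,v_1,y_1)$ at $(j-1,2b)$ with an existing partner $h'$, the matching condition forces $w=h'\xor u_1\xor v_1$, so the created value is $y_1\xor w=y_1\xor h'\xor u_1\xor v_1$ --- a deterministic function of the \emph{prior} transcript. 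Conditioned on its creation, such a trigger is not uniform at all; the entire $2^{-n}$ rarity sits in the creation event, not in the completion event. Consequently the completion probability given creation can be $1$, and your bound $(\#\text{triggers})\times q\times 2k^{\ell}q/2^n$ on $\Expect{\delta^2_{(j,b)}}$ is unjustified. Note that for $j=2$ the child level consists of leaves, all of whose chaining values are direct fresh outputs, so your argument goes through there; it breaks precisely for $j\geq 3$, where Type-II-within-Type-II propagation exists.

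This is not a fixable presentation issue but the actual mathematical content of the lemma: one must control the expected number of propagation events jointly with the completions they later cause, and linearity of expectation over a ``uniform trigger'' index set cannot see the creation/value dependence. The paper's proof does this by attributing everything to the originating deep query $(u,v,y)_{j',b'}$: its $\leq k^{\ell}$ fresh level-$j'$ values are uniform (your Case A), the expected number that match into level $j'+1$ is at most $Y_{u,v,j',b'}\,\lvert\Gamma_{(j',b'-1)}\rvert\,\lvert Q_{j'+1,\lceil b'/2\rceil}\rvert/2^n$, and then a telescoping ratio argument carries this expectation level by level up to $(j,b)$ without incurring fresh $2^{-n}$ factors (since, as computed above, propagation beyond the first match is determined by history); summing over the $q$ queries and applying Markov at threshold $k^{\ell}q$ gives the stated $2^{\ell}k^{\ell}q^2/2^n$. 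So your plan (expectation plus Markov, using $\neg\mathsf{Bad1}$ and Lemma~\ref{lemma:badsplit} for the counting) matches the paper's outer shell, but the missing level-by-level propagation analysis is the heart of the proof, and the uniformity shortcut you propose in its place does not hold.
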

% \Delta^2_{(j,b)}>k^{\ell}q
\begin{proof}
 Consider a query $(u,v,y)_{j',b'}$ where $(j',b')$ is in the sub-tree of $(j,b)$. As $\cap_{j'<j} \neg\mathsf{Bad}_{j'}$ holds, we argue $\neg \mathsf{Bad1}_{j'}$ holds. Thus the number of chaining value created by $(u,v,y)_{j',b'}$ query at the output of $j',b'$ is at most $k^{\ell}$, we have $Y_{u,v,j',b'}\leq k^{\ell}$.

 Next we calculate the increase in the load of the next node $f_{(j'+1,\lceil \frac{b'}{2}\rceil)}$ due to query $(u,v,y)_{j',b'}$. Consider any chaining value $h$ created due to the query $(u,v,y)_{j',b'}$.  $h$ increases the load of  $(j'+1,\lceil \frac{b'}{2}\rceil)$ if there exists  $h_1\in \Gamma_{j',b'-1}$ and $(u_1,v_1,y_1)_{j'+1,\lceil \frac{b'}{2}\rceil}\in \tau$ such that $h=h_1\xor u_1\xor v_1$. For a fixed $h_1$ and query $(u_1,v_1,y_1)_{j'+1,\lceil \frac{b'}{2}\rceil}$, probability the equation gets satisfied is $\frac{1}{2^n}$. There can be at most $\vert Q_{j'+1,\lceil \frac{b'}{2}\rceil}\vert$ many queries made to the function  $j'+1,\lceil \frac{b'}{2}\rceil$ in the transcript, implying at most $q$ many choices for candidate $(u_1,v_1,y_1)_{j'+1,\lceil \frac{b'}{2}\rceil}$.
 \begin{align*}
   \Expect{\delta^2_{(j'+1,\lceil \frac{b'}{2}\rceil)}} \leq \frac{Y_{u,v,j',b'}\left\vert \Gamma_{(j',b'-1)}\right\vert\left\vert Q_{j'+1,\lceil \frac{b'}{2}\rceil}\right\vert}{2^n}
 \end{align*}

 As  $\neg \textsf{Bad}_{j'}$ holds in the given condition, $\left\vert\Gamma_{(j',b'-1)}\right\vert=L_{(j',b'-1)} < 2k^{\ell}q$. Moreover, $Y_{u,v,j',b'}\leq k^{\ell}$; thus the expected increase in the load of $f_{(j'+1,\lceil \frac{b'}{2}\rceil)}$ is  at most $\frac{2k^{2\ell}q^2}{2^n}$.

 We extend this argument to the next levels. For a random element from $Q_{j'+1,\lceil \frac{b'}{2}\rceil}\times \Gamma_{(j',b'-1)}$ the expected number of  matched elements in $ Q_{j'+2,\lceil \frac{b'}{4}\rceil}\times \Gamma_{(j'+1,\lceil \frac{b'}{2}\rceil-1)}$ is $\frac{\left\vert\Gamma_{(j'+1,\lceil \frac{b'}{2}\rceil-1)}\right\vert\left\vert Q_{j'+2,\lceil \frac{b'}{4}\rceil}\right\vert}{\left\vert\Gamma_{(j',b'-1)}\right\vert \left\vert Q_{j'+1,\lceil \frac{b'}{2}\rceil}\right\vert}$. Using $\neg \textsf{Bad}_{j'}$ for all $j'<j$, we bound the expected increase of load for  $f_{(j'+2,\lceil \frac{b'}{4}\rceil)}$ as 
 \begin{align*}
  & \Expect{\delta^2_{(j'+2,\lceil \frac{b'}{4}\rceil)}}\\ & \leq \frac{Y_{u,v,j',b'}\mid\Gamma_{(j',b'-1)}\mid \mid Q_{j'+1,\lceil \frac{b'}{2}\rceil}\mid}{2^n}\times \frac{\mid\Gamma_{(j'+2,\lceil \frac{b'}{2}\rceil-1)}\mid \mid Q_{j'+1,\lceil \frac{b'}{4}\rceil}\mid}{\mid\Gamma_{(j',b'-1)}\mid \mid Q_{j'+1,\lceil \frac{b'}{2}\rceil}\mid}\\
                               & \leq \frac{Y_{u,v,j',b'}\mid\Gamma_{(j'+1,\lceil \frac{b'}{2}\rceil-1)}\mid \mid Q_{j'+1,\lceil \frac{b'}{4}\rceil}\mid}{2^n}\\
   & \leq \frac{2k^{2\ell}q^2}{2^n}
 \end{align*}
 % $\frac{Y_{u,v,j',b'}\mid\Gamma_{(j',b'-1)}\mid \mid Q_{j'+1,\lceil \frac{b'}{2}\rceil}\mid}{2^n}\times \frac{2\mid\Gamma_{(j'+2,\lceil \frac{b'}{2}\rceil-1)}\mid \mid Q_{j'+1,\lceil \frac{b'}{4}\rceil}\mid}{\mid\Gamma_{(j',b'-1)}\mid \mid Q_{j'+1,\lceil \frac{b'}{2}\rceil}\mid}$
 Inductively extending the argument
 \begin{align*}
   \Expect{\delta^2_{(j,b)}} \leq \frac{2k^{2\ell}q^{2}}{2^{n}}.
 \end{align*}
\noindent As there $q$ many queries in the transcript, the expected total type II contribution for a function $(j,b)$ is $\frac{2^\ell k^{2\ell}q^3}{2^n}$. By using Markov inequality we get that 
\begin{align*}
  \prob{\delta^2_{(j,b)}>k^\ell q } \leq \frac{\Expect{\delta^2_{(j,b)}}}{k^\ell q} \leq \frac{2 k^{\ell}q^2}{2^n} \tag*{\qed}  
\end{align*}
\end{proof}
\subsubsection{Finishing the proof. }
\label{sec:finish}

From Lemma~\ref{lemma:badsplit}, Lemma~\ref{lemma:bad1}, and Lemma~\ref{lemma:bad2}, we bound the probability of bad as

\begin{align}
  \sum_{j>1}\Prob{\mathsf{Bad}_{j}\mid\cap_{j'<j}\neg \mathsf{Coll}_{j'}\cap\cap_{j'<j} \neg\mathsf{Bad}_{j'}}&= \sum_{j>1,b\in[2^{\ell-j}]}\left( 2^n\left(\frac{ek^\ell q^2}{2^n}\right)^{k^\ell}+\frac{2 k^\ell q^2}{2^n}\right)\\ &= \frac{2^{\ell+1}k^\ell q^2}{2^n}+  2^{\ell+n}\left(\frac{ek^\ell q^2}{2^n}\right)^{k^\ell}   \label{eq:bad}
\end{align}
From Equation~\ref{eq:main}, Equation~\ref{eq:leaflevel}, Equation~\ref{eq:coll}, and Equation~\ref{eq:bad}, we get,
\begin{align}
\Prob{\mathsf{Coll}}&\leq \frac{2^{\ell-1}q^2}{2^n}+ \frac{2^{\ell+1}k^{2\ell}q^2}{2^n}+  \frac{2^{\ell+1}k^\ell q^2}{2^n}+2^{\ell+n}\left(\frac{ek^\ell q^2}{2^n}\right)^{k^\ell}\\ & \leq \frac{2^{\ell+1}q^2(1+k^\ell+k^{2\ell})}{2^n}+2^{\ell+n}\left(\frac{ek^\ell q^2}{2^n}\right)^{k^\ell} \label{eq:final}
\end{align}
Finally, putting $k=(n+1)^{\frac{1}{\ell}}$, and assuming $q^2< \frac{2^n}{2e(n+1)}$, we get
\begin{align*}
 2^{\ell+n}\left(\frac{ek^\ell q^2}{2^n}\right)^{k^\ell}<  \frac{2^\ell e(n+1)q^2}{2^n}
\end{align*}
Putting $k^\ell=(n+1)$ in Equation \ref{eq:final},
\begin{align*}
  \Prob{\mathsf{Coll}}=\mathcal{O}\left(\frac{2^{\ell}(1+n+n^2) q^2}{2^n}\right)=\mathcal{O}\left( \frac{rn^2 q^2}{2^n}\right). 
\end{align*}
 This finishes the proof of Theorem \ref{thm:mainbig}.\qed
% \subsubsection{Compactness of \cmt Compression Function}
% \label{sec:compactness-cmt}
% As the $\cmt$ mode satisfies the first condition of compactness, we can put the value of the parameters and get the following corollary.

% \begin{corollary}\label{thm:cor-1}
%   Compactness of $\cmt$ is $1$.
% \end{corollary}
% The proof of this corollary is trivial and we move it to \cref{sec:proof-cor-1}.
\begin{corollary}
The compactness of $\cmt$ is $1$. 
\end{corollary}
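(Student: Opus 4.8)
The plan is to recognize that the corollary is a direct instantiation of the compactness definition (Definition~\ref{def:compact}): the genuine content—namely optimal collision resistance—has already been established in Theorem~\ref{thm:mainbig}, and what remains is to read off the four parameters of the $\cmt$ mode and verify a clean algebraic cancellation. First I would check that $\cmt$ satisfies the collision-resistance requirement in the first bullet of Definition~\ref{def:compact}. Since throughout we are in the regime $s=n$, Theorem~\ref{thm:mainbig} gives $\Adv{\Coll}{\cmt}(\algA^f)=\mathcal{O}\!\left(rn^2q^2/2^n\right)$, which is precisely of the required shape $\mathcal{O}\!\left(s^{c_1}r^{c_2}q^2/2^s\right)$ with $c_1=2$ and $c_2=1$. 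Hence it is legitimate to speak of the compactness of $\cmt$ at all, and its value is determined solely by the second bullet.

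Next I would identify the parameters from the construction. The height-$\ell$ $\cmt$ mode is a $(2^{\ell}+2^{\ell-1}-1)n$-to-$n$-bit function, so $s=n$ and the message portion is $m=(2^{\ell}+2^{\ell-1}-1)n-n=(2^{\ell}+2^{\ell-1}-2)n$. It makes $r=2^{\ell}-1$ calls, each to a $2n$-to-$n$-bit primitive, so that $n+c=2n$, i.e.\ $c=n$.

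Finally I would substitute these into the closed form valid for $s=n$ that is derived immediately after Definition~\ref{def:compact}, namely $\alpha=\frac{2m}{2cr+nr-n}$. Using $3\cdot 2^{\ell}=2^{\ell+1}+2^{\ell}$, the numerator is $2m=(3\cdot 2^{\ell}-4)n$, while the denominator, with $c=n$ and $r=2^{\ell}-1$, equals $3n(2^{\ell}-1)-n=(3\cdot 2^{\ell}-4)n$; the two coincide and therefore $\alpha=1$. There is essentially no obstacle left once Theorem~\ref{thm:mainbig} is in hand—the step is pure bookkeeping. The only points demanding care are to subtract the output length $s=n$ when extracting $m$ from the total input length, and to set $c=n$ rather than $c=2n$ for the $2n$-to-$n$ primitive; mis-handling either would destroy the exact cancellation that forces $\alpha=1$.
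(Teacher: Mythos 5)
Your proposal is correct and matches the paper's (implicit) argument: the corollary follows directly from Definition~\ref{def:compact} once Theorem~\ref{thm:mainbig} supplies the required collision bound $\mathcal{O}\!\left(rn^2q^2/2^n\right)$, and the parameter bookkeeping $m=(2^{\ell}+2^{\ell-1}-2)n$, $c=n$, $r=2^{\ell}-1$, $s=n$ yields $\alpha=\frac{(3\cdot 2^{\ell}-4)n}{(3\cdot 2^{\ell}-4)n}=1$. Your algebra is exactly the cancellation the paper relies on, so there is nothing to add.
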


\section{Achieving Indifferentiability Efficiently}
\label{sec:effic-vs-indiff}
Below we first consider the basic $\cmt$ compression function and analyze its security with respect to the  indifferentiability notion. We show that while $\cmt$ fails to achieve indifferentiability, a simple modification can restore the indifferentiability. We call that modified tree $\cmt^{+}$ mode construction.  $\cmt^{+}$ mode is the merge of two $\cmt$ modes (trees), not necessarily of the same height $\ell\geq 2$ each, and feeding their inputs to a final compression function (omitting the final message injection and feedforward). 

\subsection{Indifferentiability attack against $\cmt$ mode}
\label{sec:mmt-does-not}
%\textcolor{red}{arnab: fix the notation to denote the height of \cmt}
Our main result of this section is the following.
\begin{theorem}
  \label{thm:indiffattack}
  Consider the $\cmt$ mode with $\ell=2$. There exists an indifferentiability adversary $\algA$ making $\mathcal{O}(2^{\frac{n}{3}})$ many calls such that for any simulator $S$ it holds that
  \begin{align*}
    \Adv{\Indiff}{(\cmt,f),(\mathcal{G},S^\mathcal{G})}{(\algA)} \geq 1 -\epsilon 
  \end{align*}
  where $\epsilon$ is a negligible function of $n$.
\end{theorem}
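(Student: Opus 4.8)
The plan is to exhibit an explicit distinguisher that exploits the internal structure of the $\cmt$ mode with $\ell=2$, specifically the feed-forward XOR at the final compression call. Recall that
\begin{align*}
  \cmt(m_1,m_2,m_3,m_4,m_5)= f_2(x_3,x_4)\xor f_0\bigl(m_5\xor f_1(x_1,x_2),\,m_5\xor f_2(x_3,x_4)\bigr),
\end{align*}
so the output equals $y_2 \xor f_0(m_5\xor y_1,\,m_5\xor y_2)$ where $y_1=f_1(x_1,x_2)$ and $y_2=f_2(x_3,x_4)$. The key observation I would exploit is that the two arguments to $f_0$ differ by exactly $y_1\xor y_2$, which is \emph{independent of $m_5$}. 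This gives an algebraic relation among primitive values that a genuine $\cmt$ evaluation must satisfy but that the simulator, answering primitive queries without knowing the construction query, cannot consistently reproduce unless it essentially guesses the correct combinations.

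The attack I would build proceeds in the standard two-stage indifferentiability style. First, in the real world, the distinguisher collects roughly $2^{n/3}$ outputs of $f_1$ and $2^{n/3}$ outputs of $f_2$ by direct primitive queries; by a birthday-type argument over these $2^{n/3}\times 2^{n/3}=2^{2n/3}$ pairs it finds, with constant probability, a pair $(y_1,y_2)$ and a pair $(y_1',y_2')$ realizing the same XOR difference, or more directly it finds inputs that let it query $f_0$ at a point whose two coordinates are linked through previously observed $f_1,f_2$ outputs. It then verifies that the construction oracle's answer is consistent with the feed-forward relation by recomputing $y_2\xor f_0(\cdots)$ from its own primitive queries. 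In the ideal world the construction oracle is a monolithic random oracle $\mathcal{G}$, so the simulator $S$ must fabricate primitive answers for $f_0,f_1,f_2$ that, when combined through the $\cmt$ formula, reproduce $\mathcal{G}$'s already-fixed value; the simulator is forced to commit to $f_1$ and $f_2$ answers before it can know which $f_0$ query will be relevant, and with only $q_S$ queries per invocation it cannot anticipate the $2^{2n/3}$ cross-pairings the distinguisher has prepared.

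Concretely, I would set up a consistency check the distinguisher runs at the end: pick a target, query $\mathcal{G}$ (or $\cmt$) on a full message, independently reconstruct all three internal values via primitive queries, and test whether the feed-forward identity holds. In the real world it holds with probability $1$ by construction; in the ideal world, for the simulator to pass it would have to have set $S(f_0)$ values so that $y_2\xor f_0(m_5\xor y_1,m_5\xor y_2)=\mathcal{G}(M)$, but since $y_1,y_2$ were determined by earlier independent random $f_1,f_2$ answers and $\mathcal{G}(M)$ is an independent random value, the match occurs only with probability about $2^{-n}$ per guess, and the simulator's $O(2^{n/3})$ total queries cannot cover the $2^{2n/3}$ constraints. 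Summing the failure probabilities yields an $\epsilon$ that is negligible in $n$, giving advantage $\geq 1-\epsilon$.

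The main obstacle, and the part requiring the most care, is arguing the bound \emph{against every simulator}, not just a natural one: the simulator is allowed to be arbitrary and adaptive, so I must phrase the inconsistency as an information-theoretic deficiency. The clean way is to isolate a single collision/consistency event $E$ that (i) the real-world transcript satisfies with probability $1$ (or overwhelmingly) because of the deterministic $\cmt$ relation, and (ii) no simulator can force in the ideal world except with negligible probability, because reproducing it would require the simulator to control $\mathcal{G}$'s output, which it cannot. I would make this precise by counting: the distinguisher generates $\approx 2^{2n/3}$ candidate feed-forward constraints from $2^{n/3}$ primitive queries to each leaf, while the simulator, limited to $\mathcal{O}(2^{n/3})$ queries to $\mathcal{G}$, can pre-program at most $\mathcal{O}(2^{n/3})$ output values; a simple pigeonhole/union-bound then shows the probability that the simulator's answers are globally consistent with $\mathcal{G}$ on the distinguisher's chosen check is at most a negligible function of $n$. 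Care is needed to ensure the $2^{n/3}$ query budget is simultaneously enough for the distinguisher's birthday step and provably insufficient for any simulator.
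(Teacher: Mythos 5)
Your attack setup is essentially the paper's: you query $f_1$ and $f_2$ on $2^{n/3}$ points each, exploit the fact that the two $f_0$-inputs differ by $y_1\oplus y_2$ independently of $m_5$, and look for two leaf-output pairs with equal XOR difference (the paper phrases this as a 4-XOR condition $h_1^{(1)}\oplus h_1^{(2)}\oplus h_2^{(1)}\oplus h_2^{(2)}=0$). The genuine gap is in your ideal-world analysis. Your core mechanism --- ``the simulator can pre-program at most $\mathcal{O}(2^{n/3})$ values and cannot anticipate the $2^{2n/3}$ cross-pairings'' --- is not a valid argument, for two reasons. First, the theorem quantifies over \emph{all} simulators with no query budget, so a counting bound on $S$'s queries cannot be the engine of the proof. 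Second, and more fundamentally, the simulator is adaptive and lazy: it need not anticipate anything. When the single $f_0(u,v)$ query finally arrives, $S$ can run message reconstruction over its \emph{own} previous $f_1,f_2$ answers, identify the (few) messages $M$ whose evaluation ends at $(u,v)$, and program $z=\mathcal{G}(M)\oplus h_2$ with one $\mathcal{G}$-query per invocation. Indeed, your pigeonhole argument would equally ``break'' the $\cmt^{+}$ mode, whose simulator (Fig.~\ref{fig:simulator} in the paper) handles exactly these $2^{2n/3}$ potential pairings lazily and is proven secure up to the birthday bound; likewise your final consistency check, as described, verifies the feed-forward identity for a \emph{single} message, and that check is passed by this standard simulator.

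What actually defeats every simulator --- and this is the step your write-up is missing --- is that the 4-XOR lets the distinguisher build \emph{two distinct} messages whose evaluations share the \emph{same} final $f_0$ input: setting $m=h_1^{(1)}\oplus h_1^{(2)}\oplus\hat m$, both $M_1=(x_1^{(1)},x_2^{(1)},x_3^{(1)},x_4^{(1)},m)$ and $M_2=(x_1^{(2)},x_2^{(2)},x_3^{(2)},x_4^{(2)},\hat m)$ lead to the query $f_0(u,v)$ with $u=m\oplus h_1^{(1)}$, $v=m\oplus h_2^{(1)}$. The simulator's single answer $z$ must then satisfy both $z=\mathcal{G}(M_1)\oplus h_2^{(1)}$ and $z=\mathcal{G}(M_2)\oplus h_2^{(2)}$, which forces $\mathcal{G}(M_1)\oplus\mathcal{G}(M_2)=h_2^{(1)}\oplus h_2^{(2)}$ --- a relation among random-oracle outputs that no simulator, however powerful, can influence; it holds with probability about $2^{-n}$. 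To cover \emph{every} simulator you also need the complementary branch of the dichotomy: a simulator could refuse to ever produce a 4-XOR among its $f_1,f_2$ answers (e.g., answering $f_1$ inside one linear subspace and $f_2$ inside a complementary one), but then its transcript is itself distinguishable, since truly random $f_1,f_2$ contain such a quadruple with overwhelming probability --- hence the paper's distinguisher outputs $1$ when no 4-XOR is found. Finally, note that ``with constant probability'' in your birthday step is not enough for the claimed advantage $1-\epsilon$ with negligible $\epsilon$: you need the 4-XOR to exist with overwhelming probability, which follows from a second-moment argument (the expected number of quadruples is $\approx 2^{n/3}$).
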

Theorem~\ref{thm:indiffattack} can be extended for $\ell>2$ as well.
\subsubsection*{Principle behind the attack}
\label{sec:princ-behind-attack}
% We show that the $\cmt$ construction can only achieve indifferentiability of $2^{\frac{n}{4}}$ bits. 
Recall the $\cmt$ with $\ell=2$ from Fig.~\ref{fig:abr-compress}. The idea is to find  collision on the input of $f_0$ for two distinct messages $m,m'$. If the adversary finds such a collision, then the output of the simulator on this input needs to be consistent with the random oracle ($\mathcal{F}$) responses on two distinct messages. That is impossible unless there is a certain relation at the output of $\mathcal{F}$, making that probability negligible.
\subsubsection{The attack}
\label{sec:attack}
The adversary $\algA$ maintains three (initially empty) query-response lists $L_0,L_1,L_2$  for the three functions $f_0,f_1,f_2$, respectively. $\algA$ chooses $2^{n/3}$ messages $(x_1^{(1)},x_2^{(1)})\in \bool{2n}$, queries to $f_1$, and adds the query-response tuple to $L_1$. Similarly, $\algA$ chooses $2^{n/3}$ messages $(x_1^{(2)},x_2^{(2)})\in \bool{2n}$, queries to $f_2$, and adds the query-response tuple to $L_2$. 
$\algA$ checks whether there exists $(x_1^{(1)},x_2^{(1)},h_1^{(1)}) \in L_1$, and  $(x_1^{(2)},x_2^{(2)},h_1^{(2)}) \in L_1$, and  $(x_3^{(1)},x_4^{(1)},h_2^{(1)}) \in L_2$, and  $(x_3^{(2)},x_4^{(2)},h_2^{(2)}) \in L_2$ such that
\begin{align}
  \label{eq:4xor}
      h_1^{(1)}\xor h_1^{(2)}\xor h_2^{(1)}\xor h_2^{(2)}=0
\end{align}
If such tuples  do not exist, $\algA$ outputs $1$ and aborts. If there is collision in the lists, $\algA$ outputs $1$ and aborts. Otherwise, it chooses a random $\hat{m}\in \booln$. The adversary sets $m=h_1^{(1)}\xor h_1^{(2)}\xor\hat{m}={h}_2^{(1)}\xor {h}_2^{(2)}\xor\hat{m}$, adversary computes $u=m\xor h_1^{(1)}= \hat{m}\xor h_1^{(2)}$ and $v=  m\xor {h}_2^{(1)}=\hat{m}\xor {h}_2^{(2)}$. Finally, adversary queries $z=f_0(u,v)$ and outputs $1$ if $z\neq \mathcal{F}(x_1^{(1)},x_2^{(1)},x_3^{(1)},x_4^{(1)},m) \xor {h}_2^{(1)}$ or $z\neq \mathcal{F}(x_1^{(2)},x_2^{(2)},x_3^{(2)},x_4^{(2)},\hat{m})\xor {h}_2^{(2)}$. Else adversary outputs $0$.\\
The full probability analysis is straightforward and skipped in this version.
% The detail analysis of the attack is given in auxiliary material Section~\ref{sec:analysis-attack}. 

\subsection{Almost Fully Compact and Indifferentiable $\cmt^{+}$ Mode}
\label{sec:full-indiff-price}
In this section, we show that the generalized $\cmt^{+}$ mode without the additional message block at the last level is indifferentiable (up to the birthday bound) from a random oracle. For ease of explanation, we prove the result for three-level (see Fig. \ref{fig:indifftree}) balanced tree. The proof for the general case follows exactly the same idea. 
%The only difference is in the message retrievability which becomes complex as the levels of the tree increase.
The generalized $\cmt^{+}$ mode can be viewed as the merge of two $\cmt$ mode instances, one being the left $\cmt^{+}$ branch and the other being the right branch. Both their root values are input to a final $2n$-to-$n$-bit compression function to compute the final value of the $\cmt^{+}$ tree. The  $\cmt^{+}$ tree can be either balanced or unbalanced depending on whether it uses two $\cmt$ modes of identical or distinct heights (see Fig.~\ref{fig:abrtree-indif}), respectively.

% \begin{figure}
% 	\centering
% 	\input{figs/abrindif.tikz}
% 	\caption{General $\cmt^{+}$ tree}\label{fig:abrtree-indif}
% 	\label{figs/abrindif.tikz}
% \end{figure}

Our main result here is the following theorem. The result can be generalized to $\cmt^+$ with arbitrary height. However, the simulator description will be more detailed. For ease of explanation we consider the mode with $\ell=3$.
\begin{theorem}
  \label{thm:indiffmt1}
  Let  $f:[7]\times\bool{2n}\to\bool{n}$ be a family of random functions. Let $C^f:\bool{10n}\to\bool{n}$ be the $\cmt^+$ mode as in Fig. \ref{fig:indifftree}.  $(C^f,f)$ is $(t_S,q_S,q, \epsilon)$ indifferentiable from a random oracle $\mathcal{F}:\bool{10n}\to\booln$ where
  \begin{align*}
 \epsilon \leq \mathcal{O}\left(\frac{n^2q^2}{2^n}\right).     
  \end{align*}
where $q$ is the total number of queries made by the adversary. Moreover $t_S=\mathcal{O}(q^2)$ and $q_S=1$ 
\end{theorem}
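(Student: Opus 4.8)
The plan is to prove Theorem~\ref{thm:indiffmt1} via Patarin's coefficient-H technique (the Lemma of Section~\ref{sec:h-coeff-techn}), so the whole argument reduces to three steps: (a) exhibiting a simulator $S$; (b) defining a set $\Theta_{bad}$ and bounding $\Prob{X_{\mbox{\tt ideal}}\in\Theta_{bad}}$; and (c) showing $\Prob{X_{\mbox{\tt real}}=\tau}\geq(1-\varepsilon)\Prob{X_{\mbox{\tt ideal}}=\tau}$ for every $\tau\in\Theta_{good}$. As is standard, I would augment the transcript after the distinguisher halts by forcing all internal primitive queries underlying each construction query to be present, so that a transcript records a full query--response table for $f_0,\dots,f_6$ together with the construction answers. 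Here $f_0,\dots,f_5$ implement the two height-$2$ $\cmt$ subtrees and $f_6$ is the root merge (with neither message injection nor feedforward).

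First I would specify $S$. On a query to any of the six subtree functions $f_0,\dots,f_5$, $S$ returns a fresh uniform value and records it. On a query $f_6(a,b)$, $S$ searches its recorded lists for a message $M=(m_1,\dots,m_{10})$ whose left $\cmt$ subtree evaluates to root $a$ and whose right subtree evaluates to root $b$ under the recorded answers; concretely this amounts to locating, on each side, matching triples in the three relevant lists ($f_0,f_1,f_2$ for the left, $f_3,f_4,f_5$ for the right) satisfying the feedforward constraint $u\xor v=y_1\xor y_2$ exactly as in the $\ell=2$ analysis of Proposition~\ref{prop:main}. If a unique such $M$ is found, $S$ returns $\mathcal{F}(M)$; otherwise it returns a fresh uniform value. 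Each $f_6$ query then costs one call to $\mathcal{F}$ (so $q_S=1$) and a search over pairs of prior queries (so $t_S=\mathcal{O}(q^2)$), matching the claimed parameters.

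Next I would define the bad events and bound them in the ideal world, reusing the yield/load machinery of Theorem~\ref{thm:mainbig}. The events are: (i) a collision among leaf outputs or among the subtree roots; (ii) a root-collision giving two distinct $M\neq M'$ with the same merge input $(a,b)$ to $f_6$; and (iii) a premature connection, in which a subtree root created by some query coincides with an argument of an $f_6$ query that $S$ has already answered with a fresh value. The crux is that, since each subtree is exactly the $\ell=2$ construction, the number of distinct roots a single query can spawn is controlled: the yield bound (Lemma~\ref{lemma:many4xor}, equivalently Lemmas~\ref{lemma:bad1}--\ref{lemma:bad2} specialized to height $2$) shows that, except with probability $\mathcal{O}(n^2q^2/2^n)$, no query produces more than $\mathcal{O}(n)$ candidate chaining values. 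Conditioned on this, each of (i)--(iii) is a birthday-type event over uniform $n$-bit values with at most $\mathcal{O}(nq)$ targets, so each is $\mathcal{O}(n^2q^2/2^n)$ and hence $\Prob{X_{\mbox{\tt ideal}}\in\Theta_{bad}}=\mathcal{O}(n^2q^2/2^n)$.

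Finally, for the ratio on good transcripts I expect $\varepsilon=0$: when no bad event occurs, the simulator's message reconstruction is well-defined and unique, every ideal-world construction answer equals $\mathcal{F}(M)$ for a distinct $M$, and every $f_6$ answer closing a construction equals that same value, so the joint distribution of the full table in the ideal world is a uniform assignment to all distinct primitive queries---identical to the real world, where all seven functions are independent random. The main obstacle is event (iii): the $\cmt$ feedforward lets a single internal query be ``legal'' for many candidate roots, so a naive count overshoots the birthday bound, and the remedy is precisely to import the yield bound so that each query contributes only polynomially many roots. This is also why the top-level message injection and feedforward present in $\cmt$ must be dropped in $\cmt^{+}$: keeping them reintroduces the uncontrolled connection of Theorem~\ref{thm:indiffattack} and breaks the bound at $2^{n/3}$.
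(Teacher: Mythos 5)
Your proposal is correct and follows essentially the same route as the paper: an H-coefficient argument with a simulator that answers non-root queries uniformly and programs the root query via message reconstruction, bad events corresponding to the paper's $\mbox{\sc Bad}0$ (ambiguous reconstruction, bounded by the collision resistance of the $\ell=2$ subtrees via Proposition~\ref{prop:main}) and $\mbox{\sc Bad}1$ (premature chain completion, bounded by the yield machinery), and identical distributions on good transcripts so that the ratio term vanishes. The only cosmetic difference is that your simulator returns a fresh uniform value when reconstruction is ambiguous while the paper's aborts, which is immaterial since that case is already a bad event.
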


\begin{figure}[t!]
  \centering
  \begin{subfigure}[t]{0.45\textwidth}
    \centering
    \scalebox{0.6}{\begin{tikzpicture}[
level 1/.style={sibling distance=40mm}
]

\node[compres] (f) {$f$} [grow=up]
[child anchor=south]
child{node[large abr,yshift=30mm] (t) {\cmt right} edge from parent[<-]}
child{node[large abr,yshift=30mm] (s) {\cmt left} edge from parent[<-]};
\draw [decorate,decoration={brace,amplitude=10pt},xshift=-2pt,yshift=0pt]
([xshift=16mm]t.north) -- ([xshift=16mm]t.south)node [black,midway,xshift=5mm] {\footnotesize $\ell_2$};
\draw [decorate,decoration={brace,amplitude=10pt, mirror},xshift=-4pt,yshift=0pt]
([xshift=-16mm]s.north) -- ([xshift=-16mm]s.south)node [black,midway,xshift=-5mm] {\footnotesize$\ell_1$};
\draw[->] (f.south) -- ([yshift=-5mm]f.south);
\end{tikzpicture}}
    \caption{General $\cmt^{+}$ mode}\label{fig:abrtree-indif}
    \label{figs/abrindif.tikz}
  \end{subfigure}
  ~
  \begin{subfigure}[t]{0.45\textwidth}
    \centering
    \scalebox{0.65}{\begin{tikzpicture}

\pgfmathsetmacro{\numnode}{4}
\foreach \i/\r in {1/1, 3/2, 6/3, 8/4}
{
\pgfmathsetmacro{\tlev}{1}
%\pgfmathtruncatemacro{\label}{$\i$}

\node[compres] (h\i) at (\i,1) {$f_{\tlev,\r}$};
\node[XOR] at (\i,-0.5) (x\i) {};
\draw[->] (h\i) -- (x\i);

\pgfmathtruncatemacro{\inp}{2*(\r-1)+1}
\node (m\inp) at ([yshift=.5cm]h\i.north west) {$m_{\inp}$};
\draw[->] (m\inp) -- (h\i.north west);
\pgfmathtruncatemacro{\nxtm}{\inp+1}
\node (m\nxtm) at ([yshift=.5cm]h\i.north east) {$m_{\nxtm}$};
\draw[->] (m\nxtm) -- (h\i.north east);
}

\foreach \j/\r in {2/1, 7/2}
{
\pgfmathtruncatemacro{\tlev}{2}
%\pgfmathtruncatemacro{\numnode}{\numnode+\r}
\pgfmathtruncatemacro{\numsge}{8+\r}
\pgfmathtruncatemacro{\labelone}{\j-1}
\pgfmathtruncatemacro{\labeltwo}{\j+1}
\node[compres] (h\j) at (\j,-1.5) {$f_{\tlev,\r}$};
\draw[->] (x\labelone) -- (h\j.north west);
\draw[->] (x\labeltwo) -- (h\j.north east);
\node (m\numsge) at ($ (x\labelone)!.5!(x\labeltwo) $) {$m_{\numsge}$};
\draw[->] (m\numsge) -- (x\labelone);
\draw[->] (m\numsge) -- (x\labeltwo);
\node[XOR] at (\j,-2.5) (x\j) {};
%\node[XOR] at (\j,-3.5) (xm\j) {};
%\draw[->] (x\j) -- (xm\j);
\draw[->] (h\j.south) -- (x\j);
}
\node[fill=black,inner sep=1pt] (z1) at ($ (h3.south)!.5!(x3) $) {};
\draw[->, black] (z1) .. controls (4,-1) .. (x2);
\node[fill=black,inner sep=1pt] (z2) at ($ (h8.south)!.5!(x8) $) {};
\draw[->, black] (z2) .. controls (9,-1) .. (x7);
\pgfmathtruncatemacro{\prevlabel}{7}

\node[compres] (f) at (4.5,-4) {$f_{3,1}$};
\draw[->] (x2) -- (f.north west);
\draw[->] (x7) -- (f.north east);

\draw[->] (f.south) -- ([yshift=-1cm]f.south);

\end{tikzpicture}}
    \caption{$\cmt^{+}$ mode with $10$ input messages}
    \label{fig:indifftree}
  \end{subfigure}
  \caption{$\cmt^{+}$ mode examples}
\end{figure}
      
% \begin{figure}
% 	\centering
% 	\input{figs/abrplus3.tikz}
% 	\caption{$\cmt^{+}$ with $10$ input messages}
% 	\label{fig:indifftree}
% \end{figure}

\subsection{Proof of Theorem \ref{thm:indiffmt1}}
\label{sec:proof-theorem-ref}

We assume that the distinguisher $\ddv$ makes all the primitive queries corresponding the construction queries. This is without loss of generality as we can construct a distinguisher $\ddv'$ for every distinguisher $\ddv$ such that $\ddv'$ satisfies the condition. $\ddv'$ emulates $\ddv$ completely, and in particular, makes the same queries. However, at the end, for each construction queries made by $\ddv$, $\ddv'$ makes \emph{all} the (non-repeating) primitive queries required to compute the construction queries.. At the end, $\ddv'$ outputs the same decision as $\ddv$. As a result, in the transcript of $\ddv'$, all the construction query-responses, can be reconstructed from the primitive queries. Hence, it is sufficient to focus our attention on only the primitive queries and compare the distribution of outputs. If $\ddv$ makes $q_1$ many construction queries and $q_2$ many primitive queries, then $\ddv$ makes $q_1$ many construction queries and $q_2+q_1l$ many primitive queries in total where $l$ is the maximum number of primitive queries to compute $C$.

\subsubsection{The simulator}
\label{sec:simulator}
We start with the high-level overview of how the simulator $S$ works. For each $j\in [3]$, $b\in[2^{3-j}]$ the simulator maintains a list $L_{(j,b)}$. The list $L_{(j,b)}$ contains the query-response tuples for the function $f_{(j,b)}$.

\noindent\textsc{Message Reconstruction.} The main component of the simulator is the message reconstruction algorithm $\algo{FindM}$.   In the case of traditional Merkle tree, the messages are only injected in the leaf level. We have, in addition, the message injection at each (non-root) internal node. The message reconstruction in our case is slightly more involved.  

The algorithm for message reconstruction is the subroutine $\algo{FindM}$. It takes $(u_0,v_0)$, the input to $f_{(3,1)}$, as input. Let $M=m_1||m_2||\cdots||m_{10}$ be the message for which $f_{(3,1)}(u_0,v_0)$ is the hash value. Also, suppose all the intermediate queries to $f_{(j,b)} (j<3)$ has been made. In the following, we describe how the (partial) messages corresponding to chaining value $u_0$ is recovered. The other half of the message, corresponding to $v_0$, is recovered in analogous way.

Recall that there is no message injection at the final node. Hence, if all the intermediate queries related to $M$ is made by the adversary, then $m_9$ must satisfy all the following relations, $\exists (u,v,y)_{(2,1)}\in L_{(2,1)}$, such that
\begin{align*}
        y= u_0\xor v\xor m_9\qquad (m_1,m_2,u\xor m_9)\in L_{(1,1)}\qquad (m_3,m_4,v\xor m_9)\in L_{(1,2)}                    
\end{align*}

 We find a candidate $m_9$ by xoring $u_0$ with $y\xor v$ for all the (so far) recorded entries $(u,v,y)_{(2,1)}\in L_{(2,1)}$. To check the validity of the candidate, we check the other two relations. If indeed such query tuples exist, we can recover the message. 

 \noindent\textsc{Simulation of the functions.} For every non-root function $f_{(j,b)}$, $j<3$, the simulator simulates the function perfectly. Every query response is recorded in the corresponding list $L_{(j,b)}$. The simulation of $f_{(3,1)}$ is a little more involved, albeit standard in indifferentiability proof. Upon receiving a query $(u_0,v_0)$ for $f_{(3,1)}$, the simulator needs to find out whether it is the final query corresponding to the evaluation for a message $M$. Suppose, all other queries corresponding to $M$ has been made. The simulator finds $M$ using the message reconstruction algorithm. If only one candidate message $M$ is found, the simulator programs the output to be $\mathcal{F}(M)$. If the list returned by $\algo{FindM}$ is empty, then the simulator chooses a uniform random string and returns that as output. The first problem, however, arises when there are multiple candidate messages, returned by $\algo{FindM}$. This implies, there are two distinct messages $M, M'$ for both of which  $f_{(3,1)}(u_0,v_0)$ is the final query. The simulator can not program its output to both $\mathcal{F}(M)$ and $\mathcal{F}(M')$. Hence, it aborts. In that case, there is a collision at either $u$ or $v$, implying that the adversary is successful in finding a collision in $\cmt$ mode. The probability of that event can indeed be bounded by the results from the previous section. The second problem occurs in the output of non-root functions. Suppose for a $f_{(3,1)}(u_0,v_0)$ query the $\algo{FindM}$ algorithms returns an empty set. Intuitively, the simulator assumes here the adversary can not find a message $M$, for which the final query will be $f_{(3,1)}(u_0,v_0)$. Hence, the simulator does not need to maintain consistency with the Random Oracle. Now the second problem occurs, if later in the interaction, the output of some $f_{(j,b)}$ query forces a completion in the chaining value and a  message $M$ can now be recovered for which the final query will be $f_{(3,1)}(u_0,v_0)$. This will create an inconsistency of the simulator's output and the response of the Random Oracle. In the following, we bound the probability of these two events.
 
\begin{figure}[!htb]
  \centering
  \fbox{\scalebox{0.85}{
      \begin{pchstack}
        \begin{pcvstack}
          \procedure{ Procedure $S(3,1,u,v)$}{%
            \pcln \pcif (u,v,z)\in L_{(3,1)}~~\pcreturn z\\
           \pcln \mathcal{M}= \mbox{\algo{FindM}}(u,v)\\
           \pcln \pcif |\mathcal{M}| >1 \pcreturn \perp\\
            \pcln \pcif |\mathcal{M}|=0\\
          \pcln ~~z\sample\booln\\
          \pcln ~~L_{(3,1)}=L_{(3,1)}\cup (u,v,z)\\
          \pcln ~~\pcreturn z\\
           \pcln \pcendif\\
          \pcln M\leftarrow \mathcal{M} \\
          \pcln z= \mathcal{F}(M)\\
          \pcln L_{(3,1)}=L_{(3,1)}\cup (u,v,z)\\
          \pcln \pcreturn z\\
         }
        \pcvspace
       \procedure{ Procedure $S(j,b,u,v)$ where $j<3$}{%
             \pcln \pcif \exists (u,v,z)\in L_{(j,b)}\\  
          \pcln \pcind\pcreturn z\\
          \pcln\pcelse\\
          \pcln \pcind z\sample\booln\\
          \pcln \pcind \L_{(j,b)}=L_{(j,b)}\cup (u,v,z)\\
          \pcln \pcind \pcreturn z\\
          \pcln \pcendif
        }
      \end{pcvstack}
      \pchspace
      \begin{pcvstack}
         \procedure{Procedure \algo{FindM}$(u,v)$}{%
          \pccomment{Recovering message from $u$ part}\\
          \pcln \mathcal{M}_1=\emptyset\\
          \pcln \pcfor \mbox{ each } (u',v',h')\in L_{(2,1)}\\
          \pcln \pcind m_9=h'\xor u\xor v'\\
          \pcln \pcendfor\\
          \pcln \pcif \exists (m_1,m_2)\mbox{ such that } (m_1,m_2,u'\xor m_9) \in L_{(1,1)}\\
          \pcind[2] \wedge \exists (m_3,m_4)\mbox{ such that } (m_3,m_4,v'\xor m_9) \in L_{(1,2)}\\
          \pcln \pcind[2] \mathcal{M}_1 = \mathcal{M}_1\cup (m_1,m_2,m_3,m_4,m_9)\\
          \pcln\pcendif\\
          %%%%%
           \pccomment{Recovering message from $v$ part}\\
          \pcln \mathcal{M}_2=\emptyset\\
          \pcln \pcfor \mbox{ each } (u',v',h')\in L_{2,2}\\
          \pcln \pcind m_{10}=h'\xor v \xor v'\\
          \pcln \pcendfor\\
          \pcln \pcif \exists (m_5,m_6)\mbox{ such that } (m_5,m_6,u'\xor m_{10}) \in L_{(1,3)}\\
          \pcind[2] \wedge \exists (m_7,m_8)\mbox{ such that } (m_7,m_8,v'\xor m_{10}) \in L_{(1,4)}\\
          \pcln \pcind[2] \mathcal{M}_2 = \mathcal{M}_2\cup (m_5,m_6,m_7,m_8,m_{10})\\
          \pcln\pcendif\\
           \pccomment{Combining the messages}\\
          \pcln \pcfor \mbox{ each} (m_1,m_2,m_3,m_4,m_9) \leftarrow \mathcal{M}_1\\
          \pcind[2]\wedge  \mbox{ each} (m_5,m_6,m_7,m_8,m_{10}) \leftarrow \mathcal{M}_2\\
          \pcln \mathcal{M}=\mathcal{M}\cup (m_1,m_2,\cdots,m_{10})\\
          \pcln \pcendif\\
           \pcln \pcreturn \mathcal{M}\\
        }
        %
        % \pcvspace
        % \procedure{ Procedure $S(i,u,v)$ where $i\in \{5,6\}$}{%
        %   \pcln \pcif \exists (u,v,z)\in L_i\\  
        %   \pcln ~~\pcreturn z\\
        %   \pcln\pcelse\\
        %   \pcln ~~z\sample\booln\\
        %   \pcln ~~L_i=L_i\cup (u,v,z)\\
        %   \pcln  M_0=M_1=\emptyset\\
        %   \pcln j=\mbox{Child}(i)\\
        %   \pcln \mbox{for each} (x_1,x_2,h)\in L_j\\
        %   \pcln ~~M_0=M_0\cup \{h\xor u\}\\
        %   \pcln \mbox{for each} (x_3,x_4,h')\in L_{j+1}\\
        %   \pcln ~~M_1=M_1\cup \{h'\xor v\}\\
        %   \pcln M=M_0\cap M_1\\
        %   \pcln \pcfor \mbox{each } m\in M\\
        %   \pcln \pcind \mbox{Add a row} (u,v,z, z\xor m,m) \mbox{ in }T_i\\
        %   \pcln \pcendfor\\
        %   \pcln ~~\pcreturn z\\
        %   \pcln \pcendif
        % }
      \end{pcvstack}
        \end{pchstack}}}      
  \caption{Description of the simulator}
  \label{fig:simulator}
\end{figure}  
\noindent The description of the simulator is given in Fig. \ref{fig:simulator}. The message reconstruction algorithm finds a candidate $m_9$ (and resp. $m_{10}$) for each entry in $L_{(2,1)}$ (and resp. $L_{(2,2)}$), and checks the validity against every entry of $L_{(1,1)}$ along with $L_{(1,2)}$ (resp. $L_{(1,3)}$ along with $L_{(1,4)}$). Thus the time complexity of message reconstruction algorithm is $\mathcal{O}(q^2)$. As the simulator invokes the message reconstruction algorithm at most once for each query, we bound $t_s=\mathcal{O}(q^2)$. Similarly, we find $q_s=1$ as the simulator has to query $\mathcal{F}$ only once per \emph{invocation}. 

\subsubsection{The bad events}
\label{sec:game-transitions}
We shall prove the theorem using the H-coefficient technique. We consider the following Bad events.

\noindent \textbf{$\mbox{\sc Bad}0$:} The set $\mathcal{M}$, returned by the message reconstruction algorithm has cardinality more that one. This implies, one can extract two message $M_1,M_2$ from the transcript such that the computation of $\cmt^+(M_1)$ and $\cmt^+(M_2)$ makes the same query to  $f_{(3,1)}$.

\noindent \textbf{$\mbox{\sc Bad}1$}: There exists an $i$, such that for the $i^{th}$ entry in the transcript $h_i=f_{(j,b)}(x_i,y_i)$ with $j<3$, there exists a message $M$ such that $C^f(M)$ can be computed from the first $i$ entries of the transcript, but can not be computed from the first $i-1$ entries. This in particular implies that there exists a $i'$ with $i'<i$, such that:
  \begin{itemize}
  \item $i'^{th}$ query is a query to $f_{(3,1)}$. $h=f_{(3,1)}(u_{i'},v_{i'})$
  \item By setting $h_i=f_{(j,b)}(x_i,y_i)$ with $\ell>0$, we create a message $M$ such that all the other chaining values of $C^f(M)$ are present in the first $i-1$ queries with $f_{(3,1)}(u_{i'},v_{i'})$ as the final query. 
  \end{itemize}

% The theorem is proved via a sequence of games. 
% \input{games.tex}

\begin{lemma}
  \label{lemma:badindiff}
  For adversary $\adv$ making $q$ many queries,  $$\Prob{\mbox{\sc Bad}} \leq \mathcal{O}\left( \frac{n^2q^2}{2^n}\right).$$    
\end{lemma}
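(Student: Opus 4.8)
The plan is to bound the two constituent events separately, $\Prob{\mbox{\sc Bad}} \le \Prob{\mbox{\sc Bad}0} + \Prob{\mbox{\sc Bad}1}$, and show each is $\mathcal{O}(n^2q^2/2^n)$. For $\mbox{\sc Bad}0$ I would argue it is nothing but a collision inside one of the two height-$2$ $\cmt$ subtrees: if $\algo{FindM}(u,v)$ returns two distinct messages $M_1 \ne M_2$, then both feed the same pair $(u,v)$ into $f_{(3,1)}$, so their left subtree roots coincide and their right subtree roots coincide; since $M_1 \ne M_2$ the two messages differ in the left or the right half, and the corresponding subtree -- which is exactly a height-$2$ $\cmt$ instance built from the perfectly simulated level-$1$/level-$2$ random functions -- then exhibits a collision. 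Applying Proposition~\ref{prop:main} (equivalently Theorem~\ref{thm:mainbig} at $\ell=2$) to each subtree gives $\Prob{\mbox{\sc Bad}0} = \mathcal{O}(n^2q^2/2^n)$.

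For $\mbox{\sc Bad}1$ the idea is to charge the event to the single fresh output of the completing query. I would first condition on the complement of the collision-resistance bad event $\mathsf{Bad}$ from the proof of Theorem~\ref{thm:mainbig} (applied to the two subtrees); by Lemma~\ref{lemma:badsplit} this guarantees $|\Gamma_{(j,b)}| = L_{(j,b)} \le 2k^{\ell}q = \mathcal{O}(nq)$ and per-query yield $\le k^{\ell} = n+1$ throughout, and it costs only $\mathcal{O}(n^2q^2/2^n)$. I would then split $\mbox{\sc Bad}1$ by the position of the completing query and observe that completing a previously incomputable message amounts to freshly creating a level-$2$ chaining value (a left or right subtree root) equal to the matching input of an already-present root query $(u_{i'},v_{i'})$, with the sibling root already computable. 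When the completing query sits at level $2$, its fresh output $h'$ makes the new root value $u_0 = h' \oplus y_{1,2}$ uniform; ranging over the $\le n+1$ valid feedforwards and the $\le q$ earlier root inputs gives probability $\le (n+1)q/2^n$ per query, hence $\mathcal{O}(nq^2/2^n)$ after summing.

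The main obstacle is the case where the completing query is a leaf, so the new subtree root $u_0 = h' \oplus y_{1,2}$ is assembled from outputs sampled in the past while the fresh leaf output only activates the feedforward match. A naive union bound over (level-$2$ query, earlier root) pairs would cost $q^2$ per completing query and a fatal $q^3$ overall; worse, because the adversary may choose the root input $u_{i'}$ after seeing the relevant outputs, it can force the equation $h' \oplus y_{1,2} = u_{i'}$. The resolution I would use is twofold. First, the matching condition forces the activated feedforward to be $y_{1,2} = h_i \oplus x' \oplus y'$, so $u_0 = h' \oplus x' \oplus y' \oplus h_i$ is affine in the fresh leaf output $h_i$ and therefore uniform; thus each fixed (level-$2$ query, earlier root) pair imposes a single $1/2^n$ constraint on $h_i$ and can contribute only when $u_{i'} \oplus h' \in \Gamma_{(1,2)}$. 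Second, to count the contributing pairs despite adaptive forcing, I would order the involved queries and isolate the only sub-case where forcing occurs, namely when the root query is made last: there the number of forced pairs equals the XOR-multiplicity $\max_w |\{(y_{1,2},h') : y_{1,2} \oplus h' = w\}|$, which is exactly the quantity controlled by the counting of Lemma~\ref{lemma:many4xor} and is $\le n$ except with probability $\mathcal{O}(q^{2n}/2^{n^2})$; the remaining non-forced pairs each carry an independent $1/2^n$ factor and contribute only $\mathcal{O}(q^4/2^{2n}) \le \mathcal{O}(q^2/2^n)$ under $q^2 < 2^n$.

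Assembling the pieces, each of the constantly many completing-query positions contributes $\mathcal{O}(nq^2/2^n)$, so $\Prob{\mbox{\sc Bad}1} = \mathcal{O}(nq^2/2^n)$; combined with $\Prob{\mbox{\sc Bad}0} = \mathcal{O}(n^2q^2/2^n)$ and the $\mathcal{O}(n^2q^2/2^n)$ cost of conditioning on $\neg\mathsf{Bad}$, this yields the claimed $\Prob{\mbox{\sc Bad}} = \mathcal{O}(n^2q^2/2^n)$. The delicate step, and the one I expect to require the most care, is the adaptive leaf-completion case: one must simultaneously exploit the uniformity of the freshly created chaining value and the multiplicity bound to eliminate the spurious extra factor of $q$.
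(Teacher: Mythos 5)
Your proposal is correct, and its skeleton coincides with the paper's proof: the same split $\Prob{\mbox{\sc Bad}}\leq\Prob{\mbox{\sc Bad}0}+\Prob{\mbox{\sc Bad}1}$, the same reduction of $\mbox{\sc Bad}0$ to a collision in one of the two perfectly-simulated $\cmt$ subtrees via Proposition~\ref{prop:main}, and the same split of $\mbox{\sc Bad}1$ according to whether the completing query is at level $2$ or at a leaf, with the level-$2$ case charged to the fresh output hitting one of at most $q$ prior root inputs across at most $n+1$ valid feedforwards. Where you genuinely diverge is the leaf-level sub-case. The paper reuses the type-II load machinery of Lemma~\ref{lemma:bad2}: it bounds the \emph{expected} number of chaining values newly activated by leaf queries by roughly $nq^3/2^n$, applies Markov to conclude that at most $q$ such values are created except with probability $nq^2/2^n$, and then charges each created value a $q/2^n$ matching probability against the root queries. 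You instead make the adaptivity explicit: you observe that the activated chaining value is affine in the fresh leaf output (hence uniform), order the four relevant queries, and isolate the forcing sub-case (root query made after both constituent outputs), which you control with the XOR-multiplicity bound of Lemma~\ref{lemma:many4xor}; non-forced pairs carry two independent $2^{-n}$ factors. Both routes land on $\mathcal{O}(nq^2/2^n)$ for this sub-case and $\mathcal{O}(n^2q^2/2^n)$ overall. Your version buys rigor precisely where the paper is tersest: the paper's final matching step implicitly treats the match between an activated chaining value and a prior root query as a fresh $2^{-n}$ event, which is not literally true when the adversary chooses the root input after seeing the constituent outputs; your ordering-plus-multiplicity argument closes exactly that gap, at the cost of a longer case analysis. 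The paper's route, in exchange, recycles an already-proven expectation bound and is shorter to state.
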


% <<<<<<< HEAD
% The proof of Lemma \ref{lemma:badindiff} is obtained similar way as in the collision resistance. For space constraints, we refer the reader to Section \ref{sec:bound-bad} for proof. 
  
% =======

  \subsubsection{Bounding $\Prob{\mbox{\sc Bad}}$}
\label{sec:bound-bad}

 We bound the probabilities of the {\sc Bad} events.
 \begin{itemize}
 \item \textbf{Case $\mbox{\sc Bad}0$:} If there is a collision in the final query of the computations for two different messages, then there is a collision in the $u$ part or $v$ part of the chain. This implies a collision in one of the $\cmt$ mode output. Hence, by Proposition \ref{prop:main}
    \begin{align*}
      \Pr[\mbox{\sc Bad}0] \leq  \mathcal{O}\left(\frac{n^2q^2}{2^n}\right)
    \end{align*}
 \item \textbf{Case $\mbox{\sc Bad}1$:} We first consider a query $f_{(j,b)}(u,v)$ with $j=2$. Let $Y_{(u,v,j,b)}$ denote the yield of this query (recall that yield of a query denotes the number of new chaining values a query creates, see page 17). As there can be at most $q$ many queries to $f_{(3,1)}$ done before this, probability that such a query raises the $\mbox{\sc Bad}1$ is bounded by $\frac{Y_{(u,v,j,b)}q}{2^n}$. Taking union bound over all the queries at $f_{(j,b)}$, the probability gets upper bounded by  $\frac{q \sum Y_{(u,v,j,b)}}{2^n}$. As we showed in the previous section this probability can be bounded by $\mathcal{O}\left(\frac{n^2q^2}{2^n}\right)$.  Finally, we consider the case of $\mbox{\sc Bad}1$ raised by some queries at the leaf level. As in the proof of collision resistance, the expected number of new chaining values created at the output by the leaf level queries is $\frac{nq^3}{2^n}$. Hence, by Markov inequality, the probability that the total number of new chaining values created is more that $q$ is at most $\frac{nq^2}{2^n}$. Finally, conditioned on the number of new chaining values be at most $q$, the probability that it matches with one of the $f_{(3,1)}$ queries is at most $\frac{q^2}{2^n}$. Hence, we get   
    \begin{equation*}
        \Pr[\mbox{\sc Bad}1] \leq \mathcal{O}\left(\frac{nq^2}{2^n}\right) 
      \end{equation*}
  \end{itemize}

\subsubsection{Good transcripts are identically distributed}
\label{sec:good-transcripts-are}

We show that the good views are identically distributed in the real and ideal worlds. Note that the simulator perfectly simulates $f$ for the internal node. The  only difference is the simulation of the final query. In case of good views, the queries to $f_0$ are of two types:
\begin{enumerate}
\item The query corresponds to the final query of a distinct message $M$, such that all the internal queries of $C^f(M)$ have occurred before. In this case, the simulator response is $\mathcal{F}(M)$. Conditioned on the rest of the transcript the output distribution remains same in  both the worlds.
 \item There is no message  $M$ in the transcript so far for which this is the final query. In this case, the response of the simulator is a uniformly chosen sample. As $\mbox{\sc Bad}1$ does not occur, the property remains true. In that case as well, the output remains same, conditioned on the rest of the transcript. 
\end{enumerate}
Hence, for all $\tau\in\Theta_{good}$
\begin{align*}
  \Prob{X_{\mbox{\tt real}}=\tau}=\Prob{X_{\mbox{\tt ideal}}=\tau}
\end{align*}
This finishes the proof of Theorem~\ref{thm:indiffmt1}.

%\subsection*{Compactness of $\cmt^+$ }
\begin{corollary}
The compactness of $\cmt^+$ making $r$ calls to underlying $2n$-to-$n$-bit function is   $1-\frac{2}{3r-1}$. 
\end{corollary}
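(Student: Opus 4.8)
The plan is to reduce the claim to the compactness formula of Definition~\ref{def:compact} together with a count of how many $n$-bit message blocks the $\cmt^{+}$ mode absorbs. Since the underlying primitives are $2n$-to-$n$-bit we have $n+c=2n$, i.e.\ $c=n$, and throughout we are in the case $s=n$. Substituting into the specialized expression $\alpha=\frac{2m}{2cr+nr-n}$ gives
\begin{align*}
  \alpha=\frac{2m}{3nr-n}.
\end{align*}
Hence the entire task is to express the message-length parameter $m$ of $\cmt^{+}$ as a function of $r$. The collision bound demanded by Definition~\ref{def:compact} (of the form $\mathcal{O}(s^{c_1}r^{c_2}q^2/2^s)$) is already available: Theorem~\ref{thm:indiffmt1} proves indifferentiability up to $\mathcal{O}(n^2q^2/2^n)$, and indifferentiability from a random oracle implies collision resistance of the same order, which matches the required form with $s=n$, $c_1=2$, $c_2=0$.

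First I would count the message blocks. The $\cmt^{+}$ mode is the merge of two $\cmt$ subtrees, of heights $\ell_1$ and $\ell_2$, whose roots feed a single final $2n$-to-$n$-bit call that injects \emph{no} message block (and uses no feedforward). Writing $A=2^{\ell_1}$ and $B=2^{\ell_2}$, each height-$\ell_i$ subtree makes $2^{\ell_i}-1$ calls and, as computed in Section~\ref{sec:coll-resist-gener}, absorbs $2^{\ell_i}+2^{\ell_i-1}-1$ message blocks. Adding the single root call gives
\begin{align*}
  r=(A-1)+(B-1)+1=A+B-1,
\end{align*}
so $A+B=r+1$, while the total number of absorbed blocks is
\begin{align*}
  \left(A+\tfrac{A}{2}-1\right)+\left(B+\tfrac{B}{2}-1\right)=\tfrac{3}{2}(A+B)-2=\tfrac{3}{2}(r+1)-2=\frac{3r-1}{2}.
\end{align*}
The crucial observation is that this count depends on $\ell_1,\ell_2$ only through $A+B=r+1$, so it is identical for balanced and unbalanced trees; this is the one step that needs a little care. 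Since $s=n$ and each block is $n$ bits, $m+n=\frac{3r-1}{2}\,n$, whence $m=\frac{3n(r-1)}{2}$. As a consistency check, this is exactly one block fewer than the $\frac{3r+1}{2}$ blocks absorbed by the $\cmt$ mode with the same $r$, matching the stated fact that $\cmt^{+}$ compresses one fewer message block.

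Finally I would substitute and simplify. Plugging $m=\frac{3n(r-1)}{2}$ into $\alpha=\frac{2m}{3nr-n}$ yields
\begin{align*}
  \alpha=\frac{3n(r-1)}{n(3r-1)}=\frac{3(r-1)}{3r-1}=1-\frac{2}{3r-1},
\end{align*}
which is the claimed value; the balanced height-$3$ instance of Fig.~\ref{fig:indifftree} ($r=7$, $m=9n$) gives $\alpha=\tfrac{9}{10}$ as a sanity check. I expect the only genuine obstacle to be the block count, specifically verifying that the $\frac{3r-1}{2}$ figure is independent of how $r$ is split between the two branches; once that is pinned down, the corollary is a one-line computation.
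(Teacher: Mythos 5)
Your proposal is correct and follows essentially the route the paper intends: the paper states this corollary without an explicit proof, but its derivation is exactly the block-counting argument you give — plug $c=s=n$ into Definition~\ref{def:compact}, count $r=A+B-1$ calls and $\frac{3}{2}(A+B)-2=\frac{3r-1}{2}$ absorbed blocks for the two merged $\cmt$ subtrees plus the message-free root call, and simplify to $1-\frac{2}{3r-1}$. Your additional observations — that the count is independent of how $r$ splits between the two branches, and that the collision-resistance clause of the compactness definition is discharged via Theorem~\ref{thm:indiffmt1} — are exactly the details the paper leaves implicit, and they check out.
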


\section{Efficiency and Applications}
\label{sec:comp-and-appl}
 In this section, we discuss the compactness of our proposed designs, possible applications and use cases.

\subsection{Efficiency and Proof Size}
Below we discuss and compare our designs with the Merkle tree regarding efficiency of compression and authentication and proof size:  the number of openings to prove a membership of a node in a tree. 

\noindent\textsc{Efficiency of compression and authentication.}
To measure efficiency of compression we consider the amount of message (in bits) processed for a fixed tree height or a fixed number of compression function calls. As mentioned earlier, compared to a Merkle tree of height $\ell$ which  absorbs $n2^{\ell}$ message bits, the \cmt\ or $\cmt^{+}$ modes process an additional $n(2^{\ell-1}-1)$ message bits. Thus, asymptotically the number of messages inserted in our $\cmt$ (or $\cmt^{+}$) mode increases by $50\%$ compared to Merkle tree. Additionally, the cost of authentication (number of compression function calls to authenticate a node) in a Merkle tree is $\log N$ where $N = 2^{\ell}$. Here as well the \cmt\ or $\cmt^{+}$ modes compress $50\%$ more message bits compared to Merkle tree keeping the same cost of authentication as in Merkle tree as shown in~\cref{lem:proofsize}. 

% We start with a discussion on the efficiency of the proof of membership when that is a component of the usability of the domain extenders.

\noindent \textsc{Proof Size.} We refer to the tree chaining and internal message nodes as the tree \textit{openings}. The proof size in a tree is determined by the number of openings. In a Merkle tree, the proof of membership of \textit{all} (leaf) inputs  requires $\log N$ compression function evaluations and openings each. More precisely, to prove
the \textit{membership of an arbitrary leaf input}, $\log N-1$ chaining values and one leaf input are required. 
Note that while counting the number of openings, we exclude the input for which the membership is being proved. 

\begin{lemma}
	\label{lem:proofsize}
In $\cmt$ mode, to prove the membership of any node (message block): leaf or internal, we require  $2\log N-1$ ($n$-bit) openings and $\log N$ compression function computations. 
\end{lemma}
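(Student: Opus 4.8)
The plan is to exhibit, for an arbitrary message block, the unique \emph{authentication path} from the node carrying that block up to the root, and to count the data (openings) and the number of $f$-evaluations needed to recompute the root value $y_{\ell,1}$ from that block; membership is then certified by comparing the recomputed root against the public root. I would first record the one structural fact that governs the whole count: by the $\cmt$ evaluation rule $y_{j,i}=f_{j,i}(m\oplus y_{j-1,2i-1},\,m\oplus y_{j-1,2i})\oplus y_{j-1,2i}$, the chaining value at any internal node $(j,i)$ is determined by exactly three quantities, namely its two children's chaining values $y_{j-1,2i-1},y_{j-1,2i}$ and the injected block $m$, and crucially the fed-forward term $y_{j-1,2i}$ is \emph{one of the two children}, so the feedforward costs no extra opening. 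Since $N=2^{\ell}$ gives $\log N=\ell$, the leaves sit at level $1$ and the root at level $\ell$.

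First I would treat the worst case, a \textbf{leaf block}, say an input to $f_{(1,b)}$. Recomputing $y_{1,b}$ needs only the sibling leaf block (one opening) and one $f$-call. Thereafter, at each level $j=2,\dots,\ell$ the on-path chaining value is obtained from the child already computed, the off-path sibling chaining value, and the block injected at that node; this is two openings and one $f$-call per level. Summing gives $1+2(\ell-1)=2\ell-1$ openings and $1+(\ell-1)=\ell$ compression calls, i.e. exactly $2\log N-1$ and $\log N$, which is the regime where the opened node lies deepest in the tree.

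Next I would verify that an \textbf{internal block}, injected at some node $(j_0,b_0)$ with $j_0\ge 2$, never exceeds this budget. Recomputing $y_{j_0,b_0}$ now requires both children's chaining values (two openings, one $f$-call, and no message opening, since that block is the item being authenticated), after which the climb to the root again costs two openings and one $f$-call at each of the remaining $\ell-j_0$ levels. This totals $2+2(\ell-j_0)\le 2\ell-2$ openings and $\ell-j_0+1\le\ell-1$ compression calls, strictly below the leaf figure. Hence the maximum over all nodes is attained at the leaves and equals $2\log N-1$ openings and $\log N$ computations, establishing the stated bound for every node.

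The only genuinely delicate point, and the one I would watch most closely, is the bookkeeping at the opened node together with the feedforward: one must check that reusing a child as the fed-forward value makes each level \emph{above} the opened node contribute exactly two openings (one sibling value, one injected block) rather than three, and that at an internal opened node the ``two children, no message'' accounting is precisely what keeps its cost below a leaf's despite the extra child opening. Getting this alignment right is what pins the count to $2\log N-1$ and singles out the leaf as the extremal case; the remainder is a routine level-by-level summation.
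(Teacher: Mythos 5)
Your count for a \textbf{leaf} block is correct and coincides with the paper's: one sibling leaf opening plus two openings (sibling chaining value and injected block) at each of the $\ell-1$ levels above, giving $2\log N-1$ openings and $\log N$ calls. The genuine gap is in your treatment of an \textbf{internal} block at node $(j_0,b_0)$: you start the authentication path at that node itself, letting the prover supply both children chaining values $y'_{\mathrm{left}},y'_{\mathrm{right}}$ as openings. That scheme is not a sound membership proof, because the injected block enters the computation only through XORs that the prover can cancel by choosing the children \emph{after} seeing the compression output. Concretely, let $y$ be the honest chaining value at $(j_0,b_0)$. A cheating prover picks any pair $(u',v')$, queries $w=f_{(j_0,b_0)}(u',v')$, and sets $y'_{\mathrm{right}}=w\oplus y$, $m'=v'\oplus y'_{\mathrm{right}}$, $y'_{\mathrm{left}}=u'\oplus m'$. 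Then
\begin{align*}
f_{(j_0,b_0)}\left(m'\oplus y'_{\mathrm{left}},\,m'\oplus y'_{\mathrm{right}}\right)\oplus y'_{\mathrm{right}} \;=\; w\oplus w\oplus y \;=\; y ,
\end{align*}
so the forged opening $(m',y'_{\mathrm{left}},y'_{\mathrm{right}})$, combined with the honest openings above $(j_0,b_0)$, verifies against the same root, while $m'=v'\oplus w\oplus y$ differs from the committed block except with probability $2^{-n}$. One query suffices, so position binding is completely lost. (At a leaf this cancellation is impossible: there is no feedforward and the opened block is a raw input to $f$, which is why your leaf case is fine.)

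This malleability is exactly why the paper routes the authentication path of an internal block through a leaf: the path runs from the leaf level up through $(j_0,b_0)$ to the root, so the children of $(j_0,b_0)$ are themselves computed from openings rather than supplied for free. The count is then two openings at the leaf level, two per level strictly between the leaves and level $j_0$, one at level $j_0$ (the sibling child; the block itself is the item being proved), and two per level above, i.e. $2+2(j_0-2)+1+2(\ell-j_0)=2\log N-1$ openings and $\log N$ calls---the same cost as a leaf, which is what the lemma asserts (and what the paper's phrase ``$2(\log N-1)$ openings excluding the node's own level, plus one more from that level'' encodes). So your conclusion that internal blocks are strictly cheaper, with at most $2\log N-2$ openings and $\log N-1$ calls and the leaf as the extremal case, contradicts both the lemma's uniform count and the paper's proof; the discrepancy is not a harmless accounting choice, since the cheaper scheme you are counting is insecure.
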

\begin{proof}
  To prove the membership of a leaf input in the $\cmt$ mode $2(\log N -1)$ openings are required together with one leaf input. This makes a total  of $2 \log N -1$ openings. To obtain the root hash  $\log N$ computation must be computed. To prove the membership of an internal node, we need $2(\log N - 1)$ openings, excluding any openings from the level at which the  internal node resides. Additionally, one more opening is required from the level of the node. Thus, in total we need again $2\log N - 1$ openings. The number of compression calls remains $\log N$.   
\end{proof}
% An example of the openings required to prove the membership of a non-leaf node in \cmt\ is shown in \cref{fig:proof} (%in \cref{sec:proof-fig} of 
% in our extended version \cite{fullversion}). 
Compared to Merkle tree, in $\cmt^+$ the proof size increases by $\log N -1$.
Admittedly, for Merkle tree applications where the proof size is the imperative performance factor, the $\cmt^+$ modes do not provide an advantage. 

\subsection{Applications and Variants}
\noindent \textsc{\textbf{ZK-SNARKs.}} We briefly point out here the potential advantages of using the \cmt\ mode in zk-SNARKS based applications, such as Zcash. 
In a zk-SNARK \cite{EC:GGPR13} based application, increasing the number of inputs or transactions in a block means that we need to increase the size of the corresponding Merkle tree. The complexity of the proof generation process in zk-SNARK is $C\log C$ where $C$ is the circuit size of the underlying function. In $\cmt^{+}$ modes the additional messages are inserted without increasing the tree height or introducing additional compression function calls. Since the messages are only injected with xor/addition operation, this does not deteriorate the complexity of the proof generation.
Zcash uses a  Merkle tree with height $\approx 29$ and $2^{34}$ byte inputs. By using either one, \cmt\ or $\cmt^{+}$ modes, an additional of
$\approx 2^{33}$ byte inputs can be compressed {\em without} making any extra calls to the underlying compression function. Asymptotically, \cmt\ or $\cmt^{+}$
provides $50\%$ improvement in the number of maintained (in the tree structure) messages compared to a Merkle tree.\\
\noindent \textsc{Further Applications.} Our modes can be useful in applications, such as hashing on parallel processors or multicore machines: authenticating software updates, image files or videos; integrity checks of large files systems, long term archiving~\cite{archv}, content distribution, torrent systems \cite{bittorrent}, etc.

% Typically cryptocurrency applications like zcash, bitcoin uses a Merkle tree of height $3$ or $4$. In zcash the height is reduced to $3$ from $4$ recently to increase the efficiency. This naturally reduces the maximum number of possible inputs to the tree. Using a \cmt tree of height $3$ it is possible to increase the
% number inputs to $11$ from $8$.

%\begin{figure}[ht]
%	\centering
%	\scalebox{0.6}{\input{figs/treeproof.tikz}}
%	\caption{Internal values necessary to prove the membership of $m_{11}$}
%\end{figure}
\noindent \textsc{Variants.} We continue with possible variants of utilizing the  $\cmt$ compression function in existing constructions, such as the Merkle--Damg{\aa}rd domain extender and a $5$-ary Merkle tree, and discuss their compactness and efficiency.

\noindent \textbf{Merkle--Damg{\aa}rd (MD) domain extender with $\cmt$}. When the compression function in MD is substituted by $\cmt$ ($\ell=2$) compression function, the collision resistance preservation of the original domain extender is maintained.  We obtain compactness of  $\approx 8/9$ of such an MD variant (see Section~\ref{sec:examples-compact}).

For all our modes, the high compactness allows us to absorb more messages at a fixed cost or viewed otherwise, to compress the same amount of data (e.g. as MD or Merkle tree) much cheaper. We elaborate on the latter trade-off here. To compress 1MB message with classical MD that produces a $256$-bit hash value and uses a $512$-to-$256$-bit compression function, around $31250$ calls to the underlying ($512$-to-$256$-bit) compression function are made. In contrast, $\cmt$ in MD requires just $\approx 7812$ calls to the ($512$-to-$256$) compression function, that is an impressive 4-fold cost reduction.

\noindent \textbf{ $5$-ary Merkle tree with $\cmt$}. One can naturally further construct a $5$-ary Merkle tree using $\cmt$ with compactness $< 8/9$ (see Section~\ref{sec:examples-compact}). That means to compress 1MB data with a $5$-ary $\cmt$ mode with $5n$-to-$n$-bit ($n = 256$) compression functions will require $\approx 23437$ calls to the  $512$-to-$256$-bit compression functions. Using the Merkle tree the number is $31250$ compression function calls. On the other hand, the \cmt\ and $\cmt^{+}$ modes require \textit{only} $\approx 20832$ calls.\\     
%Give example with numbers in both examples let's say with data of 2MB compressed by one structure and by the other.

%\noindent  \textbf{$2$-ary Merkle tree with $\abr$ at the top.} Classical Merkle tree with the $\cmt$ compression function at the top level achieves much better compactness than regular Merkle tree.

%\input{variants.tex}
%\section{Attacks on Simpler Constructions}
%\label{sec:some-natur-relax}
% We have also considered simpler versions of \cmt\ (e.g. when the feed forward from $f_2$ is omitted) to show how they fail to achieve collision security (in the extended version \cite{fullversion}). 
%\cref{sec:weak-variants}.  

%\input{nonfeedfwd.tex}
%\input{samecompress.tex}

\section{Discussion and Conclusions}
\label{sec:discuss}

The $\cmt$  mode is the first collision secure, large domain, hash function that matches Stam's bound for its parameters. The $\cmt+$ is also close to optimally efficient and  achieves the stronger indifferentiablity notion, both completed in the ideal model. 
Based on our security results we can conclude that the $\cmt^{+}$ mode is indeed the stronger proposal that achieves all the `good' function properties up to the birthday bound. Driven by practical considerations for suitable replacements of Merkle tree, the $\cmt$ mode appears to be the more natural choice. This is motivated by the fact that the majority of Merkle tree uses are indeed FIL, namely they work for messages of fixed length.% and are not adapted for incremental processing. 

Indeed, for such FIL Merkle trees collision preservation in the standard model holds but it fails once message length variability is allowed (for that one needs to add MD strengthening and extra compression function call).
The $\cmt$ mode is proven collision secure in the  ideal model. Our result confirms the structural soundness of our domain extenders in the same fashion as the Sponge domain extender does it for the SHA-3 hash function. 

We clarify that simple modification of  $\cmt$ lead to the same security results. This variant is obtained when one uses for feed-forward the left chaining value
(instead of the right as in the $\cmt$ mode).  
The collision security proofs for this variant follows exactly the same arguments and are identical up to replacement for the mentioned values. Similarly,
an extended tree version of this constructions can be shown collision or indifferentiability secure when it is generalized in the same fashion as the $\cmt^{+}$ mode. 

An interesting practical problem is to find and benchmark concrete mode instantiations. From a theory perspective, finding compact double length constructions is an
interesting research direction.

\noindent
\textbf{Acknowledgements} We thank Martijn Stam for reading an earlier version of the draft and providing valuable comments. We would also like to thank Markulf Kohlweiss for discussion (during Arnab's visit to the University of Edinburgh in 2019) on the zksnarks and other applications of this work. We sincerely thank the reviewers of Eurocrypt 2021, Asiacrypt 2020 for their insightful comments. We are grateful to the reviewers of Crypto 2020 for their suggestions to extend our previous work for generalized tree-like hash.

\bibliographystyle{plain}
\bibliography{abbrev3,crypto,other}

% \appendix
% \title{Supplementary Material}
% \maketitle

% \input{apndx.tex}

\end{document}